\documentclass[atmp]{ipart_v1}

\firstpage{1}

\usepackage{t1enc}
\usepackage[latin1]{inputenc}
\usepackage[english]{babel}

\usepackage{amsthm}
\usepackage{yfonts}

\usepackage{bbm}
\usepackage{bm}
\usepackage{mathrsfs}
\usepackage{booktabs}
\usepackage{float}

\newcommand{\be}[0]{\begin{equation}}
\newcommand{\ee}[0]{\end{equation}}

\numberwithin{equation}{section}

\theoremstyle{plain}
\newtheorem{Theorem}{Theorem}[section]

\newtheorem{Proposition}{Proposition}[section]
\newtheorem{Definition}{Definition}[section]
\newtheorem{Corollary}{Corollary}[section]

\begin{document}

\title[The Zak phase in topologically insulating chains]{The Zak phase in topologically insulating chains: invariants and limitations}

\author[Federico Manzoni, Domenico Monaco, Gabriele Peluso]{Federico Manzoni, Domenico Monaco, Gabriele Peluso}

\begin{abstract}
In this work we investigate the topological content of the Zak phase in one-dimensional translation-invariant topological insulators endowed with time-reversal, particle-hole and/or chiral symmetries, extending results from \cite{Monaco_2023}. We analyze the extent to which the Zak phase captures the topology of all Altland--Zirnbauer--Cartan (AZC) symmetry classes in $1$D. Building on the framework of fibered Hamiltonians and spectral projections, we construct symmetric Bloch bases adapted to the discrete symmetries of the system and define a $\mathbb{Z}_2$-valued topological invariant  $\mathrm{I}^{(\mathrm{AZC-class})}(H)$ obtained from the abelian Zak phase. Moreover, we demonstrate that in symmetry classes admitting a quaternionic structure, i.e. anti-unitary symmetries squaring to minus the identity, the Zak phase is further constrained, leading to the vanishing of the $\mathbb{Z}_2$ invariant mentioned above. This highlights the sensitivity of the Zak phase to additional geometric structures of the manifold of occupied energy states, as well as its limitations in being an effective marker for topological phases of insulating chains. As an example, we discuss the case of generalized Kitaev chains with arbitrary finite-range hopping and single or multiple chiral channels, and show how the Zak phase only retains partial information about their different topological phases.
\end{abstract}

\maketitle

\section{Introduction}
In recent decades, a new paradigm in condensed matter physics has emerged with the discovery of topological phases of matter~\cite{chiu2016classification}. Unlike conventional phases characterized by local order parameters and spontaneous symmetry breaking, topological phases are defined by global, topological invariants that remain robust under continuous deformations of the system. These phases exhibit quantized physical responses and host exotic surface states protected by topological features of the bulk system. The study of topological matter has deepened the connection between quantum physics and abstract mathematical frameworks, especially topology and $K$-theory, leading to both theoretical advances and technological implications. We briefly review this setting in Section~\ref{4.1}.

In order to apply the results from the abstract classification of topological phases of matter in concrete physical models, one needs to find appropriate topological markers, namely indices which are able to differentiate (and possibly identify uniquely) the topological class of a given microscopic state. In this paper, we address the latter question in one-dimensional (1D) translation-invariant topological insulators. One commonly adopted topological marker used to probe the properties of insulating quantum chains is the Zak phase~\cite{zak1989berry}, which for an isolated energy band computes the geometric Berry phase picked up by a Bloch function when it is transported along the 1D Brillouin torus (cf.~Definition~\ref{def:ZakPhase} below). The Zak phase underlies the computation of polarization within the modern theory of electronic structure in solids~\cite{xiao2010berry, resta1994}; it is experimentally measurable in quantum simulators made up of ultracold atoms in optical lattices \cite{atala2013direct}; its widespread use in condensed matter physics is also due to its computability in terms of Wannier functions and their centers of localization, which are easily accessible in the numerical simulation of real materials via software like \texttt{wannier90} \cite{mostofi2008wannier90, marrazzo2024wannier}. 

Given the relevance of the Zak phase for topological transport in periodic media, we systematically investigate this marker in 1D multi-band models in all the Altland--Zirnbauer--Cartan (AZC) symmetry classes, as depicted in Kitaev's ``periodic table''~\cite{Kitaev_2009, Ryu_2010} (see Section~\ref{4.1} for a review of the concepts and notation used here). The main questions we address are the following:
\begin{quote}
Can the Zak phase be used to detect and classify topological phases? If so, how complete is the topological information accessed by the Zak phase? How sensitive is the Zak phase to the presence of symmetries or other structures in the model?    
\end{quote}
We show that, in general, the Zak phase associated to negative energy bands allows to define a gauge and topological invariant $\mathrm{I}^{(\mathrm{AZC-class})}(H)$ with values in $\mathbb{Z}_2$: this extends the arguments and findings of~\cite{Monaco_2023} to all symmetry classes. In presence of a quaternionic structure induced by a anti-unitary symmetry squaring to minus the identity, we show how the invariant extracted from the Zak phase necessarily vanishes. For a comparison of the value of the proposed $\mathbb{Z}_2$-valued invariant with the prediction from Kitaev's table, which is instead based on $K$-theory, see Table \ref{tabclassinv2}.

\begin{table}[h!]
\centering
\caption{Comparison between $K$-theoretic invariants and $\mathrm{I}^{(\mathrm{AZC-class})}$ for $1$D AZC classes.}\label{tabclassinv2}
\begin{tabular}{@{}l|ccc|ccccccccc@{}}
\toprule
\textbf{Symmetry Class} & $T$ & $C$ & $S$ & $K$-theory & $\mathrm{I}^{(\mathrm{AZC-class})}$ \\
\midrule
A     & 0 & 0 & 0 & 0 & $0$ \\
AIII  & 0 & 0 & 1 & $\mathbb{Z}$ & $\mathbb{Z}_2$ \\
\midrule
AI    & 1 & 0 & 0 & 0 & 0  \\
BDI   & 1 & 1 & 1 & $\mathbb{Z}$ & $\mathbb{Z}_2$ \\
D     & 0 & 1 & 0 & $\mathbb{Z}_2$ & $\mathbb{Z}_2$ \\
DIII  & -1 & 1 & 1 & $\mathbb{Z}_2$ & 0 \\
AII   & -1 & 0 & 0 & 0 & $0$ \\
CII   & -1 & -1 & 1 & $\mathbb{Z}$ & 0 \\
C     & 0 & -1 & 0 & 0 & 0 \\
CI    & 1 & -1 & 1 & 0 & 0 \\
\bottomrule
\end{tabular}
\end{table}

Our findings show that, despite its important applications in condensed matter physics presented above, the Zak phase yields in general a limited topological marker in the broader context of 1D topological insulators. Topological phases of matter, both in presence and absence of lattice-translation invariance, are classified by more refined indices, which can be defined in all dimensions via functional analysis and Fredholm indices~\cite{grossmann2016index}, via $K$-theory~\cite{Kitaev_2009, freed2013twisted, thiang2016k}, or via homotopy theory~\cite{kennedy2015homotopy, chung2025topological, Chung_Shapiro_II, Chung_Shapiro_All, PelusoThesis, SantiThesis}.

The paper has the following structure. In Section \ref{4.1} we introduce our setup: topological insulators and possible model Hamiltonians for them, discrete symmetry operators, and the AZC classification. This discussion is complemented by the Appendices. We also study the action of the discrete symmetry operators (and their combinations) on the Hamiltonian and the explicit writing of the fiber Hamiltonians and spectral eigenprojections, obtained by exploiting lattice-translation invariance. In Section \ref{BB} we give the general definition of Bloch basis for the spectral eigenprojection of the Hamiltonian on the relevant energy bands, and identify the conditions which make such a basis compatible with the symmetries of the model. We also show how a symmetric Bloch basis can be explicitly constructed through parallel transport. These first Sections review material presented in \cite{Monaco_2023}, which focused on symmetry classes with chiral or particle-hole symmetries, and extend the results to cover the whole ``periodic table''. In Section \ref{4.3} we define the Zak phase and we extract the topological $\mathbb{Z}_2$-valued invariant I$^{\text{(AZC--class)}}(H)$. In Section \ref{quad0} we shed light on the interplay between the invariant and the presence of a quaternionic structure related to an anti-linear symmetry operator that squares to minus the identity. In Section \ref{focBDI} we make explicit examples, considering generalized Kitaev chains with arbitrary finite-range hopping and possibly multiple chiral channels, where we show how the invariant I$^{\text{(BDI)}}(H)$ computes the parity of the integer-valued invariant predicted by Kitaev's table.

\section{Topological insulators: Mathematical setup and preliminaries}\label{4.1}

The origins of topological condensed matter can be traced back to the 1980s, with the discovery of the integer and fractional quantum Hall effects~\cite{vonKlitzing1986quantized, stormer1999fractional}. The quantization of the transverse conductivity in these phenomena defied conventional band theory, and its origin was soon understood to be rooted in topology, by exhibiting its relation to topological invariants such as the first Chern number~\cite{thouless1982quantized}. In the 2000s, the theoretical prediction and subsequent experimental realization of topological insulators brought topological phases into the spotlight once again~\cite{HasanKane2010, ando2013topological}. Topological insulators are materials that behave as insulators in their bulk but possess conducting states on their boundaries, which are protected by symmetries and topological invariants. Examples of $3$D topological insulators include Bi$_2$Se$_3$, Bi$_2$Te$_3$ and Sb$_2$Te$_3$: these materials exhibit spin-momentum locked surface states that are robust against non-magnetic disorder. Applications of topological insulators are being explored in fields such as spintronics, low-power electronics and quantum computing, particularly in the engineering of fault-tolerant qubits using Majorana modes at the edge of topological quantum chains~\cite{kitaev2001}. The unique transport properties and robustness to perturbations make these materials candidates for novel technological platforms. Future prospects include the integration of topological insulators in hybrid devices, the exploration of higher-order topological phases, and the manipulation of topological features using external parameters such as pressure or electromagnetic fields \cite{Sharma2022_TSCinterfaces,Xie2021_HOTIreview,Rajaji2022_PressureTIs,OkaKitamura2019_Floquet}.

To achieve a systematic understanding and control of topological phases, a rigorous classification is essential. Mathematically, non-interacting topological insulators can be modeled using Bloch bundles~\cite{panati2007triviality}: the presence of discrete symmetries, like time reversal, particle-hole, or chiral symmetry, modify the structure of the bundle and thus influence the classification, leading to a richer theory. The classification of Bloch bundles up to isomorphism or stable isomorphism naturally incorporates the use of $K$-theory~\cite{Kitaev_2009}, which allows one to assign well-defined invariants to these phases. We describe next and in the Appendices how this geometric picture arises from the quantum-mechanical description of crystalline solids. 

\subsection{Lattice Hamiltonians, their Bloch--Floquet fibers and their spectrum}

We model topological insulators mathematically within the one-body, non-interacting approximation of quantum mechanics, via tight-binding lattice Hamiltonians. Even though we will later specialize to quantum chains (so, 1D systems), we start from some general considerations which apply in any spatial dimension $d$.

We assume that the configuration space of the quantum system is modeled on a $d$-dimensional discrete lattice \( \mathbb{Z}^d \). To account for additional internal degrees of freedom, such as spin, orbital states or sublattice structure, we introduce \( N \) internal components per lattice site. The one-particle Hilbert space is thus taken to be
\begin{equation} \label{eq:tensor}
\mathcal{H} := \ell^2(\mathbb{Z}^d) \otimes \mathbb{C}^N,
\end{equation}
where \( \ell^2(\mathbb{Z}^d) \) encodes the spatial structure of the lattice and \( \mathbb{C}^N \) captures the internal degrees of freedom at each site.

We denote by \( U_\gamma : \mathcal{H} \to \mathcal{H} \), with \( \gamma \in \mathbb{Z}^d \), the unitary translation operators defined by
\begin{equation*}
(U_\gamma (\psi))_n := \psi_{n+\gamma}, \quad \text{for } \psi = (\psi_n) \in \ell^2(\mathbb{Z}^d) \otimes \mathbb{C}^N.
\end{equation*}
The lattice Hamiltonian $H$ is then required to be translation-invariant, i.e.\ such that $[H, U_\gamma]=0$ for all $\gamma \in \mathbb{Z}^d$. 
It is a standard result that a bounded self-adjoint operator \( H \) on~\( \mathcal{H} \) which is translation-invariant must act via convolution in the lattice variable~\cite{teschl2000jacobi}, that is, for any \( \psi = (\psi_n)_{n \in \mathbb{Z}^d} \in \mathcal{H} \), the action of \( H \) can be written as
\begin{equation} \label{eq:Hamiltonian}
(H (\psi))_n = \sum_{j \in \mathbb{Z}^d} A_j \psi_{n+j},
\end{equation}
for an appropriate family of matrices \( A_j \in \mathrm{M}_N(\mathbb{C}) \), indexed by \( j \in \mathbb{Z}^d \), with \( A_{-j} = A_j^\dagger \).

Physically, each matrix \( A_j \) can be interpreted as a matrix-valued hopping amplitude for a particle to move by a vector displacement \( j \) on the lattice.
Boundedness of \( H \) imposes decay conditions on the matrix norms \( \|A_j\| \), requiring them to vanish at infinity sufficiently fast. To simplify the setting, we shall restrict ourselves to the case of finite-range operators, where hopping occurs only within a bounded neighborhood. This is physically motivated for a number of reasons: first, because in real systems, such as crystalline solids, electrons move on an atomic lattice, and the probability of hopping from one site to another decays rapidly with distance due to the spatial localization of LCAO (Linear Combination of Atomic Orbitals) and the screening of interactions \cite{Ashcroft:102652,mahan2000,giuliani2005,altland2010}. Indeed, the hopping matrices \( A_j \) typically decay as
\begin{equation*}
\| A_j \| \sim \mathcal{O}(e^{-\xi |j|_{\ell^1}}),
\end{equation*}
with \( \xi > 0 \), so contributions from distant hopping decay exponentially and become negligible for practical purposes. Moreover, the rapid decay of long-range interactions is also due to the screening of Coulomb interactions by the surrounding electronic cloud. In a many-electron system, the effective interaction between charges is given by a screened Coulomb potential, i.e. a Yukawa-type potential,
\begin{equation*}
V(r) \propto \frac{e^{-\kappa r}}{r},
\end{equation*}
where \( \kappa^{-1} \) is the screening length, determined by the material's electronic properties; \( \kappa \) can be viewed as an effective mass parameter for the electric field in the medium. From a modeling perspective, this justifies including only neighbors at finite distance. Moreover, this approximation preserves the local symmetries of the lattice and significantly simplifies the mathematical analysis, without affecting the system's topological properties. Finally, in the thermodynamic and low energy limits, long range hopping contributions are negligible for many physically relevant observables \cite{bernevig2013,resta1994}. Therefore, we assume that there exists \( R \in \mathbb{N} \) such that
\begin{equation*}\label{consuA}
A_j = 0 \qquad \forall \  j \in \mathbb{Z}^d\setminus \{0\} \ : \  |j|_{\ell^1} > R.
\end{equation*}

Exploiting lattice translation invariance allows for a partial diagonalization of the Hamiltonian, and therefore a simplification of the corresponding spectral analysis. The transform that decomposes the Hamiltonian in fiber operators is the \emph{Bloch--Floquet transform} (essentially a vector-valued Fourier transform), defined explicitly by 
\begin{equation}\label{tra}
\mathcal{F}_d: \ell^2(\mathbb{Z}^d) \otimes \mathbb{C}^N \to L^2(\mathbb{T}^d; \mathbb{C}^N), \quad \mathcal{F}_d(e_{{n}} \otimes v) := \frac{e^{i n \cdot k}}{(\sqrt{2\pi})^{d}}  v,
\end{equation}
where $n \in \mathbb{Z}^d, v \in \mathbb{C}^N\,$, $\mathbb{T}^d$ stands for the (Brillouin) torus $\mathbb{T}^d := (S^1)^{\times d}$, and $\{(e_n)_{x} := \delta_{n,x}\}_{n \in \mathbb{Z}^d}$ denotes the canonical basis of $\ell^2(\mathbb{Z}^d)$. The above $\mathcal{F}_d$ is a bijective linear isometry because it sends an orthonormal basis to another orthonormal basis. Further properties of the transform are collected in Appendix~\ref{2.1}. For Hamiltonians of the form~\eqref{eq:Hamiltonian}, the Bloch--Floquet decomposition leads to the following result~\cite[Lemma~II.5]{Monaco_2023}.
\begin{Proposition}[Fiber Hamiltonians]\label{BFexpl}
    The operator $\mathcal{F}_dH\mathcal{F}_d^{-1}$ is a fibered operator,
    \[ \mathcal{F}_dH\mathcal{F}_d^{-1} = \int^{\oplus}_{\mathbb{T}^d} H_k \, dk\,, \]
    whose fiber Hamiltonians are matrices $H_k \in \mathrm{M}_N(\mathbb{C})$ given by $$H_k= \sum_{j \in J} e^{-i j \cdot k} A_j , \qquad k \in \mathbb{T}^d,$$ where $J:=\{j\in \mathbb{Z}^d \ : \ |j|_{\ell^1}\leq R\}$.
\end{Proposition}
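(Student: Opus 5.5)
The plan is to compute $\mathcal{F}_d H \mathcal{F}_d^{-1}$ directly on the orthonormal basis $\{e_n \otimes v\}$, using the explicit definition \eqref{tra}. Because $H$ has finite range, each basis vector is sent to a finite linear combination of basis vectors. This lets us verify the claimed fibered form on a dense subspace and then extend it by continuity.

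\textbf{Step 1: action of $H$ on basis vectors.} Using \eqref{eq:Hamiltonian}, we have
\[
(H(e_m\otimes v))_n=\sum_{j\in J} A_j\,\delta_{n+j,m}\,v ,
\]
so that
\[
H(e_m\otimes v)=\sum_{j\in J} e_{m-j}\otimes A_j v .
\]
The sum over $j$ is restricted to $J$ because $A_j=0$ for $|j|_{\ell^1}>R$. We will also need $A_0$ to be included in $J$, which it is, since $|0|_{\ell^1}=0\le R$.

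\textbf{Step 2: apply the transform.} Applying $\mathcal{F}_d$ and linearity gives
\[
\mathcal{F}_dH(e_m\otimes v)=\sum_{j\in J}\frac{e^{i(m-j)\cdot k}}{(2\pi)^{d/2}}A_jv=\Big(\sum_{j\in J}e^{-ij\cdot k}A_j\Big)\frac{e^{im\cdot k}}{(2\pi)^{d/2}}v=H_k\,\big(\mathcal{F}_d(e_m\otimes v)\big)(k).
\]
Hence $\mathcal{F}_dH\mathcal{F}_d^{-1}$ coincides with the multiplication operator $(M\phi)(k)=H_k\phi(k)$ on the images of all basis vectors. By linearity, the two operators agree on the dense subspace of trigonometric polynomials in $L^2(\mathbb{T}^d;\mathbb{C}^N)$.

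\textbf{Step 3: extension by continuity.} The map $k\mapsto H_k$ is a trigonometric polynomial, so it is continuous on the compact torus and $\sup_k\|H_k\|<\infty$. Therefore $M$ is a bounded operator. On the other side, $\mathcal{F}_dH\mathcal{F}_d^{-1}$ is bounded, because $\mathcal{F}_d$ is unitary and $H$ is bounded. Two bounded operators that agree on a dense subspace are equal, which yields
\[
\mathcal{F}_dH\mathcal{F}_d^{-1}=\int^{\oplus}_{\mathbb{T}^d}H_k\,dk .
\]
As a consistency check, $A_{-j}=A_j^\dagger$ gives $H_k^\dagger=H_k$, so the fibers are Hermitian, as they should be.

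\textbf{Main obstacle.} The only step that needs care is the identification of a bounded multiplication operator with the direct integral $\int^\oplus H_k\,dk$, together with the density and continuity argument. Both are standard. The essential content of the proof is the index shift $n+j=m$ in Step 1, which produces the sign $e^{-ij\cdot k}$ and must match the convention chosen in \eqref{tra}.
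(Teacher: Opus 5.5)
Your proof is correct and follows essentially the same route as the paper, which cites \cite[Lemma~II.5]{Monaco_2023} and, in Appendix~\ref{2.1}, computes $\mathcal{F}_d$ applied to the image of each basis vector $e_\gamma\otimes f_i$ under the translation-invariant operator, reading off the fiber as $\sum_{\eta} e^{-ik\cdot\eta}$ times the hopping matrix. The paper treats the passage from basis vectors to the whole space as clear by construction; you make it explicit, correctly, through the density of trigonometric polynomials and the boundedness of both operators.
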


By the general theory and the explicit representation in Proposition \ref{BFexpl}, the family of matrices \( \{H_k\} \) is $2\pi$-periodic in each component of $k$ and depends analytically on \( k \in \mathbb{T}^d \). In particular, the spectra of these matrices are uniformly bounded from below: for later reference, we denote by $\lambda_0$ a common lower bound for the spectra of the matrices $H_k$, i.e., 
\[ \lambda_0 < \inf_{k \in \mathbb{T}^d} \inf \sigma(H_k)\,. \]
This condition is essentially a physical one: the energy of the system cannot be unbounded from below since otherwise the condensed matter system will be unstable. This condition can be relaxed to consider Dirac-like models in which the infinity of negative energy states is interpreted as the Dirac sea. This interpretation finds its true fulfillment in Quantum Field Theory in which infinite negative energies are the prerogative of renormalization theory \cite{Peskin:1995ev, Weinberg:1995mt}. In the context of topological insulators, systems of Dirac type have been investigated e.g.\ in regards to their transport properties \cite{LuScSt13} or in relation to the relevance of the Zak phase as a topological probe \cite{Angelone2026}.

To model topological insulators, we formulate the following \emph{spectral gap hypothesis}: There exist \( \mu \in \mathbb{R} \) and \( g > 0 \) such that
\[ \inf_{k \in \mathbb{T}^d} \inf \sigma(H_k) < \mu - \frac{g}{2}, \quad \mathrm{and} \quad \left( \mu - \frac{g}{2}, \mu + \frac{g}{2} \right) \cap \sigma(H_k) = \emptyset \qquad \forall \ k \in \mathbb{T}^d\,. \]
Also this condition is a physical one since it tells us that the spectrum of the Hamiltonian is not connected but rather it is interspersed with spectral gaps: for example, a gap around the Fermi energy $\mu$ will separate the valence bands from the conduction bands.

Without loss of generality (see also the discussion below on quantum symmetries), we set $\mu = 0$. Thanks to the spectral gap hypothesis, the negative and positive eigenspaces of~\( H_k \), or rather the corresponding spectral eigenprojections, inherit the $2\pi$-periodicity in each component of $k$ and depend analytically on $k \in \mathbb{T}^d$~\cite{kato1976perturbation, Panati_2013}. To define the eigenprojections, we employ the Riesz formula: Let \( \alpha \subset \mathbb{C} \) be a simple, closed, positively oriented contour in the complex energy plane lying in the joint resolvent of the matrices $\{H_k\}_{k \in \mathbb{T}^d}$, and such that $\alpha \cap \mathbb{R} = \{ \lambda_0, \mu \} = \{\lambda_0, 0\}$. Set
\begin{equation} \label{eq:riesz}
P^{(-)}_k := \frac{i}{2\pi} \int_{\alpha}  (H_k-z1_{\mathbb{C}^N})^{-1}d{z}, \qquad k \in \mathbb{T}^d\,.
\end{equation}
Note that the integration in \eqref{eq:riesz} is performed in the complex energy plane, not in momentum space; moreover the matrix-valued integral is understood entry-wise. The operator \( P^{(-)}_k \) defined by the Riesz formula \eqref{eq:riesz} is an orthogonal projection that projects onto the direct sum of the eigenspaces associated with the negative eigenvalues of $H_k$, as $\alpha$ encloses only the negative part of the spectrum of $H_k$, where the resolvent $( H_k-z 1_{\mathbb{C}^N})^{-1}$ has its poles. By functional calculus, it follows that $P^{(-)}_k$ inherits the analyticity and periodicity properties of $H_k$ with respect to $k \in \mathbb{T}^d$~\cite{Panati_2013}. The complementary spectral eigenprojection onto the positive part of the spectrum is given by
\begin{equation}\label{P+}
    P^{(+)}_k := {1}_{\mathbb{C}^N} - P^{(-)}_k.
\end{equation}

As sketched in Appendix~\ref{2.1} and references therein, the data of the family of eigenprojections $\{P_k\}$, labelled by quasi-momentum $k$, allows to define a vector bundle over the Brillouin torus $\mathbb{T}^d$, called the Bloch bundle. It is this bundle that is responsible for the ``topological'' content of a topological insulator.

\subsection{Topological insulators and the ten-fold way}\label{3.3}

Several classification schemes have been proposed to distinguish topological phases of quantum matter. In a nutshell, two topological insulators are said to be in the same topological phase if their ground state projections can be continuously\footnote{The question about the topology which should be used on the space of allowed Hamiltonians to interpret a deformation as ``continuous'' is debated~\cite{thiang2015topological}, and may depend on the features one wants to include in the model, like disorder~\cite{shapiro2020topology} or locality~\cite{chung2025topological}. In the present context, norm-resolvent topology is enough.} deformed into one another without closing the spectral gap. Mathematically speaking, this corresponds to the existence of a continuous\footnote{In the standard norm topology since we are now talking about the spectral eigenprojections.} homotopy  between the respective families of spectral eigenprojections: according to the homotopy classification theorem~\cite{Husemoller1994}, this in turn implies that the associated Bloch bundles are isomorphic. However, a complete classification of this sort is achievable only in some low dimensional cases~\cite{Husemoller1994, panati2007triviality, monaco2023topology} and a $K$-theoretic approach gained prominence, in which only the $K^0$-class of the Bloch bundle (or equivalently, its stable equivalence class) is considered relevant for the identification of the quantum phase~\cite{Husemoller1994, kitaev2001}. 

The presence of further quantum symmetries may impose certain non-trivial constraints on the family of spectral eigenprojections, which in turn
can affect the topology of the corresponding Bloch bundle, either trivializing or enriching it. The type of discrete symmetries that are most relevant for topological insulators are time reversal symmetry $T$, particle-hole conjugation symmetry $C$ and chiral
symmetry $S$. These are symmetries in a somewhat generalized sense, in that they may commute or anticommute with the quantum Hamiltonian: see Appendix \ref{APPB} for details. As a result, topological insulators can be categorized into distinct symmetry classes, each determined by the presence or absence of certain discrete symmetries: these are often referred to as Altland--Zirnbauer--Cartan (AZC) classes~\cite{tab, heinzner2005symmetry}. Taking into account all possible combinations of these symmetry properties leads to a total of ten distinct symmetry classes. This classification scheme is known as the ``ten-fold way''~\cite{kitaev2001, Ryu_2010}, and forms the foundation for understanding topological insulators in condensed matter physics. 

The following Table~\ref{tabclass} reports this classification, obtained from the computation of appropriate (twisted equivariant) $K$-theory groups in each symmetry class and spatial dimension.
\begin{table}[h!]
\centering
\caption{Periodic table of topological insulators. The label for each row is a symmetry class following the AZC classification; the columns $T$, $C$, $S$ show the presence or not of the respective discrete symmetry as well as the value of its square (+1 if it is $+1_{\mathcal{H}}$, $-$1 if it is $-1_{\mathcal{H}}$); the last eight columns label the spatial dimensions. In view of Bott's periodicity, the table repeats periodically after $d=8$.}\label{tabclass}
\begin{tabular}{@{}l|ccc|ccccccccc@{}}
\toprule
\textbf{Symmetry Class} & $T$ & $C$ & $S$ & \textbf{1} & \textbf{2} & \textbf{3} & \textbf{4} & \textbf{5} & \textbf{6} & \textbf{7} & \textbf{8} \\
\midrule
A     & 0 & 0 & 0 & 0 & $\mathbb{Z}$ & 0 & $\mathbb{Z}$ & 0 & $\mathbb{Z}$ & 0 & $\mathbb{Z}$ \\
AIII  & 0 & 0 & 1 & $\mathbb{Z}$ & 0 & $\mathbb{Z}$ & 0 & $\mathbb{Z}$ & 0 & $\mathbb{Z}$ & 0 & \\
\midrule
AI    & 1 & 0 & 0 & 0 & 0 & 0 & $\mathbb{Z}$ & 0 & $\mathbb{Z}_2$ & $\mathbb{Z}_2$ & $\mathbb{Z}$ & \\
BDI   & 1 & 1 & 1 & $\mathbb{Z}$ & 0 & 0 & 0 & $\mathbb{Z}$ & 0 & $\mathbb{Z}_2$ & $\mathbb{Z}_2$ & \\
D     & 0 & 1 & 0 & $\mathbb{Z}_2$ & $\mathbb{Z}$ & 0 & 0 & 0 & $\mathbb{Z}$ & 0 & $\mathbb{Z}_2$ & \\
DIII  & -1 & 1 & 1 & $\mathbb{Z}_2$ & $\mathbb{Z}_2$ & $\mathbb{Z}$ & 0 & 0 & 0 & $\mathbb{Z}$ & 0 & \\
AII   & -1 & 0 & 0 & 0 & $\mathbb{Z}_2$ & $\mathbb{Z}_2$ & $\mathbb{Z}$ & 0 & 0 & 0 & $\mathbb{Z}$ & \\
CII   & -1 & -1 & 1 & $\mathbb{Z}$ & 0 & $\mathbb{Z}_2$ & $\mathbb{Z}_2$ & $\mathbb{Z}$ & 0 & 0 & 0 & \\
C     & 0 & -1 & 0 & 0 & $\mathbb{Z}$ & 0 & $\mathbb{Z}_2$ & $\mathbb{Z}_2$ & $\mathbb{Z}$ & 0 & 0 & \\
CI    & 1 & -1 & 1 & 0 & 0 & $\mathbb{Z}$ & 0 & $\mathbb{Z}_2$ & $\mathbb{Z}_2$ & $\mathbb{Z}$ & 0 & \\
\bottomrule
\end{tabular}
\end{table} 

\subsection{The action of the symmetry operators on fibres}

Discrete symmetries have an impact also on the spectral features of the quantum system and of its modelling Hamiltonian. In the notation of \eqref{eq:tensor}, we require, as is customary, that the discrete symmetry operators act as the identity on $\ell^2(\mathbb{Z}^d)$. We call $\mathfrak{T},\mathfrak{C},\mathfrak{S}$ the corresponding operators acting non-trivially on the $\mathbb{C}^N$-leg of the tensor product Hilbert space $\mathcal{H}$.

From the relations between the $T$, $S$ and $C$ operators and the Hamiltonian (see Appendix \ref{APPB}), using the Bloch--Floquet representation we can derive how their fiberwise counterparts act on the fibered operators and, thanks to the Riesz formula \eqref{eq:riesz}, on the spectral eigenprojections.
\begin{Proposition}[Discrete symmetries and fiber operators]\label{dsfo} Let $T,C,S$ be, respectively, the time reversal, particle-hole and chiral symmetry operator acting on the Hilbert space~$\mathcal{H}$
. Let $\mathfrak{T}$, $\mathfrak{C}$ and $\mathfrak{S}$, respectively, be their $k$-independent fiber operators with respect to the Bloch--Floquet transform
. Let $H_k$ the fiber Hamiltonian and let $P_k$ be the fiber of the spectral eigenprojection defined by the Riesz formula that projects onto all the negative energy bands. Then
    \begin{enumerate}
    \item 
    \(
    H_{-k} \mathfrak{T} = \mathfrak{T}H_{k}, \quad \mathrm{and}  \quad  P_k \mathfrak{T} = \mathfrak{T}P_{-k}; 
    \)
    
    \item 
    \(
     H_{-k} \mathfrak{C} = -\mathfrak{C}H_{k}, \quad \mathrm{and}  \quad  P_k \mathfrak{C} = \mathfrak{C}(1_{\mathbb{C}^N} - P_{-k});
    \)
    
    \item 
    \(
     H_{k} \mathfrak{S} = -\mathfrak{S}H_k, \quad \mathrm{and}  \quad  P_k \mathfrak{S} = \mathfrak{S}(1_{\mathbb{C}^N} - P_k).
    \)
\end{enumerate}
\end{Proposition}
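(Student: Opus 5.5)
The plan is to establish the relations first on the full Hilbert space $\mathcal{H}$, then transport them to fibers through the Bloch--Floquet transform $\mathcal{F}_d$, and finally pass from $H_k$ to $P_k$ via the Riesz formula \eqref{eq:riesz}. From Appendix~\ref{APPB} I take the defining relations $TH = HT$ and $CH = -HC$ with $T,C$ anti-unitary, and $SH = -HS$ with $S$ unitary. Since each symmetry acts as $1_{\ell^2}\otimes\mathfrak{X}$, I will write $T = K_0\otimes\mathfrak{T}$, and similarly for $C$, where $K_0$ is complex conjugation in the canonical basis $\{e_n\}$ of $\ell^2(\mathbb{Z}^d)$. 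This is a real basis, so $K_0$ acts as the identity on it. In the fiber representation $K_0$ becomes $(\mathcal{F}_d K_0\mathcal{F}_d^{-1}\varphi)(k)=\overline{\varphi(-k)}$, because $\overline{e^{in\cdot k}}=e^{in\cdot(-k)}$; this is the origin of the $k\mapsto -k$ in items 1 and 2. The unitary $S=1\otimes\mathfrak{S}$ commutes with $\mathcal{F}_d$ and acts fiberwise at the same $k$.

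\emph{Step 1: the Hamiltonian relations.} I will verify these directly on the hopping expansion of Proposition~\ref{BFexpl}. The relation $TH=HT$ applied to $e_n\otimes v$, together with \eqref{eq:Hamiltonian}, gives $\mathfrak{T}A_j=A_j\mathfrak{T}$ for each $j$, with $\mathfrak{T}$ antilinear; note that the operator $K_0$ does not move sites. Then
\[
\mathfrak{T}H_k=\sum_{j\in J}\mathfrak{T}e^{-ij\cdot k}A_j=\sum_{j\in J}e^{ij\cdot k}A_j\mathfrak{T}=H_{-k}\mathfrak{T}.
\]
The same computation with $\mathfrak{C}A_j=-A_j\mathfrak{C}$ yields $\mathfrak{C}H_k=-H_{-k}\mathfrak{C}$. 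For $\mathfrak{S}$, which is linear, we get $\mathfrak{S}A_j=-A_j\mathfrak{S}$ and hence $\mathfrak{S}H_k=-H_k\mathfrak{S}$ with no sign flip in $k$.

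\emph{Step 2: the projection relations.} For the unitary case, $\mathfrak{S}(H_k-z)^{-1}\mathfrak{S}^{-1}=(-H_k-z)^{-1}$. Hence
\[
\mathfrak{S}P_k\mathfrak{S}^{-1}=\frac{i}{2\pi}\int_\alpha(-H_k-z)^{-1}dz,
\]
which is the Riesz projection of $-H_k$ onto its negative spectrum, i.e. the spectral projection of $H_k$ onto its positive spectrum, namely $1-P_k$ by \eqref{P+}. I would justify this cleanly by functional calculus, $P_k=\chi_{(-\infty,0)}(H_k)$, which holds thanks to the gap hypothesis, rather than by manipulating contours. This gives $\mathfrak{S}P_k=(1-P_k)\mathfrak{S}$, which rearranges to item 3 after conjugating by $\mathfrak{S}$. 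I will use the involutivity $\mathfrak{S}^2=\pm 1$; otherwise I will state the identity in the conjugated form $\mathfrak{S}P_k\mathfrak{S}^{-1}=1-P_k$.

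For the antiunitary operators, the main subtlety, and the step I regard as the main obstacle, is that conjugating the Riesz integral by an antilinear map sends $z\mapsto\bar z$ and $dz\mapsto d\bar z$, and sends the prefactor $i\mapsto -i$. The contour $\alpha$ can be chosen symmetric under complex conjugation, with the orientation reversed by $z\mapsto\bar z$. The two sign changes then cancel, giving $\mathfrak{T}P_k\mathfrak{T}^{-1}=P_{-k}$. I will sidestep the contour bookkeeping by the functional-calculus argument: for antiunitary $\mathfrak{T}$ and real Borel $f$, one has $\mathfrak{T}f(H_k)\mathfrak{T}^{-1}=f(\mathfrak{T}H_k\mathfrak{T}^{-1})=f(H_{-k})$, which is checked on eigenvectors since $\mathfrak{T}$ maps eigenvectors of $H_k$ with real eigenvalue $\lambda$ to eigenvectors of $H_{-k}$ with the same $\lambda$. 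This gives $P_{-k}\mathfrak{T}=\mathfrak{T}P_k$, i.e. item 1 after relabelling $k\to-k$. For $\mathfrak{C}$, the same argument gives $\mathfrak{C}\chi_{(-\infty,0)}(H_k)\mathfrak{C}^{-1}=\chi_{(-\infty,0)}(-H_{-k})=1-P_{-k}$, using again the gap at $0$ so that the kernel contributes nothing. This yields item 2.
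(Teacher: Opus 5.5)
Your proof is correct and follows the same route as the paper. The paper simply asserts that the fiber relations are inherited from the (anti)commutation of $T,C,S$ with $H$ and then passed to $P_k$ via the Riesz formula; you supply the details it leaves implicit, namely the hopping-level identities $\mathfrak{T}A_j=A_j\mathfrak{T}$, $\mathfrak{C}A_j=-A_j\mathfrak{C}$, $\mathfrak{S}A_j=-A_j\mathfrak{S}$, and the treatment of antilinear conjugation by functional calculus or a conjugation-symmetric contour. One small simplification: item 3 does not need involutivity of $\mathfrak{S}$, since $\mathfrak{S}P_k=(1-P_k)\mathfrak{S}=\mathfrak{S}-P_k\mathfrak{S}$ already rearranges to $P_k\mathfrak{S}=\mathfrak{S}(1-P_k)$, exactly as you did for $\mathfrak{C}$.
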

\begin{proof}
    The relations in their first form are inherited by the fiber operators directly from the commutativity or anti-commutativity properties of the operators $T,C,S$ with the Hamiltonian. Similarly, concerning the relations with the fiber operators of the spectral eigenprojections, these follow directly from the former using the Riesz formula. 
\end{proof}
\begin{table}[h!]
\centering
\caption{Summary of the symmetry operators $\mathfrak{T}$, $\mathfrak{C}$, $\mathfrak{S}$ and their possible combinations. We report their actions on quasi-momentum $k$ and on the fiber Hamiltonian $H_k$. The last column shows the resulting spectral symmetry.}\label{tabriass}
\renewcommand{\arraystretch}{1.4}
\begin{tabular}{l|l|l|l}
\toprule
\textbf{Operator} & \textbf{Action on} $k$  & \textbf{Action on} $H_k$ & \textbf{Spectral Sym.} \\
\midrule
$\mathfrak{T}$ & $k \mapsto -k$ &  $H_{-k} = \mathfrak{T} H_k \mathfrak{T}^{-1}$ & $E \leftrightarrow E$ \\
$\mathfrak{C}$ & $k \mapsto -k$ & $H_{-k} = -\mathfrak{C} H_k \mathfrak{C}^{-1}$ & $E \leftrightarrow -E$ \\
$\mathfrak{S}$ & $k \mapsto k$   &  $H_k = -\mathfrak{S} H_k \mathfrak{S}^{-1}$ & $E \leftrightarrow -E$ \\
$\mathfrak{T} \mathfrak{C}$ & $k \mapsto k$   &  $H_k=-(\mathfrak{T} \mathfrak{C}) H_k (\mathfrak{T} \mathfrak{C})^{-1}$ & $E \leftrightarrow -E$ \\
$\mathfrak{T} \mathfrak{S}$ & $k \mapsto -k$ &  $H_{-k}=-(\mathfrak{T} \mathfrak{S}) H_k (\mathfrak{T} \mathfrak{S})^{-1}$ & $E \leftrightarrow -E$ \\
$\mathfrak{C} \mathfrak{S}$ & $k \mapsto -k$ &  $H_{-k}=(\mathfrak{C} \mathfrak{S}) H_k (\mathfrak{C} \mathfrak{S})^{-1} $ & $E \leftrightarrow E$ \\
$\mathfrak{T} \mathfrak{C} \mathfrak{S}$ & $k \mapsto k$ &  $H_k=(\mathfrak{T} \mathfrak{C} \mathfrak{S}) H_k (\mathfrak{T} \mathfrak{C} \mathfrak{S})^{-1}$ & $E \leftrightarrow E$ \\
\bottomrule
\end{tabular}
\end{table}
The above Table \ref{tabriass} summarizes the action of the discrete symmetry operators and their possible combinations, together with the spectral consequences that their presence entails. The table shows in particular that in symmetry classes BDI, DIII, CII and CI there are always symmetries with a non-trivial action on the spectrum: particle-hole symmetry, chiral symmetry and their combinations with time reversal symmetry map $E$ to $-E$. It appears that symmetries and their combinations can be classified according to two features: whether they are unitary or antiunitary (and therefore whether they keep quasi-momentum~$k$ fixed, or they map it to $-k$), and whether they commute or anticommute with the Hamiltonian (and therefore whether they keep the energy $E$ fixed, or they map it to $-E$). Accordingly, we denote the general symmetry combination as $\mathfrak{O}^{(\varepsilon_k \, \varepsilon_E)}$, where $\varepsilon_k, \varepsilon_E \in \{-, +\}$ are the signs corresponding to $k \mapsto \varepsilon_k \, k$ and $E \leftrightarrow \varepsilon_E E$; with this notation, symmetries fall into four classes:
\[ \begin{split} \mathfrak{O}^{(++)} \in \{1_{\mathbb{C}^N},\mathfrak{T}\mathfrak{C}\mathfrak{S}\}\,, \qquad &
\mathfrak{O}^{(+-)} \in \{\mathfrak{S},\mathfrak{T}\mathfrak{C}\}\,, \qquad\\
\mathfrak{O}^{(-+)} \in \{\mathfrak{T},\mathfrak{C}\mathfrak{S}\}\,, \qquad
& \mathfrak{O}^{(--)} \in \{\mathfrak{C},\mathfrak{T}\mathfrak{S}\}\,. \end{split}\]

Moreover, the action of the discrete operators on the spectral eigenprojections now reads as follows.
\begin{Proposition}[Discrete symmetries and spectral eigenprojections]\label{DSSP}
Let \( P^{(-)}_k \) and $P^{(+)}_k$ be as in equations \eqref{eq:riesz} and \eqref{P+}. Then
\[ \mathfrak{O}^{(\varepsilon_k \, \varepsilon_E)} P_k^{(\pm)} = P_{\varepsilon_k \, k}^{(\pm \varepsilon_E)}\mathfrak{O}^{(\varepsilon_k \, \varepsilon_E)}\,, \quad \varepsilon_k \,, \varepsilon_E \in \{-, +\}\,. \]
\end{Proposition}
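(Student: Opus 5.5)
The plan is to reduce everything to the Riesz formula \eqref{eq:riesz} and the intertwining relations of Table~\ref{tabriass}. The latter can be written uniformly as
\[ \mathfrak{O}^{(\varepsilon_k \varepsilon_E)} H_k = \varepsilon_E\, H_{\varepsilon_k k}\, \mathfrak{O}^{(\varepsilon_k \varepsilon_E)}. \]
For the basic operators $\mathfrak{T},\mathfrak{C},\mathfrak{S}$ this is Proposition~\ref{dsfo}. For the products $\mathfrak{T}\mathfrak{C}$, $\mathfrak{T}\mathfrak{S}$, $\mathfrak{C}\mathfrak{S}$, $\mathfrak{T}\mathfrak{C}\mathfrak{S}$ it follows by composing these relations: the momentum signs multiply, and so do the energy signs. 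The identity $1_{\mathbb{C}^N}$ is trivial.

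First, the composition step deserves a check, because antiunitarity matters there. An antiunitary $\mathfrak{O}$ conjugates scalars, and this is exactly what turns $k$ into $-k$ in $e^{-ijk}$ (the action on $k$ is already encoded in Proposition~\ref{dsfo}). Composing two antiunitaries gives a unitary, with $k\mapsto k$. So the signs $\varepsilon_k$ and $\varepsilon_E$ behave multiplicatively, matching the four classes listed before the statement.

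Second, the case $\varepsilon_E=+$. Here $\mathfrak{O}$ intertwines $H_k$ with $H_{\varepsilon_k k}$, and the plan is to push $\mathfrak{O}$ through the resolvent:
\[ \mathfrak{O}(H_k - z)^{-1} = (H_{\varepsilon_k k} - \tilde z)^{-1}\mathfrak{O}, \]
with $\tilde z = z$ if $\mathfrak{O}$ is unitary and $\tilde z=\bar z$ if it is antiunitary. Integrating over $\alpha$ then gives $\mathfrak{O}P^{(-)}_k = P^{(-)}_{\varepsilon_k k}\mathfrak{O}$. In the antiunitary case I expect the main obstacle. The prefactor $\frac{i}{2\pi}$ gets conjugated to $-\frac{i}{2\pi}$, and $dz$ becomes $d\bar z$. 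The contour $\bar\alpha$ is the reflection of $\alpha$, which reverses orientation, and this sign cancels the one from conjugating the prefactor. I will choose $\alpha$ symmetric under complex conjugation to make this clean. It encloses the same real interval, so the result is again $P^{(-)}_{\varepsilon_k k}$.

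To sidestep contour bookkeeping altogether, an alternative is to use functional calculus instead: $P^{(-)}_k = \chi_{(-\infty,0)}(H_k)$ with $\chi$ real-valued. For any unitary or antiunitary $\mathfrak{O}$ one has $\mathfrak{O} f(A)\mathfrak{O}^{-1} = f(\mathfrak{O}A\mathfrak{O}^{-1})$ for real $f$, which can be checked on the spectral decomposition, since eigenvalues are real and are preserved under antiunitary conjugation. I will present this as the main argument and mention the Riesz route only as a consistency check.

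Third, the case $\varepsilon_E=-$. Now $\mathfrak{O}H_k\mathfrak{O}^{-1} = -H_{\varepsilon_k k}$, so
\[ \mathfrak{O}P^{(-)}_k\mathfrak{O}^{-1} = \chi_{(-\infty,0)}(-H_{\varepsilon_k k}) = \chi_{(0,\infty)}(H_{\varepsilon_k k}). \]
By the gap hypothesis $0\notin\sigma(H_{\varepsilon_k k})$, so this equals $1-P^{(-)}_{\varepsilon_k k} = P^{(+)}_{\varepsilon_k k}$ by \eqref{P+}. The statement for $P^{(+)}_k$ in both cases follows by applying $\mathfrak{O}(1-\cdot)\mathfrak{O}^{-1}$ and using $\mathfrak{O}1\mathfrak{O}^{-1}=1$. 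Collecting the cases gives $\mathfrak{O}P^{(\pm)}_k = P^{(\pm\varepsilon_E)}_{\varepsilon_k k}\mathfrak{O}$.
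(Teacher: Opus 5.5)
Your proposal is correct and follows essentially the paper's route: the fiber relations of Proposition~\ref{dsfo}, extended multiplicatively in $\varepsilon_k,\varepsilon_E$ to the products as in Table~\ref{tabriass}, are transferred to the eigenprojections. The only difference is the transfer step: the paper uses the Riesz formula \eqref{eq:riesz}, while your main argument uses the spectral decomposition $P^{(-)}_k=\chi_{(-\infty,0)}(H_k)$, which avoids the contour conjugation and reflection bookkeeping for antiunitary and energy-reversing symmetries (your Riesz consistency check also handles that bookkeeping correctly).
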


According to the theory of fibered operators outlined in Appendix \ref{2.1}, the spectrum of the original Hamiltonian $H$ is given by the union of the spectra of the fiber Hamiltonians \( H_k \) with \( k \in \mathbb{T}^d \). Therefore, in presence of the spectral symmetry $E \leftrightarrow -E$, the spectrum \( \sigma(H) \) must be symmetric with respect to \( E = 0 \). This is why we impose the relevant spectral gap to be around zero energy.

We observe that the quasi-momentum \( k = 0 \) is fixed under the involution \( \Phi: k \mapsto -k \). By the spectral theorem, \( \mathbb{C}^N \) admits an orthonormal basis \( v_1, \dots, v_N \) consisting of eigenvectors of the self-adjoint matrix \( H_0 \). Assuming a spectral gap around $\mu=0$, we may arrange that the first \( m \) vectors \( \{v_1, \dots, v_m\} \) correspond to negative eigenvalues, while the remaining vectors \( \{v_{m+1}, \dots, v_N\} \) correspond to positive eigenvalues. The presence of at least one symmetry operator $\mathfrak{O}^{(\varepsilon_k \, -)}$ inducing a spectral symmetry $E \leftrightarrow -E$ imposes the existence of a bijection between the eigenspaces of \( H_0 \) associated with negative and positive eigenvalues. As a consequence, it must be that \( m = N - m \), and hence the fiber dimension \( N \) must be even, $N=2m$.

The general theory presented in Appendix~\ref{APPB} allows for the presence of independent symmetries $\mathfrak{T}$, $\mathfrak{C}$ and $\mathfrak{S}$; however, the product of two of them is of the third type. Therefore, a system could in principle be symmetric under three such symmetries, but the combination $\mathfrak{T}\mathfrak{C}$ (or $i \, \mathfrak{T}\mathfrak{C}$) and the operator $\mathfrak{S}$ would both behave as chiral symmetries. In Kitaev's periodic table (see Table~\ref{tabclass}), it is instead assumed that the three symmetries are not independent, namely that the third one is determined by the product of the other two (up to multiplication times $\pm 1$ or $\pm i$, depending on whether the two symmetry operators commute or anticommute among themselves). We will follow the same lines: thus, for example, a symmetry operator of type $\mathfrak{O}^{(++)}$ is required to be proportional to the identity, and thus to describe the situation in which there is no symmetry at all (as in class A).

\section{Symmetric Bloch bases}\label{BB}

\subsection{General definition}
Later we will be interested in quantifying the topological content of the spectral eigenprojections of the fiber operators $H_k$ into a numerical invariant, and in particular in extracting this topological information from the geometric Zak phase (see Section~\ref{4.3}). To do so we start by identifying bases which behave appropriately under the discrete symmetries of the system.
\begin{Definition}[Bloch basis]
    Given a set of rank-$m$ projections \( \{P_k\} \subset \mathrm{M}_N(\mathbb{C}) \), which is $2\pi$-periodic in each component of $k$ and at least \( C^1 \) in \( k \), we define \( \{v_j(k)\}_{j=1}^N \subset \mathbb{C}^N \) to be a Bloch basis associated to \( \{P_k\} \) if it satisfies the following properties:
\begin{enumerate}
    \item each vector \( v_j(k) \) is \( 2\pi \)-periodic and as regular as \( P_k \) as a function of \( k \in \mathbb{T}^d \);
    \item for every \( k \in \mathbb{T}^d \), the collection \( \{v_j(k)\}_{j=1}^N \) forms an orthonormal basis of \( \mathbb{C}^N \);
    \item for every \( k \in \mathbb{T}^d \), the vectors \( \{v_j(k)\}_{j=1}^m \) span \( \operatorname{Range}(P_k) \) and therefore the vectors \( \{v_j(k)\}_{j=m+1}^N \) span \( \mathrm{Ker}(P_k) = \operatorname{Range}(1_{\mathbb{C}^N} - P_k) \).
\end{enumerate}
\end{Definition}
Since the systems under consideration enjoy some discrete symmetries, we want a Bloch basis to have well-defined transformation properties under the symmetry operators. The transformation properties are inferred from Table \ref{tabriass}. Therefore we give the following Definition. 
\begin{Definition}[Symmetric Bloch basis]\label{SBB}
    Let \( P^{(-)}_k \) and $P^{(+)}_k$ be as in \eqref{eq:riesz} and \eqref{P+}, and let $m$ be the rank of the projection $P^{(-)}_{k}$. Then, a Bloch basis \( \{v_j(k)\}_{j=1}^N \) associated to \( \left\{P^{(-)}_k\right\} \) is said to be a symmetric Bloch basis if the following are satisfied:
\begin{enumerate}
    \item let $\mathfrak{O}^{(++)} \in \{1_{\mathbb{C}^N},\mathfrak{T}\mathfrak{C}\mathfrak{S}\}$; then 
    \begin{equation*}
        \mathfrak{O}^{(++)}\, v_j(k) = v_{j}(k), \qquad \forall \ j \in \{1,...,N\}
    \end{equation*}
    \item let $\mathfrak{O}^{(+-)} \in \{\mathfrak{S},\mathfrak{T}\mathfrak{C}\}$; then $$\mathfrak{O}^{(+-)}\, v_j(k) = v_{N - j + 1}(k), \qquad \forall \ j \in \{1,...,m\};$$  
    \item let $\mathfrak{O}^{(-+)} \in \{\mathfrak{T},\mathfrak{C}\mathfrak{S}\}$; then
    \begin{equation*}
    \begin{aligned}
        &\mathfrak{O}^{(-+)}\, v_j(k) = v_{j}(-k), \qquad \forall \ j \in \{1,...,N\} \quad \mathrm{if} \ \left(\mathfrak{O}^{(-+)}\right)^2=+1_{\mathbb{C}^{N}};\\
        &
        \mathfrak{O}^{(-+)}\, v_j(k) =\sum_{\ell=1}^{2m} v_{\ell}(-k) \varepsilon_{j\ell}, \qquad \forall \ j \in \{1,...,2m\}  \quad \mathrm{if} \ \left(\mathfrak{O}^{(-+)}\right)^2=-1_{\mathbb{C}^{2m}};
    \end{aligned} 
    \end{equation*}
    \item let $\mathfrak{O}^{(--)} \in \{\mathfrak{C},\mathfrak{T}\mathfrak{S}\}$; then $$\mathfrak{O}^{(--)}\, v_j(k) = v_{N - j + 1}(-k), \qquad \forall \ j \in \{1,...,m\};$$
\end{enumerate}
where $\varepsilon = \begin{bmatrix}\varepsilon_{j\ell}\end{bmatrix}_{1 \le j,\ell \le 2m}$ is a reshuffling matrix which is unitary and anti-symmetric. 
\end{Definition}
The necessity of reshuffling matrices is evident if one looks at the invariant subspace at $k = 0$. In fact, it is easy to check that, for all $v(0) \in \operatorname{Range}(P^{(-)}_0)$, the vector $\mathfrak{O}^{(-+)}v(0)$ is orthogonal to $v(0)$ if the operator $\mathfrak{O}^{(-+)}$ squares to $-1_{\mathbb{C}^{2m}}$ (compare the comments under Definition~\ref{quaternionic} below). In particular this would imply the vanishing of the vectors and prevents from setting the definition used for operators that square to $+1_{\mathbb{C}^{2m}}$ as showed e.g.\ in \cite{graf2013bulk, Fiorenza_2016, Cornean_2017, Peluso2026}. 

In the same vein, we observe that the symmetry conditions presented in the above Definition may always be imposed at the single point $k=0$ (see the comments at the end of last Section, and compare~\cite{Monaco_2023} for chiral and  particle-hole symmetric classes, and~\cite{fiorenza2016construction, Fiorenza_2016} for time-reversal symmetric classes): for example, the identity $\mathfrak{O}^{(+-)} v_j(0) = v_{N-j+1}(0)$, $j \in \{1,\ldots,m\}$, may be seen as a definition of the vectors on the right-hand side, once the vectors $\{v_1(0),\ldots,v_m(0)\}$ spanning $\operatorname{Range}(P^{(-)}_0)$ orthonormally are given.

\subsection{Bloch bases in 1D through parallel transport}

Henceforth, we focus on the 1D case, so $d=1$. We ask whether a symmetric Bloch basis for a 1D family of projections exists. The question is non-trivial, as in the geometric picture having a (symmetric) Bloch basis is tantamount to having a trivializing frame (which is compatible with the symmetries) for the corresponding Bloch bundle. We will show that this is indeed always the case, using a tool also borrowed from the geometric viewpoint: parallel transport.
\begin{Theorem}[Parallel transport]\label{TP}
Let \( \{P_k\} \in \mathrm{M}_N(\mathbb{C}) \) be a family of projections which is at least \( C^1 \) in \( k \). Then there exists a unique family of operators \( \{\mathcal{T}_k\} \in \mathrm{M}_N(\mathbb{C}) \) that solves the Cauchy problem
\begin{equation*} \label{eq:Cauchy}
\begin{cases}
\partial_k\mathcal{T}_k = [\partial_kP_k, P_k] \mathcal{T}_k, \\
\mathcal{T}_0 = 1_{\mathbb{C}^N}.
\end{cases}
\end{equation*}
The family of operators \( \{\mathcal{T}_k\} \) is at least as regular as \( \{P_k\} \) as a function of \( k \). Moreover, \( \mathcal{T}_k \) satisfies the following properties:
\begin{enumerate}
    \item Unitarity. Each \( \mathcal{T}_k \) is unitary, i.e. \( (\mathcal{T}_k)^* = (\mathcal{T}_k)^{-1} \).
    \item Intertwining property. We have the relation
    \begin{equation*} \label{eq:intertwining}
    P_k \mathcal{T}_k = \mathcal{T}_k P_0, \qquad \forall \ k \in \mathbb{R}.
    \end{equation*}
    \item Telescopic property. If \( P_k \) is \( 2\pi \)-periodic in \( k \), then
    \begin{equation*} \label{eq:telescopic}
    \mathcal{T}_{k + 2\pi n} = \mathcal{T}_k (\mathcal{T}_{2\pi})^n, \qquad \forall \ k \in \mathbb{R}, \, n \in \mathbb{Z}.
    \end{equation*}
\end{enumerate}
We will call the family of unitaries \( \mathcal{T}_k \) the parallel transport operators associated with \( \{P_k\} \).
\end{Theorem}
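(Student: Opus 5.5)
Writing $A_k := [\partial_k P_k, P_k]$, the Cauchy problem is a linear ODE $\partial_k \mathcal{T}_k = A_k \mathcal{T}_k$ in the finite-dimensional space $\mathrm{M}_N(\mathbb{C})$, with coefficient $A_k$ continuous in $k\in\mathbb{R}$ (since $P_k$ is $C^1$). Global existence and uniqueness on all of $\mathbb{R}$ therefore follow from the Picard--Lindelöf theorem for linear systems: the vector field is globally Lipschitz on every compact interval, so there is no blow-up. Regularity follows by bootstrapping. If $P_k$ is $C^r$ (respectively analytic), then $A_k$ is $C^{r-1}$ (respectively analytic), and the equation $\partial_k\mathcal{T}_k = A_k\mathcal{T}_k$ upgrades $\mathcal{T}_k$ to $C^{r}$, with the analytic case coming from the Cauchy--Kovalevskaya/analytic ODE theory. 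This gives ``at least as regular as $\{P_k\}$''.

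\emph{Unitarity.} Since $P_k$ and $\partial_k P_k$ are self-adjoint, $A_k^* = [P_k,\partial_kP_k] = -A_k$, so $A_k$ is anti-Hermitian. I then compute
\[
\partial_k(\mathcal{T}_k^*\mathcal{T}_k) = \mathcal{T}_k^*(A_k^*+A_k)\mathcal{T}_k = 0,
\]
and since $\mathcal{T}_0^*\mathcal{T}_0 = 1$, it follows that $\mathcal{T}_k^*\mathcal{T}_k = 1$ for all $k$. In finite dimension this already implies unitarity.

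\emph{Intertwining.} This is the step needing the key identity. Differentiating $P_k^2 = P_k$ gives $\dot P P + P\dot P = \dot P$, hence $P\dot P P = 0$. From this I get
\[
[A_k,P_k] = \dot P P - 2P\dot P P + P\dot P = \dot P = \partial_k P_k .
\]
I would then set $X_k := \mathcal{T}_k^* P_k \mathcal{T}_k$ and compute
\[
\partial_k X_k = \mathcal{T}_k^*\bigl(A_k^* P_k + \partial_k P_k + P_k A_k\bigr)\mathcal{T}_k = \mathcal{T}_k^*\bigl(\partial_k P_k - [A_k,P_k]\bigr)\mathcal{T}_k = 0 .
\]
Thus $X_k = X_0 = P_0$, and multiplying on the left by $\mathcal{T}_k$ gives $P_k\mathcal{T}_k = \mathcal{T}_kP_0$. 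The main obstacle here is simply getting the signs in $[A,P]=\dot P$ right; the identity $P\dot P P=0$ is what makes it work.

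\emph{Telescopic property.} This follows from uniqueness. If $P$ is $2\pi$-periodic, then $A_{k+2\pi}=A_k$. Define $S_k := \mathcal{T}_{k+2\pi}$; it solves $\partial_k S_k = A_k S_k$ with $S_0=\mathcal{T}_{2\pi}$. The function $\mathcal{T}_k\mathcal{T}_{2\pi}$ solves the same equation with the same initial value, so by uniqueness $\mathcal{T}_{k+2\pi}=\mathcal{T}_k\mathcal{T}_{2\pi}$. Induction on $n\ge0$ gives $\mathcal{T}_{k+2\pi n}=\mathcal{T}_k(\mathcal{T}_{2\pi})^n$. For negative $n$, I apply the identity at the point $k-2\pi$ to get $\mathcal{T}_k=\mathcal{T}_{k-2\pi}\mathcal{T}_{2\pi}$, hence $\mathcal{T}_{k-2\pi}=\mathcal{T}_k\mathcal{T}_{2\pi}^{-1}$, using the invertibility provided by unitarity, and then induct.
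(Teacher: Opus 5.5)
Your proof is correct and is essentially the standard argument that the paper defers to; the paper only cites \cite[Theorem III.3]{Monaco_2023}. That argument uses linear ODE theory for existence, uniqueness and regularity, anti-Hermiticity of $[\partial_k P_k, P_k]$ for unitarity, the identity $[[\partial_k P_k, P_k], P_k] = \partial_k P_k$ (from $P_k(\partial_k P_k)P_k = 0$) for intertwining, and uniqueness plus periodicity for the telescopic property. Your unitarity step implicitly uses that the $P_k$ are orthogonal (self-adjoint) projections, which is the intended reading here.
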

\begin{proof}
The proof can be found in \cite[Theorem III.3]{Monaco_2023}.
\end{proof}
We note that the uniqueness of the solution of the Cauchy problem of Theorem \ref{TP} implies that the parallel transport operators associated with the families \( P^{(-)}_k \) and \( P^{(+)}_k = 1_{\mathbb{C}^N} - P^{(-)}_k \) coincide, since they have the same generator of the linear differential equation:
\begin{equation}\label{eqgen}
    \left[ \partial_k P^{(+)}_k, P^{(+)}_k \right] = \left[ \partial_k (1_{\mathbb{C}^N} - P^{(-)}_k), 1_{\mathbb{C}^N} - P^{(-)}_k \right] = \left[ \partial_k P^{(-)}_k, P^{(-)}_k \right].
\end{equation}
Using the same uniqueness property it is possible to show the following result.
\begin{Proposition}[Discrete symmetries and parallel transport]\label{propcomm}
    Let $\{P_k^{(-)}\}$ and $\{P_k^{(+)}\}$ be as in \eqref{eq:riesz} and \eqref{P+}. Let $\{\mathcal{T}_k\}$ be the family of parallel transport unitaries associated with $\{P_k^{(-)}\}$. Then:
\begin{enumerate}
    \item for $\mathfrak{O}^{(+)} \in \left\{\mathfrak{O}^{(++)},\mathfrak{O}^{(+-)}\right\}$, i.e. $\mathfrak{O}^{(+)} \in \{1_{\mathbb{C}^N},\mathfrak{S},\mathfrak{T}\mathfrak{C},\mathfrak{T}\mathfrak{C}\mathfrak{S}\}$, one has 
    $$\mathfrak{O}^{(+)} \mathcal{T}_k = \mathcal{T}_k\mathfrak{O}^{(+)};$$
    
    \item for $\mathfrak{O}^{(-)} \in \left\{\mathfrak{O}^{(-+)},\mathfrak{O}^{(--)}\right\}$, i.e. $\mathfrak{O}^{(-)} \in \{\mathfrak{T},\mathfrak{C},\mathfrak{T}\mathfrak{S},\mathfrak{C}\mathfrak{S}\}$, one has
    $$\mathfrak{O}^{(-)} \mathcal{T}_k = \mathcal{T}_{-k}\mathfrak{O}^{(-)}.$$
\end{enumerate}
\end{Proposition}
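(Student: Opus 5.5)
Both claims will follow from uniqueness of solutions to the Cauchy problem in Theorem~\ref{TP}. In each case I build a second family of operators, show it solves the same linear ODE with the same initial datum $1_{\mathbb{C}^N}$, and conclude that it equals $\mathcal{T}_k$. Write $A_k := [\partial_k P^{(-)}_k, P^{(-)}_k]$ for the generator. The work then reduces to computing how each symmetry $\mathfrak{O}$ conjugates $A_k$.

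\textbf{Step 1: action on the generator.} By Proposition~\ref{DSSP}, $\mathfrak{O}^{(\varepsilon_k\varepsilon_E)} P^{(-)}_k (\mathfrak{O}^{(\varepsilon_k\varepsilon_E)})^{-1} = P^{(-\varepsilon_E)}_{\varepsilon_k k}$. Conjugation by a $k$-independent operator, whether linear or antilinear, commutes with $\partial_k$, since $k$ is a real parameter. Hence
\[
\mathfrak{O}\,\partial_k P^{(-)}_k\,\mathfrak{O}^{-1} = \partial_k\bigl(P^{(-\varepsilon_E)}_{\varepsilon_k k}\bigr) = \varepsilon_k\,(\partial_q P^{(-\varepsilon_E)}_q)\big|_{q=\varepsilon_k k}.
\]
Conjugation also preserves commutators, so
\[
\mathfrak{O} A_k \mathfrak{O}^{-1} = \varepsilon_k\,[\partial_q P^{(-\varepsilon_E)}_q, P^{(-\varepsilon_E)}_q]\big|_{q=\varepsilon_k k} = \varepsilon_k A_{\varepsilon_k k},
\]
where the last equality is the observation \eqref{eqgen}: the generators for $P^{(+)}$ and $P^{(-)}$ coincide. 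The energy sign $\varepsilon_E$ therefore drops out, which is why the statement depends only on $\varepsilon_k$.

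\textbf{Step 2: uniqueness.} Set $\widetilde{\mathcal{T}}_k := \mathfrak{O}^{-1}\mathcal{T}_{\varepsilon_k k}\mathfrak{O}$. Then $\widetilde{\mathcal{T}}_0 = 1_{\mathbb{C}^N}$. Using Step~1 in the form $\mathfrak{O}^{-1}A_q\mathfrak{O} = \varepsilon_k A_{\varepsilon_k q}$ (apply Step~1 to $\mathfrak{O}^{-1}$, or rearrange it with $q=\varepsilon_k k$), we compute
\[
\partial_k \widetilde{\mathcal{T}}_k = \varepsilon_k\,\mathfrak{O}^{-1} A_{\varepsilon_k k}\mathcal{T}_{\varepsilon_k k}\mathfrak{O} = \varepsilon_k\,(\mathfrak{O}^{-1}A_{\varepsilon_k k}\mathfrak{O})\,\widetilde{\mathcal{T}}_k = \varepsilon_k^2 A_k\widetilde{\mathcal{T}}_k = A_k\widetilde{\mathcal{T}}_k.
\]
So $\widetilde{\mathcal{T}}_k$ solves the Cauchy problem of Theorem~\ref{TP}, and uniqueness gives $\widetilde{\mathcal{T}}_k = \mathcal{T}_k$, that is, $\mathcal{T}_{\varepsilon_k k}\mathfrak{O} = \mathfrak{O}\mathcal{T}_k$. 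For $\varepsilon_k = +$ this is item~1. For $\varepsilon_k = -$ it reads $\mathfrak{O}\mathcal{T}_k = \mathcal{T}_{-k}\mathfrak{O}$, which is item~2.

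\textbf{Main obstacle.} The delicate point is the antiunitary case. One must check that differentiating $\mathfrak{O}^{-1}\mathcal{T}_{-k}\mathfrak{O}$ behaves as in the linear case, with the real derivative passing through the antilinear conjugation, and that the uniqueness statement of Theorem~\ref{TP} still applies. It does, because $\widetilde{\mathcal{T}}_k$ is again a complex-linear matrix: a product of two antilinear maps around a linear one is linear. One must also confirm that the inverse relation $\mathfrak{O}^{-1}P^{(\pm)}_k\mathfrak{O} = P^{(\pm\varepsilon_E)}_{\varepsilon_k k}$ holds. This follows from Proposition~\ref{DSSP} because $\varepsilon_k,\varepsilon_E$ are involutive signs. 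The one trivial case, $\mathfrak{O}^{(++)} = 1_{\mathbb{C}^N}$, needs no argument.
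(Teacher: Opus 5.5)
Your proposal is correct and is essentially the paper's argument: conjugate $\mathcal{T}_{\varepsilon_k k}$ by the symmetry, use Proposition~\ref{DSSP} together with \eqref{eqgen} to show it solves the same Cauchy problem as $\mathcal{T}_k$, and conclude by uniqueness. Treating both cases at once via $\varepsilon_k$ only makes explicit the chain-rule sign (through $\varepsilon_k^2=1$), which the paper absorbs by differentiating the composite map $k\mapsto P^{(-)}_{-k}$ directly.
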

\begin{proof}
    We take advantage once again of the uniqueness of the solution to the Cauchy problem. For the first case, let us define $$U_k:=\left(\mathfrak{O}^{(+)} \right)^{-1}\mathcal{T}_k\mathfrak{O}^{(+)};$$  
differentiating, using Proposition \ref{DSSP} and \eqref{eqgen}, we get
\begin{equation*}
\begin{aligned}
    \partial_kU_k&=\left(\mathfrak{O}^{(+)} \right)^{-1}(\partial_k\mathcal{T}_k)\mathfrak{O}^{(+)}=\left(\mathfrak{O}^{(+)} \right)^{-1} \left[ \partial_k P^{(-)}_k, P^{(-)}_k \right] \mathcal{T}_k\mathfrak{O}^{(+)}=\\
    &=\left[\left(\mathfrak{O}^{(+)} \right)^{-1}\partial_k P_k^{(-)}\mathfrak{O}^{(+)},\left(\mathfrak{O}^{(+)} \right)^{-1}P_k^{(-)}\mathfrak{O}^{(+)}\right] \left(\mathfrak{O}^{(+)} \right)^{-1}\mathcal{T}_k\mathfrak{O}^{(+)}=\\
    &=\left[\partial_kP_k^{(+/-)},P_k^{(+/-)}\right]U_k;
\end{aligned}
\end{equation*} 
moreover, we have $U_0=\left(\mathfrak{O}^{(+)} \right)^{-1}\mathcal{T}_0\mathfrak{O}^{(+)}=1_{\mathbb{C}^N}.$ This means that $U_k$ solves the same Cauchy problem of $\mathcal{T}_k$, hence
$$\left(\mathfrak{O}^{(+)} \right)^{-1}\mathcal{T}_k\mathfrak{O}^{(+)}=\mathcal{T}_k.$$
Similarly, let us define $$U_k:=\left(\mathfrak{O}^{(-)} \right)^{-1}\mathcal{T}_{-k}\mathfrak{O}^{(-)};$$ we have
\begin{equation*}
\begin{aligned}
    \partial_kU_k&=\left(\mathfrak{O}^{(-)} \right)^{-1}(\partial_k\mathcal{T}_{-k})\mathfrak{O}^{(-)}=\left(\mathfrak{O}^{(-)} \right)^{-1} \left[ \partial_k P^{(-)}_{-k}, P^{(-)}_{-k} \right] \mathcal{T}_{-k}\mathfrak{O}^{(-)}=\\
    &=\left[\left(\mathfrak{O}^{(-)} \right)^{-1}\partial_k P_{-k}^{(-)}\mathfrak{O}^{(-)},\left(\mathfrak{O}^{(-)} \right)^{-1}P_{-k}^{(-)}\mathfrak{O}^{(-)}\right] \left(\mathfrak{O}^{(-)} \right)^{-1}\mathcal{T}_{-k}\mathfrak{O}^{(-)}=\\
    &=\left[\partial_kP_k^{(+/-)},P_k^{(+/-)}\right]U_k;
\end{aligned}
\end{equation*} 
moreover, we have $U_0=\left(\mathfrak{O}^{(+)} \right)^{-1}\mathcal{T}_0\mathfrak{O}^{(+)}=1_{\mathbb{C}^N}.$ Therefore 
\[\left(\mathfrak{O}^{(-)} \right)^{-1}\mathcal{T}_{-k}\mathfrak{O}^{(-)}=\mathcal{T}_k.\qedhere\]
\end{proof}
Our goal is now to exhibit a symmetric Bloch basis. The following result shows the explicit construction.
\begin{Theorem}[Existence of a symmetric Bloch basis]\label{exbas}
    Let \( P^{(-)}_k \in \mathrm{M}_N(\mathbb{C}) \) be the rank-\( m \) eigenprojection defined by the Riesz formula in \eqref{eq:riesz}. Let also   \( \{v_1(0), ..., v_N(0)\} \in \mathbb{C}^N \) be an orthonormal basis of eigenvectors of \( H_0 \) such that  
\( \{v_1(0), ..., v_m(0)\} \) spans \( \operatorname{Range}\left(P^{(-)}_0\right) \) while the others span $\mathrm{Ker}\left(P^{(-)}_0\right) = \operatorname{Range}\left(1_{\mathbb{C}^N} - P^{(+)}_0\right)$.  
Finally, denote by \( \{\mathcal{T}_k\} \) the family of parallel transport unitaries associated with \( \left\{P^{(-)}_k\right\} \).  
Then the family of vectors
\begin{equation} \label{eq:BlochBasis}
v_j(k) := \mathcal{T}_k e^{-i k X / 2\pi} v_j(0), \qquad j \in \{1, \ldots, N\},    
\end{equation}
where 
$$X:=-i \ln(\mathcal{T}_{2\pi}),$$ is a symmetric Bloch basis.
\end{Theorem}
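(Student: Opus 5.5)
The proof splits into two parts. First we check that \eqref{eq:BlochBasis} gives a Bloch basis. Then we check that the symmetry relations of Definition~\ref{SBB} hold, assuming the initial basis $\{v_j(0)\}$ has been chosen to satisfy them at $k=0$, as remarked after that Definition. Throughout, $X$ is defined through a fixed branch of the logarithm on the unitary $\mathcal{T}_{2\pi}$: diagonalize it spectrally and take $\arg$ in a fixed interval, say $(-\pi,\pi]$. Then $X$ is self-adjoint and $e^{iX}=\mathcal{T}_{2\pi}$.

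\textbf{Bloch basis.} Orthonormality at each $k$ holds because $\mathcal{T}_k$ and $e^{-ikX/2\pi}$ are unitary (Theorem~\ref{TP}(1)). Regularity in $k$ is inherited from $\mathcal{T}_k$.

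Periodicity follows from the telescopic property:
\[
v_j(k+2\pi)=\mathcal{T}_k\mathcal{T}_{2\pi}e^{-ikX/2\pi}e^{-iX}v_j(0)=v_j(k),
\]
since $\mathcal{T}_{2\pi}$ commutes with $X$.

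For the range condition, periodicity of $P^{(-)}$ and the intertwining property give $[\mathcal{T}_{2\pi},P^{(-)}_0]=0$. Hence $P^{(-)}_0$ commutes with every spectral projection of $\mathcal{T}_{2\pi}$, so with $X$ and with $e^{-ikX/2\pi}$. Intertwining then gives
\[
P^{(-)}_k v_j(k)=\mathcal{T}_kP^{(-)}_0e^{-ikX/2\pi}v_j(0)=\mathcal{T}_k e^{-ikX/2\pi}P^{(-)}_0v_j(0),
\]
which equals $v_j(k)$ for $j\le m$ and $0$ for $j>m$.

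\textbf{Symmetry relations.} For a unitary $\mathfrak{O}^{(+)}$, Proposition~\ref{propcomm} gives $[\mathfrak{O}^{(+)},\mathcal{T}_k]=0$, so in particular $\mathfrak{O}^{(+)}$ commutes with $\mathcal{T}_{2\pi}$. By functional calculus it then commutes with $X$ and with $e^{-ikX/2\pi}$. Therefore
\[
\mathfrak{O}^{(+)}v_j(k)=\mathcal{T}_ke^{-ikX/2\pi}\mathfrak{O}^{(+)}v_j(0),
\]
and the $k=0$ relations transport to all $k$, for both $\mathfrak{O}^{(++)}$ and $\mathfrak{O}^{(+-)}$.

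For an antiunitary $\mathfrak{O}^{(-)}$, Proposition~\ref{propcomm} gives $\mathfrak{O}^{(-)}\mathcal{T}_k=\mathcal{T}_{-k}\mathfrak{O}^{(-)}$. Combining this with the telescopic property at $k=-2\pi$,
\[
\mathfrak{O}^{(-)}\mathcal{T}_{2\pi}(\mathfrak{O}^{(-)})^{-1}=\mathcal{T}_{-2\pi}=\mathcal{T}_{2\pi}^{-1}.
\]
So $\mathfrak{O}^{(-)}$ maps the eigenspace of $\mathcal{T}_{2\pi}$ with eigenvalue $e^{i\theta}$ to the eigenspace with eigenvalue $e^{-i\theta}$ (antilinearity conjugates the eigenvalue and $\overline{e^{i\theta}}=e^{-i\theta}$, consistent with $\mathcal{T}_{2\pi}^{-1}$). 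If the branch is chosen so that $\theta\mapsto-\theta$ preserves it, we get $\mathfrak{O}^{(-)}X=X\mathfrak{O}^{(-)}$. Antilinearity then gives
\[
\mathfrak{O}^{(-)}e^{-ikX/2\pi}=e^{ikX/2\pi}\mathfrak{O}^{(-)} .
\]
This yields
\[
\mathfrak{O}^{(-)}v_j(k)=\mathcal{T}_{-k}e^{-i(-k)X/2\pi}\mathfrak{O}^{(-)}v_j(0),
\]
and the $k=0$ relations for $\mathfrak{O}^{(-+)}$ and $\mathfrak{O}^{(--)}$ transfer to $\pm k$. In the quaternionic case the transfer also works, because the reshuffled sum $\sum_\ell v_\ell(0)\varepsilon_{j\ell}$ has real coefficients $\varepsilon_{j\ell}$; this should be stated, i.e.\ $\varepsilon$ is taken real and antisymmetric.

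\textbf{Main obstacle.} The delicate step is the branch of the logarithm. The symmetric choice fails exactly at eigenvalue $-1$: $\arg\in(-\pi,\pi]$ is not invariant under $\theta\mapsto-\theta$ at $\theta=\pi$. On the eigenspace of $\mathcal{T}_{2\pi}$ for $-1$, $X$ acts as $\pi$, but the conjugate by $\mathfrak{O}^{(-)}$ would be $-\pi$, so the two exponentials differ by the factor $e^{\pm ik/2}$.

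The first fix to try is to show that $\mathfrak{O}^{(-)}$ preserves this eigenspace $E_{-1}$; it does, since $-1$ is real. Then one has to check whether the required relation still holds there. One has $\mathfrak{O}^{(-)}e^{-ikX/2\pi}=e^{-ik/2}\mathfrak{O}^{(-)}$ on $E_{-1}$, while the target is $e^{ik/2}\mathfrak{O}^{(-)}$. This is the point to settle carefully. Possible remedies are to redefine $X$ on $E_{-1}$ via a $\mathfrak{O}^{(-)}$-compatible splitting, or to accept a sign ambiguity that later feeds into the $\mathbb{Z}_2$ invariant.
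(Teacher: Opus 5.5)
Your verification of the Bloch-basis properties (unitarity, periodicity via the telescopic property, and the range condition via $[\mathcal{T}_{2\pi},P^{(-)}_0]=0$) is correct, and so is your treatment of the unitary symmetries $\mathfrak{O}^{(+)}$. This follows the paper, which defers the first part to \cite{Monaco_2023}. The gap is in the antiunitary case. You leave the proof unfinished at what you call the main obstacle, and that obstacle comes from two mistakes.

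First, $\mathfrak{O}^{(-)}$ does not send the eigenspace $E_{e^{i\theta}}$ of $\mathcal{T}_{2\pi}$ to $E_{e^{-i\theta}}$; it maps every eigenspace to itself. If $\mathcal{T}_{2\pi}v=e^{i\theta}v$, then
\[
\mathcal{T}_{2\pi}^{-1}\,\mathfrak{O}^{(-)}v=\mathfrak{O}^{(-)}\mathcal{T}_{2\pi}v=\mathfrak{O}^{(-)}\bigl(e^{i\theta}v\bigr)=e^{-i\theta}\,\mathfrak{O}^{(-)}v ,
\]
so $\mathcal{T}_{2\pi}\mathfrak{O}^{(-)}v=e^{i\theta}\mathfrak{O}^{(-)}v$. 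The conjugation from antilinearity and the inversion $\mathcal{T}_{2\pi}\mapsto\mathcal{T}_{2\pi}^{-1}$ cancel. The picture you used is the one for an antiunitary that \emph{commutes} with $\mathcal{T}_{2\pi}$. Even within that picture, a branch invariant under $\theta\mapsto-\theta$ would give $\mathfrak{O}^{(-)}X=-X\mathfrak{O}^{(-)}$, so your step to commutation did not follow there either.

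Second, your computation on $E_{-1}$ forgets antilinearity. With $X=\pi$ on $E_{-1}$ and $\mathfrak{O}^{(-)}E_{-1}=E_{-1}$,
\[
\mathfrak{O}^{(-)}e^{-ikX/2\pi}v=\mathfrak{O}^{(-)}\bigl(e^{-ik/2}v\bigr)=e^{ik/2}\,\mathfrak{O}^{(-)}v=e^{ikX/2\pi}\mathfrak{O}^{(-)}v ,
\]
which is exactly the target. So there is no obstruction at the eigenvalue $-1$, and no sign ambiguity to push into the $\mathbb{Z}_2$ invariant.

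The missing step is short. Since $\mathfrak{O}^{(-)}$ preserves each summand of $\mathbb{C}^N=\bigoplus_\lambda E_\lambda$, it commutes with every spectral projection $\Pi_\lambda$ of $\mathcal{T}_{2\pi}$. Real scalars pass through antilinear maps, so $\mathfrak{O}^{(-)}$ commutes with $X=\sum_\lambda\varphi(\lambda)\Pi_\lambda$ for \emph{any} real-valued branch $\varphi$ of $-i\ln$. This includes both the paper's branch $[0,2\pi)$ and your $(-\pi,\pi]$. Antilinearity then gives $\mathfrak{O}^{(-)}e^{-ikX/2\pi}=e^{ikX/2\pi}\mathfrak{O}^{(-)}$ on all of $\mathbb{C}^N$, and your transfer of the $k=0$ relations goes through verbatim. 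The same fixed $X$ also commutes with $P^{(-)}_0$ and with the unitary symmetries, so one choice serves all cases.

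This is exactly the point the paper only asserts (``there exists a choice of $X$''). The correct relation is commutation, $\mathfrak{O}^{(-)}X=X\mathfrak{O}^{(-)}$, or equivalently anticommutation of $\mathfrak{O}^{(-)}$ with $\ln\mathcal{T}_{2\pi}=iX$. It is in this form that the paper actually uses it.

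Finally, you do not need $\varepsilon$ to be real. The reshuffling is applied at $k=0$, and afterwards only the linear operator $\mathcal{T}_{-k}e^{ikX/2\pi}$ acts, so complex coefficients pass through unchanged.
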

\begin{proof}
The proof follows that of \cite[Theorem III.6]{Monaco_2023}, which focuses only on chiral and particle-hole symmetric projections. Let us first discuss the definition of $X$. We focus on the unitary matrix \( \mathcal{T}_{2\pi} \in \mathrm{M}_N(\mathbb{C}) \), called the holonomy unitary. By the spectral theorem it can be diagonalized by a unitary transformation \( V \),
\[
\mathcal{T}_{2\pi} = V^{-1}
\begin{bmatrix}
e^{i\varphi_1} & & 0 \\
& \ddots & \\
0 & & e^{i\varphi_N}
\end{bmatrix}
V, \qquad \varphi_j \in [0, 2\pi), \ \forall \ j \in \{1, \dots, N\}.
\]
With this choice of the phases for the eigenvalues of \( \mathcal{T}_{2\pi} \), we define \( X \in \mathrm{M}_N(\mathbb{C}) \) to be the self-adjoint matrix
\[
X := -i \ln (\mathcal{T}_{2\pi}) = V^{-1}
\begin{bmatrix}
\varphi_1 & & 0 \\
& \ddots & \\
0 & & \varphi_N
\end{bmatrix}
V,
\]
 so that \( \mathcal{T}_{2\pi} = e^{iX} \).

The spanning property, regularity, and periodicity of the vectors in \eqref{eq:BlochBasis} are established as in \cite[Theorem III.6]{Monaco_2023}. We therefore focus on the symmetry properties: as stated after Definition~\ref{SBB}, they can be assumed to hold at $k=0$. First, notice that the commutation relations in Proposition~\ref{propcomm} can be inherited by the logarithm of the holonomy unitary, namely there exists a choice of $X$ as above such that
\[ \mathfrak{O}^{(\pm)} X = \pm X \mathfrak{O}^{(\pm)}. \]
Next, we consider the four cases $\mathfrak{O}^{(++)},\mathfrak{O}^{(+-)},\mathfrak{O}^{(-+)},\mathfrak{O}^{(--)}$ separately. Let us first consider the case of $\mathfrak{O}^{(++)}$. Since $\mathfrak{O}^{(++)}$ commutes with $\mathcal{T}_k$, in particular it commutes with $\mathcal{T}_{2\pi}$ and, by functional calculus, with $e^{-ikX/2\pi}$. Therefore
\begin{equation*}
\begin{aligned}
\mathfrak{O}^{(++)}v_j(k)&=\mathfrak{O}^{(++)}\mathcal{T}_ke^{-ikX/2\pi}v_j(0)=\mathcal{T}_ke^{-ikX/2\pi}\mathfrak{O}^{(++)}v_j(0)=\\
&=\mathcal{T}_ke^{-ikX/2\pi}v_{j}(0)
=v_{j}(k).
\end{aligned}  
\end{equation*}
The case $\mathfrak{O}^{(+-)}$ is similar since it also commutes with $\mathcal{T}_k$ but we have
\begin{equation*}
\begin{aligned}
\mathfrak{O}^{(+-)}v_j(k)&=\mathfrak{O}^{(+-)}\mathcal{T}_ke^{-ikX/2\pi}v_j(0)=\mathcal{T}_ke^{-ikX/2\pi}\mathfrak{O}^{(+-)}v_j(0)
=\\ &=\mathcal{T}_ke^{-ikX/2\pi}v_{N-j+1}(0)=v_{N-j+1}(k).
\end{aligned}  
\end{equation*}
In the case of $\mathfrak{O}^{(-+)}$, it commutes with $\mathcal{T}_k$ up to changing sign to $k$, hence, by functional calculus and the previous choice of $X$, the same holds with $e^{-ikX/2\pi}$. Therefore if $\left(\mathfrak{O}^{(-+)}\right)^2=+1_{\mathbb{C}^{2m}}$ we have
\begin{equation*}
\begin{aligned}
\mathfrak{O}^{(-+)}v_j(k)&=\mathfrak{O}^{(-+)}\mathcal{T}_ke^{-ikX/2\pi}v_j(0)=\mathcal{T}_{-k}e^{ikX/2\pi}\mathfrak{O}^{(-+)}v_j(0)=\\
&=\mathcal{T}_{-k}e^{ikX/2\pi}v_{j}(0)=v_{j}(-k);
\end{aligned}  
\end{equation*}
while if $\left(\mathfrak{O}^{(-+)}\right)^2=-1_{\mathbb{C}^{2m}}$ we have
\begin{equation*}
\begin{aligned}
\mathfrak{O}^{(-+)}v_j(k)&=\mathfrak{O}^{(-+)}\mathcal{T}_ke^{-ikX/2\pi}v_j(0)=\mathcal{T}_{-k}e^{ikX/2\pi}\mathfrak{O}^{(-+)}v_j(0)=\\
&=\mathcal{T}_{-k}e^{ikX/2\pi}\sum_{\ell=1}^Nv_{\ell}(0)\varepsilon_{j\ell}=\sum_{\ell=1}^N\mathcal{T}_{-k}e^{ikX/2\pi}v_{\ell}(0)\varepsilon_{j\ell}=\\
&=\sum_{\ell=1}^Nv_{\ell}(-k)\varepsilon_{j\ell}.
\end{aligned}  
\end{equation*}
Finally, in a similar way we have
\begin{equation*}
\begin{aligned}
\mathfrak{O}^{(--)}v_j(k)&=\mathfrak{O}^{(--)}\mathcal{T}_ke^{-ikX/2\pi}v_j(0)=\mathcal{T}_{-k}e^{ikX/2\pi}\mathfrak{O}^{(--)}v_j(0)=\\ &=\mathcal{T}_{-k}e^{ikX/2\pi}v_{N-j+1}(0)=v_{N-j+1}(-k). \qedhere
\end{aligned}  
\end{equation*}
\end{proof}
\section{The invariant}\label{4.3}
The topological invariant associated with the spectral eigenprojections $P_k^{(-)}$ of the fiber Hamiltonians $H_k$ will be extracted from any symmetric Bloch basis, whose existence is guaranteed by Theorem \ref{exbas}, starting form the abelian
Zak phase \cite{zak1989berry} of the Bloch basis.
\subsection{The Zak phase and its properties} \label{def:ZakPhase}
We start by recalling the definition of the abelian Zak phase.
\begin{Definition}[Zak phase]
    Let \( P_k \in \mathrm{M}_N(\mathbb{C}) \) be a \( 2\pi \)-periodic family of rank-\( m \) projections which is at least \( C^1 \) in \( k \), and let \( \{v_j(k)\}_{j=1}^N \) be a Bloch basis associated with \( P_k \). The abelian Zak phase is defined as
\[
\mathcal{Z}^{(N)}_{\{v_j(k)\}}:=\frac{1}{2\pi i} \int_{S^1} \sum_{j=1}^{N} \left\langle v_j(k), \partial_k v_j(k) \right\rangle dk.
\]
\end{Definition}
The Zak phases of the Bloch basis constructed via parallel transport in Theorem \ref{exbas} is always an integer: this is stated in 
\begin{Theorem}
\label{BFBB}
    Let $$v_j(k) := \mathcal{T}_k e^{-i k X / 2\pi} v_j(0), \qquad j \in \{1, \ldots, N\}$$ be the Bloch basis from Theorem \ref{exbas} with $$X:=-i \ln \mathcal{T}_{2\pi}.$$ Then
\[
\mathcal{Z}^{(N)}_{\{v_j(k)\}}=\frac{1}{2\pi i} \int_{S^1} \sum_{j=1}^{N} \langle v_{j}(k), \partial_k v_{j}(k) \rangle dk = -\frac{1}{2\pi} \operatorname{tr} (X) \in \mathbb{Z}.
\]
\end{Theorem}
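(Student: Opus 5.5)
The approach is to write the integrand as a trace, which the full orthonormal basis makes possible. By definition $v_j(k) = \mathcal{T}_k e^{-ikX/2\pi} v_j(0)$, and $X$ is $k$-independent. The product rule then gives
\[
\partial_k v_j(k) = (\partial_k \mathcal{T}_k)\, e^{-ikX/2\pi} v_j(0) \;-\; \frac{i}{2\pi}\, \mathcal{T}_k e^{-ikX/2\pi} X v_j(0).
\]
Here I use that $X$ commutes with $e^{-ikX/2\pi}$, by functional calculus.

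Set $w_j(k) := e^{-ikX/2\pi} v_j(0)$. Since $\mathcal{T}_k$ is unitary (Theorem~\ref{TP}, property 1) and $e^{-ikX/2\pi}$ is unitary ($X$ is self-adjoint), both $\{w_j(k)\}_{j=1}^N$ and $\{v_j(k)\}_{j=1}^N$ are orthonormal bases of $\mathbb{C}^N$. Therefore
\[
\sum_{j=1}^N \langle v_j(k), \partial_k v_j(k)\rangle
= \sum_{j=1}^N \langle w_j(k), \mathcal{T}_k^* (\partial_k\mathcal{T}_k)\, w_j(k)\rangle - \frac{i}{2\pi}\sum_{j=1}^N \langle v_j(0), X v_j(0)\rangle .
\]
This equals $\operatorname{tr}\!\left(\mathcal{T}_k^*\partial_k \mathcal{T}_k\right) - \frac{i}{2\pi}\operatorname{tr}(X)$, because a trace may be computed in any orthonormal basis.

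The first term vanishes. By the Cauchy problem in Theorem~\ref{TP},
\[
\operatorname{tr}\!\left(\mathcal{T}_k^*\partial_k\mathcal{T}_k\right) = \operatorname{tr}\!\left(\mathcal{T}_k^*[\partial_kP_k,P_k]\mathcal{T}_k\right) = \operatorname{tr}\,[\partial_kP_k,P_k] = 0,
\]
since the trace of a commutator is zero. So the integrand is the constant $-\frac{i}{2\pi}\operatorname{tr}(X)$. The vectors $v_j(k)$ are $2\pi$-periodic by Theorem~\ref{exbas}, so integrating over $S^1$ is integrating over $[0,2\pi]$. This gives
\[
\frac{1}{2\pi i}\cdot 2\pi\cdot\left(-\frac{i}{2\pi}\operatorname{tr} X\right) = -\frac{1}{2\pi}\operatorname{tr}(X),
\]
which is the claimed formula.

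It remains to prove integrality, which I expect to be the only non-routine point. I would use Liouville's formula for the linear ODE of Theorem~\ref{TP}:
\[
\partial_k \det\mathcal{T}_k = \operatorname{tr}\!\left([\partial_kP_k,P_k]\right)\det\mathcal{T}_k = 0 .
\]
Together with $\mathcal{T}_0 = 1_{\mathbb{C}^N}$ this yields $\det \mathcal{T}_{2\pi} = 1$. On the other hand, $\mathcal{T}_{2\pi} = e^{iX}$, so $\det\mathcal{T}_{2\pi} = e^{i\operatorname{tr} X} = e^{i\sum_j\varphi_j}$. Hence $\operatorname{tr} X = \sum_j \varphi_j \in 2\pi\mathbb{Z}$, and $-\frac{1}{2\pi}\operatorname{tr}(X) \in \mathbb{Z}$. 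Because each $\varphi_j \in [0,2\pi)$, this integer additionally lies in $\{-(N-1),\dots,0\}$, though the statement does not require this.
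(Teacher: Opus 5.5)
Your proof is correct. The paper gives no argument of its own and defers to \cite[Theorem IV.3]{Monaco_2023}; your direct route (product rule, tracelessness of the generator $[\partial_k P_k,P_k]$ leaving the constant integrand $-\frac{i}{2\pi}\operatorname{tr}(X)$, and Liouville's formula giving $\det\mathcal{T}_{2\pi}=1$, hence $\operatorname{tr}(X)\in 2\pi\mathbb{Z}$) is the standard way to prove it. Integrality can also be read off from your computation, in the form the paper itself uses for $W_k=\mathcal{T}_k e^{-ikX/2\pi}$ in Section~\ref{quad0}: the Zak phase equals $\frac{1}{2\pi i}\int_{S^1}\operatorname{tr}(W_k^*\partial_k W_k)\,dk=\mathrm{wn}([W])$, and $W_k$ is a $2\pi$-periodic unitary.
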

\begin{proof}
The proof can be found in \cite[Theorem IV.3]{Monaco_2023}.
\end{proof}

We are now interested in considering a change of Bloch basis. We first give the following Definition.
\begin{Definition}[Bloch gauge]
    Let $\{v_j(k)\}_{j=1}^N$ be a Bloch basis and let $\{u_j(k)\}_{j=1}^N$ be another Bloch basis. The change of basis matrix $B_k$ such that $u_j(k)=B_kv_j(k)$ for all $i \in \{1,. . . , N\}$ is called Bloch gauge.
\end{Definition}
By definition, a Bloch basis $\{u_j(k)\}_{j=1}^N$ associated to $P^{(-)}_k$ has the first $m$ vectors in $\operatorname{Range}\left(P^{(-)}_k\right)$ and the last $N - m$ in $\operatorname{Range} \left(P^{(+)}_k\right)$. Therefore, the Bloch gauge $B_k$ which maps the Bloch basis $\{v_j(k)\}_{j=1}^N$ to $\{u_j(k)\}_{j=1}^N$ has a block-diagonal form in the decomposition \[\mathbb{C}^N = \operatorname{Range}\left(P^{(-)}_k\right) \oplus \operatorname{Range}\left(P^{(+)}_k\right),\] i.e.
\[
B_k =
\begin{bmatrix}
B^{(-)}_k & 0 \\
0 & B^{(+)}_k
\end{bmatrix};
\]
where $B^{(\pm)}_k = P^{(\pm)}_k B_k P^{(\pm)}_k$, seen as a linear operator on $\operatorname{Range} \left(P^{(\pm)}_k\right)$. In particular,
\[
u_j(k) =
\begin{cases}
B^{(-)}_k v_j(k)  & \qquad \text{if } \quad j \in \{1, \ldots, m\}, \\
B^{(+)}_k v_j(k) & \qquad \text{if } \quad j \in \{m+1, \ldots, N\}.
\end{cases}
\]
The next result shows that, when a change of Bloch basis is performed, the Zak phase changes by an integer. 
\begin{Theorem}[Bloch basis change and Zak phase]\label{bbcbf}
Let $\{v_j(k)\}_{j=1}^N$ be a Bloch basis and let $\{u_j(k)\}_{j=1}^N$ be another Bloch basis. Let $B_k$ the corresponding Bloch gauge. Then $B_k$ is unitary, $2\pi$-periodic and as regular in $k$ as the Bloch bases. Moreover,
\begin{equation} \label{eqn:wnB}
\mathcal{Z}^{(N)}_{\{u_j(k)\}}-\mathcal{Z}^{(N)}_{\{v_j(k)\}}
= \mathrm{wn}([B]) \in \mathbb{Z},
\end{equation}
where $\mathrm{wn}([B])$ denotes the winding number of the map $$B : S^1 \to U(N), \quad k \mapsto B_k,$$
compare Appendix~\ref{appendixC}.
\end{Theorem}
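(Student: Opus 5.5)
We first establish the properties of $B_k$ and then compute the difference of Zak phases through a determinant identity. Fix $k$. Both $\{v_j(k)\}$ and $\{u_j(k)\}$ are orthonormal bases of $\mathbb{C}^N$, so the linear map $B_k$ defined by $B_k v_j(k) = u_j(k)$ sends an orthonormal basis to an orthonormal basis and is therefore unitary. Explicitly, we can write
\[ B_k = \sum_{j=1}^N |u_j(k)\rangle\langle v_j(k)|. \]
From this formula, $2\pi$-periodicity and regularity of $B_k$ follow directly from the corresponding properties of the two Bloch bases. Hence $B : S^1 \to U(N)$ is a well-defined continuous (indeed $C^1$) loop, and $\mathrm{wn}([B])$ makes sense. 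By Appendix~\ref{appendixC} it can be computed as
\[ \mathrm{wn}([B]) = \frac{1}{2\pi i}\int_{S^1} \partial_k \ln\det B_k \, dk = \frac{1}{2\pi i}\int_{S^1} \operatorname{tr}\bigl(B_k^{*}\partial_k B_k\bigr)\, dk. \]

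Next we compare the Berry connections. Let $V_k$ be the unitary matrix whose columns are the $v_j(k)$ in the canonical basis of $\mathbb{C}^N$, and let $U_k$ be the one whose columns are the $u_j(k)$. Then
\[ \sum_j \langle v_j(k),\partial_k v_j(k)\rangle = \operatorname{tr}(V_k^*\partial_k V_k) = \partial_k \ln\det V_k, \]
and similarly for $U_k$. The relation $u_j = B_k v_j$ means $U_k = B_k V_k$. Therefore
\[ \operatorname{tr}(U_k^*\partial_k U_k) = \operatorname{tr}(V_k^* B_k^*(\partial_k B_k) V_k) + \operatorname{tr}(V_k^*\partial_k V_k) = \operatorname{tr}(B_k^*\partial_k B_k) + \operatorname{tr}(V_k^*\partial_k V_k), \]
where we used cyclicity of the trace and $V_kV_k^*=1$. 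Integrating over $S^1$ and dividing by $2\pi i$ gives exactly \eqref{eqn:wnB}. Equivalently, $\det U_k = \det B_k \det V_k$, so the windings of the determinants add.

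Finally, integrality holds because $\det B_k$ is a periodic $C^1$ map $S^1 \to U(1)$, so the integral of its logarithmic derivative is $2\pi i$ times an integer. The only point that needs care is to make sure that the normalization of $\mathrm{wn}$ in Appendix~\ref{appendixC} matches $\frac{1}{2\pi i}\int \operatorname{tr}(B^*\partial B)$, i.e.\ the degree of $\det B$, including the sign and orientation conventions. The block-diagonal structure of $B_k$ is not needed here; it only gives the refinement $\mathrm{wn}([B]) = \mathrm{wn}([B^{(-)}]) + \mathrm{wn}([B^{(+)}])$.
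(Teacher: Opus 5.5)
Your proposal is correct and follows essentially the same route as the proof the paper defers to (\cite[Theorem IV.5]{Monaco_2023}). You get unitarity of $B_k$ because it maps one orthonormal basis to another, and then the identity $\sum_j \langle u_j(k), \partial_k u_j(k)\rangle = \operatorname{tr}(B_k^*\partial_k B_k) + \sum_j \langle v_j(k), \partial_k v_j(k)\rangle$ integrates to $\mathrm{wn}([B])$. Your matrix form $U_k = B_k V_k$ with cyclicity of the trace is just a compact way of writing that computation, and the normalization concern you raise is settled by the explicit formula in Theorem~\ref{thm:winding_number}.
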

\begin{proof}
The proof can be found in \cite[Theorem IV.5]{Monaco_2023}.
\end{proof}
In combination with the previous result, we can conclude that Zak phases are always integer-valued. 
\subsection{The $\mathbb{Z}_2$ invariant}
The point to be addressed now is by how much the Zak phase of a Bloch basis can vary under a change of Bloch gauge, that is, what values the term $\mathrm{wn}([B])$ in \eqref{eqn:wnB} can take. In case only a chiral or particle-hole symmetry is present, then $\mathrm{wn}([B])\in 2 \mathbb{Z}$ is necessarily even, and so taking the mod-2 remainder of the Zak phase we get a $\mathbb{Z}_2$ gauge-invariant quantity, that can be further shown to be a topological invariant: this is the main result of~\cite{Monaco_2023}. Our contribution generalizes the analysis to all possible discrete symmetries and their combinations.

\begin{Theorem}[Gauge transformations and Zak phase]\label{2Zwi}
    Consider $\{v_j(k)\}_{j=1}^N$ a symmetric Bloch basis and let $\{u_j(k)\}_{j=1}^N$ be another symmetric Bloch basis. Let $B_k$ be the corresponding Bloch gauge.
    \begin{enumerate}
        \item If the system enjoys a $\mathfrak{O}^{(++)}$-type symmetry then $\mathrm{wn}([B])\in \mathbb{Z}$.
        \item If the system enjoys a $\mathfrak{O}^{(+-)}$-type symmetry then $\mathrm{wn}([B])\in 2\mathbb{Z}$.
        \item If the system enjoys a $\mathfrak{O}^{(--)}$-type symmetry then $\mathrm{wn}([B])\in 2\mathbb{Z}$.
        \item If the system enjoys a $\mathfrak{O}^{(-+)}$-type symmetry then $\mathrm{wn}([B])\in \mathbb{Z}$.
    \end{enumerate}
\end{Theorem}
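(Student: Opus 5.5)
The plan is to reduce everything to the block structure of $B_k$ and to the additivity of the winding number. Since $B_k$ is block-diagonal with respect to $\operatorname{Range}(P^{(-)}_k)\oplus\operatorname{Range}(P^{(+)}_k)$, we have $\det B_k=\det B^{(-)}_k\,\det B^{(+)}_k$. Hence
\[
\mathrm{wn}([B])=\mathrm{wn}\bigl([\det B^{(-)}]\bigr)+\mathrm{wn}\bigl([\det B^{(+)}]\bigr),
\]
where each determinant is computed in the orthonormal frames $\{v_j(k)\}_{j\le m}$ and $\{v_j(k)\}_{j>m}$. Concretely, set $\beta^{(\pm)}(k)$ to be the matrices with entries $\langle v_i(k),B_kv_j(k)\rangle$ in the respective block. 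These are periodic continuous unitary matrix functions, and $\mathrm{wn}([B])=\mathrm{wn}(\det\beta^{(-)})+\mathrm{wn}(\det\beta^{(+)})$. Items 1 and 4 then need no work: the winding number of a continuous periodic unitary family is an integer (Theorem~\ref{bbcbf}), and no further constraint is claimed. The substance lies in items 2 and 3, where I would show $\mathrm{wn}(\det\beta^{(+)})=\mathrm{wn}(\det\beta^{(-)})$.

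For item 2, take $\mathfrak{O}=\mathfrak{O}^{(+-)}$. Both bases satisfy $u_{N-j+1}(k)=\mathfrak{O}u_j(k)$ and $v_{N-j+1}(k)=\mathfrak{O}v_j(k)$ for $j\le m$. If $\mathfrak{O}=\mathfrak{S}$ is unitary, then
\[
\langle v_{N-i+1},u_{N-j+1}\rangle=\langle \mathfrak{S}v_i,\mathfrak{S}u_j\rangle=\langle v_i,u_j\rangle,
\]
so $\beta^{(+)}(k)$ equals $\beta^{(-)}(k)$ up to the fixed index reversal $j\mapsto N-j+1$. That reversal is a conjugation by a permutation matrix, so it does not change the determinant, and the two windings are equal.

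If instead $\mathfrak{O}=\mathfrak{T}\mathfrak{C}$ is antiunitary, the same computation gives $\overline{\langle v_i,u_j\rangle}$, so $\det\beta^{(+)}=\overline{\det\beta^{(-)}}$ and the windings are opposite. This would yield total winding $0$ rather than an even number. That is consistent with the statement, but I must check the convention: in class BDI we also have $\mathfrak{S}$ available, so the even conclusion follows either way. In a class where only $\mathfrak{T}\mathfrak{C}$ acts with $k\mapsto k$, the conclusion $\mathrm{wn}\in 2\mathbb{Z}$ (indeed $0$) still holds.

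For item 3, take $\mathfrak{O}=\mathfrak{O}^{(--)}$, which maps $k\mapsto -k$. The same computation gives $\beta^{(+)}(k)\simeq\beta^{(-)}(-k)$ if $\mathfrak{O}$ is unitary ($\mathfrak{T}\mathfrak{S}$ in a suitable combination), or $\beta^{(+)}(k)\simeq\overline{\beta^{(-)}(-k)}$ if $\mathfrak{O}$ is antiunitary ($\mathfrak{C}$). Reflecting $k\mapsto -k$ reverses the winding, and complex conjugation reverses it again. For antiunitary $\mathfrak{C}$ the two effects cancel, so $\mathrm{wn}(\det\beta^{(+)})=\mathrm{wn}(\det\beta^{(-)})$ and the total is $2\,\mathrm{wn}(\det\beta^{(-)})\in2\mathbb{Z}$. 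This is exactly the particle-hole argument of \cite{Monaco_2023}. The unitary case of item 3 gives total winding $0$, which is again even.

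The main obstacle I expect is bookkeeping rather than ideas, and it comes in two parts. First, I must justify that $\mathrm{wn}([B])$ equals the winding of $\det$ computed in the moving frames $\{v_j(k)\}$ rather than in the fixed basis of $\mathbb{C}^N$. This follows because the change of frame $k\mapsto[v_1(k),\dots,v_N(k)]$ is conjugation by a periodic unitary, which leaves the determinant unchanged. Second, I must carefully track, for each concrete operator in $\{\mathfrak{S},\mathfrak{T}\mathfrak{C},\mathfrak{C},\mathfrak{T}\mathfrak{S}\}$, whether it is linear or antilinear, since that determines whether the sign flips. For items 1 and 4 I would simply remark that $\mathfrak{O}^{(++)}$ is proportional to the identity, and that $\mathfrak{O}^{(-+)}$ preserves each block separately, so it relates $\beta^{(\pm)}(k)$ only to $\overline{\beta^{(\pm)}(-k)}$. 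That relation is consistent with any winding, so no parity constraint arises.
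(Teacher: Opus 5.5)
Your proposal is correct and follows essentially the same route as the paper: you split $\det B_k=\det B^{(-)}_k\det B^{(+)}_k$ and relate the two blocks through the symmetry, getting equal determinants for the linear $\mathfrak{O}^{(+-)}$ and conjugate determinants at $-k$ for the antilinear $\mathfrak{O}^{(--)}$, while $\mathfrak{O}^{(++)}$ and $\mathfrak{O}^{(-+)}$ preserve each block and impose no parity constraint; your matrix-entry computation in the frames $\{v_j(k)\}$ is a concrete version of the paper's intertwining relations $B^{(+)}_{\varepsilon_k k}\,\mathfrak{O}^{(\varepsilon_k\,-)}=\mathfrak{O}^{(\varepsilon_k\,-)}B^{(-)}_k$. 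One small correction: $\mathfrak{T}\mathfrak{C}$ is always unitary (a product of two antiunitaries) and $\mathfrak{T}\mathfrak{S}$ is always antiunitary, which is exactly why $\varepsilon_k=+$ means linear and $\varepsilon_k=-$ means antilinear, so your extra ``antiunitary $\mathfrak{T}\mathfrak{C}$'' and ``unitary $\mathfrak{T}\mathfrak{S}$'' branches never occur, though they are harmless since they would only give total winding $0$.
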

\begin{proof}
We will use all the symmetries at our disposal to understand what values $\mathrm{wn}([B])$ can take. We use notions and notation from Appendix~\ref{appendixC} in what follows. First of all we have
\begin{equation*}
\begin{aligned}
\mathrm{wn}([B])&=w(\det(B_k))=w\left(\det(B^{(+)}_k)\det(B^{(-)}_k)\right)=\\
&=w\left(\det(B^{(+)}_k)\right)+w\left(\det(B^{(-)}_k)\right).
\end{aligned}
\end{equation*}
\textit{1.} Let us start with the operator $\mathfrak{O}^{(++)}$; we have
$$\mathfrak{O}^{(++)} v_j(k) = v_{j}(k), \quad \mathrm{and} \quad \mathfrak{O}^{(++)} u_j(k) = u_{j}(k),\quad \forall \ j \in \{1,...,2m\},$$
and 
$$\mathfrak{O}^{(++)} P_k^{(+/-)} = P_k^{(+/-)}\mathfrak{O}^{(++)}.$$
We have
\begin{equation*}
\begin{aligned}
  &B_k^{(-)}\mathfrak{O}^{(++)}v_j(k)=B_k^{(-)}v_j(k)=u_j(k)=\mathfrak{O}^{(++)}u_j(k)=\mathfrak{O}^{(++)}B_k^{(-)}v_j(k);\\
  &B_k^{(+)}\mathfrak{O}^{(++)}v_j(k)=B_k^{(+)}v_j(k)=u_i(k)=\mathfrak{O}^{(++)}u_j(k)=\mathfrak{O}^{(++)}B_k^{(+)}v_j(k);
\end{aligned}    
\end{equation*}
the first one for all $j \in \{1,...,m\}$ and the second one for $j \in \{m+1,...,N\}$. Therefore
\begin{equation*}
\begin{aligned}
&B_k^{(-)}P_k^{(-)}\mathfrak{O}^{(++)}=B_k^{(-)}\mathfrak{O}^{(++)}P_k^{(-)}=\mathfrak{O}^{(++)}B_k^{(-)}P_k^{(-)};\\
&B_k^{(+)}P_k^{(+)}\mathfrak{O}^{(++)}=B_k^{(+)}\mathfrak{O}^{(++)}P_k^{(+)}=\mathfrak{O}^{(++)}B_k^{(+)}P_k^{(+)};
\end{aligned}    
\end{equation*}
from which we deduce that
\begin{equation*}
\begin{aligned}
&P_k^{(-)}B_kP_k^{(-)}\mathfrak{O}^{(++)}=\mathfrak{O}^{(++)}P_k^{(-)}B_kP_k^{(-)} \quad \Rightarrow \quad B_k^{(-)}\mathfrak{O}^{(++)}=\mathfrak{O}^{(++)}B_k^{(-)};\\
&P_k^{(+)}B_kP_k^{(+)}\mathfrak{O}^{(++)}=\mathfrak{O}^{(++)}P_k^{(+)}B_kP_k^{(+)}\quad \Rightarrow \quad B_k^{(+)}\mathfrak{O}^{(++)}=\mathfrak{O}^{(++)}B_k^{(+)}.
\end{aligned}    
\end{equation*}
The above relations, since $\mathfrak{O}^{(++)}$ is linear, do not provide non-trivial information on the winding number of $B^{(\pm)}$.\\
\textit{2.} Let us now consider the case of $\mathfrak{O}^{(+-)}$; in this case 
\begin{equation*}
    \mathfrak{O}^{(+-)}\, v_j(k) = v_{N - j + 1}(k), \quad \mathfrak{O}^{(+-)}\, u_j(k) = u_{N - j + 1}(k) \quad \forall \ j \in \{1,...,m\},
\end{equation*}
and 
$$\mathfrak{O}^{(+-)} P_k^{(+)} = P_k^{(-)}\mathfrak{O}^{(+-)}.$$
We have
\[\begin{split}B_k^{(+)}\mathfrak{O}^{(+-)}v_i(k) &=B_k^{(+)}v_{N-i+1}(k)=u_{N-i+1}(k)=\\ & =\mathfrak{O}^{(+-)}u_i(k)=\mathfrak{O}^{(+-)}B_k^{(-)}v_i(k)\end{split}\]
for all $j \in \{1,...,m\}$.
Therefore
$$B_k^{(+)}P_k^{(+)}\mathfrak{O}^{(+-)}=B_k^{(+)}\mathfrak{O}^{(+-)}P_k^{(-)}=\mathfrak{O}^{(+-)}B_k^{(-)}P_k^{(-)};$$
from which we deduce that
$$P_k^{(+)}B_kP_k^{(+)}\mathfrak{O}^{(+-)}=\mathfrak{O}^{(+-)}P_k^{(-)}B_kP_k^{(-)}\quad \Rightarrow \quad B_k^{(+)}\mathfrak{O}^{(+-)}=\mathfrak{O}^{(+-)}B_k^{(-)}.$$
The above relation, since $\mathfrak{O}^{(+-)}$ is linear, gives
$$\det\left(B_k^{(+)}\right)=\det\left(B_k^{(-)}\right).$$ Therefore we have
$$\mathrm{wn}([B])=2w\left(\det\left(B^{(+)}_k\right)\right)\in 2\mathbb{Z}.$$
\textit{3.} It is now the turn of $\mathfrak{O}^{(--)}$; we have 
\begin{equation*}
    \mathfrak{O}^{(--)} v_j(k) = v_{N - j + 1}(-k), \quad \mathfrak{O}^{(--)} u_j(k) = u_{N - j + 1}(-k) \quad \forall \ j \in \{1,...,m\},
\end{equation*}
and 
$$\mathfrak{O}^{(--)} P_k^{(+)} = P_{-k}^{(-)}\mathfrak{O}^{(--)}.$$
We have
\begin{equation}
\begin{split}B_{-k}^{(+)}\mathfrak{O}^{(--)}v_j(k)&=B_{-k}^{(+)}v_{N-j+1}(-k)=u_{N-j+1}(-k)=\\ & =\mathfrak{O}^{(--)}u_j(k)=\mathfrak{O}^{(--)}B_k^{(-)}v_j(k)\end{split}
\end{equation}
for all $j \in \{1,...,m\}$.
Therefore
$$B_{-k}^{(+)}P_{-k}^{(+)}\mathfrak{O}^{(--)}=B_{-k}^{(+)}\mathfrak{O}^{(--)}P_k^{(-)}=\mathfrak{O}^{(--)}B_k^{(-)}P_k^{(-)};$$
from which we deduce that
$$P_{-k}^{(+)}B_{-k}P_{-k}^{(+)}\mathfrak{O}^{(--)}=\mathfrak{O}^{(--)}P_k^{(-)}B_kP_k^{(-)}\quad \Rightarrow \quad B_{-k}^{(+)}\mathfrak{O}^{(--)}=\mathfrak{O}^{(--)}B_k^{(-)}.$$
The above relation, since $\mathfrak{O}^{(--)}$ is anti-linear, gives
$$\det\left(B_{-k}^{(+)}\right)=\left(\det\left(B_{k}^{(-)}\right)\right)^*=\det\left(\left(B_k^{(-)}\right)^{*}\right)=\det\left(\left(B_k^{(-)}\right)^{-1}\right).$$ Therefore we have again
$$\mathrm{wn}([B])=2w\left(\det\left(B^{(+)}_k\right)\right)\in 2\mathbb{Z},$$
since 
\begin{equation}
\begin{split}
w\left(\det\left(B^{(+)}_k\right)\right)=&-w\left(\det\left(B^{(+)}_{-k}\right)\right)=-w\left(\det\left(\left(B_k^{(-)}\right)^{-1}\right)\right)=\\
&=w\left(\det\left(B^{(-)}_k\right)\right).
\end{split}
\end{equation}
\textit{4.} Last but not least, the operator $\mathfrak{O}^{(-+)}$; we have
$$\mathfrak{O}^{(-+)} v_j(k) = v_{j}(-k), \quad \mathrm{and} \quad \mathfrak{O}^{(-+)} u_j(k) = u_{j}(-k),\quad \forall \ j \in \{1,...,2m\},$$
and 
$$\mathfrak{O}^{(-+)} P_k^{(+/-)} = P_{-k}^{(+/-)}\mathfrak{O}^{(-+)}.$$
We have
\begin{equation*}
\begin{aligned}
  &B_{-k}^{(-)}\mathfrak{O}^{(-+)}v_j(k)=B_{-k}^{(-)}v_j(-k)=u_j(-k)=\mathfrak{O}^{(-+)}u_j(k)=\mathfrak{O}^{(-+)}B_k^{(-)}v_j(k);\\
  &B_{-k}^{(+)}\mathfrak{O}^{(-+)}v_j(k)=B_{-k}^{(+)}v_j(-k)=u_j(-k)=\mathfrak{O}^{(-+)}u_j(k)=\mathfrak{O}^{(-+)}B_k^{(+)}v_j(k);
\end{aligned}    
\end{equation*}
the first one for all $j \in \{1,...,m\}$ and the second one for $j \in \{m+1,...,N\}$. Therefore
\begin{equation*}
\begin{aligned}
&B_{-k}^{(-)}P_{-k}^{(-)}\mathfrak{O}^{(-+)}=B_{-k}^{(-)}\mathfrak{O}^{(-+)}P_k^{(-)}=\mathfrak{O}^{(-+)}B_k^{(-)}P_k^{(-)};\\
&B_{-k}^{(+)}P_{-k}^{(+)}\mathfrak{O}^{(-+)}=B_{-k}^{(+)}\mathfrak{O}^{(-+)}P_k^{(+)}=\mathfrak{O}^{(-+)}B_k^{(+)}P_k^{(+)};
\end{aligned}    
\end{equation*}
from which we deduce that
\begin{equation*}
\begin{aligned}
&P_{-k}^{(-)}B_{-k}P_{-k}^{(-)}\mathfrak{O}^{(-+)}=\mathfrak{O}^{(-+)}P_k^{(-)}B_kP_k^{(-)} \quad \Rightarrow \quad B_{-k}^{(-)}\mathfrak{O}^{(-+)}=\mathfrak{O}^{(-+)}B_k^{(-)};\\
&P_{-k}^{(+)}B_{-k}P_{-k}^{(+)}\mathfrak{O}^{(-+)}=\mathfrak{O}^{(-+)}P_k^{(+)}B_kP_k^{(+)}\quad \Rightarrow \quad B_{-k}^{(+)}\mathfrak{O}^{(-+)}=\mathfrak{O}^{(-+)}B_k^{(+)}.
\end{aligned}    
\end{equation*}
The above relations, since $\mathfrak{O}^{(-+)}$ is anti-linear, give
\begin{equation*}
\begin{aligned}
&\det(B_{-k}^{(+)})=\left(\det(B_{k}^{(+)})\right)^*;\\
&\det(B_{-k}^{(-)})=\left(\det(B_{k}^{(-)})\right)^*.
\end{aligned}    
\end{equation*}
These relations does not link the determinants of $B^{(+)}$ and $B^{(-)}$ and so we have no conditions on $\text{wn}[B]$.

The same result follows in the case $\left(\mathfrak{O}^{(++)}\right)^2=-1_{\mathbb{C}^{2m}}$ and/or the case $\left(\mathfrak{O}^{(-+)}\right)^2=-1_{\mathbb{C}^{2m}}$ by using the linearity of $B_k^{(+)}$ and of $B_k^{(-)}$.
\end{proof}

We can then conclude the following result.
\begin{Corollary}[Gauge invariance of Zak phase's quotients]\label{cor:gauge}\phantom{a}
\begin{enumerate}
     \item If the system enjoys symmetries of type $\mathfrak{O}^{(+-)}$ or $\mathfrak{O}^{(--)}$, then the mod 2 equivalence class
\[
\left[ \frac{1}{2\pi i} \int_{S^1} dk \sum_{j=1}^{2m} \langle v_j(k), \partial_kv_j(k) \rangle \right] 
= \left[ \frac{1}{\pi i} \int_{S^1} dk \sum_{j=1}^{m} \langle v_j(k), \partial_kv_j(k) \rangle \right] \in \mathbb{Z}_2,
\]
is a gauge invariant, independent of the choice of a symmetric Bloch basis.
\item If the system enjoys symmetries of type $\mathfrak{O}^{(++)}$ or $\mathfrak{O}^{(-+)}$ then the mod 1 equivalence class of the Zak phase is a trivial gauge invariant. 
\end{enumerate}
\end{Corollary}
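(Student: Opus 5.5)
The corollary follows by combining Theorem~\ref{bbcbf} with Theorem~\ref{2Zwi}. Let $\{v_j(k)\}$ and $\{u_j(k)\}$ be two symmetric Bloch bases, both of which exist by Theorem~\ref{exbas}, and let $B_k$ be the Bloch gauge relating them. Theorem~\ref{bbcbf} gives
\[
\mathcal{Z}^{(N)}_{\{u_j(k)\}}-\mathcal{Z}^{(N)}_{\{v_j(k)\}}=\mathrm{wn}([B]),
\]
and both Zak phases are integers (by Theorem~\ref{BFBB} for the parallel-transport basis, and then for any basis by Theorem~\ref{bbcbf}). So the question is only which residue class of the Zak phase is unaffected by $\mathrm{wn}([B])$.

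For item 1, if a symmetry of type $\mathfrak{O}^{(+-)}$ or $\mathfrak{O}^{(--)}$ is present, then $\mathrm{wn}([B])\in 2\mathbb{Z}$ by items 2--3 of Theorem~\ref{2Zwi}. Hence the class of the Zak phase in $\mathbb{Z}/2\mathbb{Z}$ does not depend on the choice of symmetric Bloch basis. In these classes $N=2m$, as noted at the end of Section~\ref{4.1}, so the sum runs to $2m$.

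The remaining point is the equality of the two expressions in the display, i.e.\ that the full-basis Zak phase equals twice the Zak phase of the negative-energy vectors $j\le m$. I would split the sum over $j$ into $j\le m$ and $j>m$ and pair $v_{N-j+1}$ with $v_j$.
\begin{itemize}
\item \emph{Case $\mathfrak{O}^{(+-)}$.} Here $\mathfrak{O}^{(+-)}$ is unitary, $k$-independent and satisfies $v_{N-j+1}(k)=\mathfrak{O}^{(+-)}v_j(k)$. Then $\langle v_{N-j+1},\partial_k v_{N-j+1}\rangle=\langle v_j,\partial_k v_j\rangle$, so the two halves of the integral agree.
\item \emph{Case $\mathfrak{O}^{(--)}$.} The operator is antiunitary and $v_{N-j+1}(k)=\mathfrak{O}^{(--)}v_j(-k)$. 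Antiunitarity conjugates the Berry connection, and the chain rule for $k\mapsto -k$ contributes a further sign. The two sign changes combine so that $\int\langle v_{N-j+1},\partial_k v_{N-j+1}\rangle\,dk=\int\langle v_j,\partial_k v_j\rangle\,dk$. This uses that the integrand $\langle v,\partial_k v\rangle$ is purely imaginary for normalized $v$, so complex conjugation acts on it as a sign flip.
\end{itemize}
In both cases the full Zak phase equals $\frac{1}{\pi i}\int\sum_{j\le m}\langle v_j,\partial_k v_j\rangle\,dk$. This identity is the main point to verify carefully, because the signs in the antiunitary case have to be tracked exactly.

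For item 2, in the $\mathfrak{O}^{(++)}$ or $\mathfrak{O}^{(-+)}$ cases Theorem~\ref{2Zwi} only gives $\mathrm{wn}([B])\in\mathbb{Z}$. Since the Zak phase is itself an integer, its class mod $1$ is always zero and is trivially gauge invariant. One could also note that no finer invariance is available: composing a symmetric basis with a gauge of odd winding that respects these symmetries (e.g.\ $e^{ik}$ on a single band when only a unitary commuting symmetry is present) changes the Zak phase by an arbitrary integer. This step is not required for the statement as written.
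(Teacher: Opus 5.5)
Your proposal is correct and follows the paper's route: mod-2 gauge invariance comes from Theorem~\ref{2Zwi} combined with Theorem~\ref{bbcbf}, the equality of the two displayed expressions comes from folding the sum using the relations in Definition~\ref{SBB}, and item 2 is trivial because Zak phases are integers. The paper only sketches the folding step by citing earlier work; your sign bookkeeping in the antiunitary case supplies exactly what that step needs (antilinearity conjugates the purely imaginary Berry connection, and the substitution $k\mapsto -k$ gives the compensating sign).
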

\begin{proof}
    The independence of the mod 2 equivalence of the Zak phase from the Bloch basis is a direct consequence of the above Theorem. Hence the mod-2 equivalence class of the Zak phase is a gauge invariant. The rewriting follows from using the actions of operators $\mathfrak{O}^{(+-)}$ and $\mathfrak{O}^{(--)}$ in Definition \ref{SBB} and manipulating the scalar product in the definition of the Zak phase, cf.~\cite[Corollary IV.8]{Monaco_2023}. Similar considerations hold for the mod-1 equivalence class which contains only the trivial class.
\end{proof}
We therefore can give the following Definition.
\begin{Definition}[The invariant]
    Let $H$ be a finite range Hamiltonian on $\ell^2(\mathbb{Z}) \otimes \mathbb{C}^N$ modeling a translation invariant insulator with time reversal, particle-hole and/or chiral symmetries. Assume the {\rm AZC$-$class} is \emph{not} {\rm A}, {\rm AI} or {\rm AII}. Then the quantity
\[
\mathrm{I}^{(\mathrm{AZC-class})}(H) := \frac{1}{\pi i} \int_{S^1} dk \sum_{j=1}^{m} \langle v_j(k), \partial_kv_j(k) \rangle \mod 2 \in \mathbb{Z}_2,
\]
where is $\{v_i(k)\}_{i=1,\ldots,m}$ is any symmetric Bloch basis, is a gauge invariant of the spectral projection of $H$ onto the negative energy eigenspace.
\end{Definition}
The last general point we discuss is the topological invariance of the $\mathbb{Z}_2$-valued quantity $\mathrm{I}^{(\mathrm{AZC-class})}(H)$. 
\begin{Theorem}[Topological invariance of $\mathrm{I}^{(\mathrm{AZC-class})}(H)$]\label{tiind}
    Assume that \( H(t) \), \( t \in [0,1] \), is a continuous family of bounded Hamiltonian operators on \( \ell^2(\mathbb{Z}) \otimes \mathbb{C}^N \) describing topological insulators in symmetry classes which are not $\mathrm{A,\,  AI,\,  AII}$. Assume moreover that the spectral gap of \( H(t) \) remains open for all \( t \in [0,1] \). Then
\[
\mathrm{I}^{(\mathrm{AZC-class})}(H(0)) = \mathrm{I}^{(\mathrm{AZC-class})}(H(1)).\]
\end{Theorem}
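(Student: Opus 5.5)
We will deduce topological invariance from gauge invariance (Corollary~\ref{cor:gauge}) together with continuity. The idea is to show that the Zak phase, evaluated on a suitable symmetric Bloch basis of $H(t)$, is a continuous function of $t$. Since it is integer-valued (Theorem~\ref{BFBB}), it must then be constant, and so is its class modulo $2$.

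\emph{Step 1: continuity of the projections.} Continuity of $t\mapsto H(t)$ in norm means continuity of the hopping matrices $A_j(t)$, and hence of the fibers $H_k(t)$, jointly in $(k,t)\in\mathbb{T}\times[0,1]$. By compactness of $[0,1]$ and the open gap, we fix one contour $\alpha$ lying in the joint resolvent set of all $H_k(t)$; locally in $t$ this suffices, and a finite cover handles all of $[0,1]$. The Riesz formula~\eqref{eq:riesz} then shows that $P^{(-)}_k(t)$ is continuous in $(k,t)$ and smooth in $k$, with $\partial_kP^{(-)}_k(t)$ continuous in $(k,t)$. The rank $m$ is therefore constant in $t$. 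The symmetry operators $\mathfrak{T},\mathfrak{C},\mathfrak{S}$ are fixed along the path, since the symmetry class is preserved.

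\emph{Step 2: continuous choice of data.} By continuous dependence of solutions of linear ODEs on parameters, the parallel transport $\mathcal{T}_k(t)$ of Theorem~\ref{TP} is continuous in $(k,t)$, and so is the holonomy $\mathcal{T}_{2\pi}(t)$. Rather than computing the Zak phase through $X=-i\ln\mathcal{T}_{2\pi}$, which may jump when an eigenvalue crosses the branch cut, we use the formula behind Theorem~\ref{BFBB}. The parallel-transported frame $w_j(k,t):=\mathcal{T}_k(t)v_j(0,t)$ satisfies $\sum_j\langle w_j,\partial_k w_j\rangle=\operatorname{tr}([\partial_kP,P])=0$. Any symmetric Bloch basis then differs from $\{w_j\}$ only through the closing gauge, so its Zak phase is determined by $\det\mathcal{T}_{2\pi}(t)$ modulo the choice of logarithm. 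Concretely, $\mathcal{Z}^{(N)}=-\frac{1}{2\pi}\operatorname{tr}X(t)$ with $e^{i\operatorname{tr}X(t)}=\det\mathcal{T}_{2\pi}(t)$.

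We also need a continuous choice of initial basis $v_j(0,t)$ of $\operatorname{Range}P^{(-)}_0(t)$ adapted to the symmetries. For this we use the local Kato transformation $P^{(-)}_0(t)=W(t,t_0)P^{(-)}_0(t_0)W(t,t_0)^*$, defined by the same adiabatic generator in $t$. By the uniqueness argument of Proposition~\ref{propcomm}, now applied in the variable $t$ at the fixed point $k=0$, $W$ intertwines the symmetries. It therefore carries a symmetric basis at $t_0$ to a symmetric basis at $t$.

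\emph{Step 3: local constancy.} Fix $t_0$. Let $\{u_j(k,t)\}$ be the basis obtained by applying the construction of Theorem~\ref{exbas} to the transported initial data. In a small interval around $t_0$ the eigenvalues of $\mathcal{T}_{2\pi}(t)$ move continuously. There we choose $X(t)$ as a continuous logarithm, still satisfying $\mathfrak{O}^{(\pm)}X=\pm X\mathfrak{O}^{(\pm)}$. For instance, eigenvalues are paired by the symmetries, and we choose their phases continuously in pairs. We then choose the branch locally, not globally in $[0,2\pi)$. The proof of Theorem~\ref{exbas} only used this commutation relation and $e^{iX}=\mathcal{T}_{2\pi}$, so the resulting frame is still a symmetric Bloch basis for every $t$ near $t_0$. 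Its Zak phase $-\frac{1}{2\pi}\operatorname{tr}X(t)$ is integer-valued and continuous in $t$, hence constant near $t_0$. Gauge invariance (Corollary~\ref{cor:gauge}) then says that $\mathrm{I}^{(\mathrm{AZC-class})}(H(t))$, computed from any symmetric basis, equals this constant modulo $2$. Thus $t\mapsto\mathrm{I}(H(t))$ is locally constant on the connected interval $[0,1]$, hence constant.

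\emph{Main obstacle.} The delicate point is Step 3: choosing the logarithm continuously in $t$ while keeping the symmetry relation $\mathfrak{O}^{(\pm)}X=\pm X\mathfrak{O}^{(\pm)}$. Antiunitary symmetries with $k\mapsto -k$ force $\mathfrak{O}X\mathfrak{O}^{-1}=X$, which conjugates the eigenvalues. Chiral-type symmetries force pairing of the spectrum. Near the eigenvalue $1$ the standard branch in $[0,2\pi)$ breaks continuity. We will handle this by working locally and selecting a branch cut $e^{i\theta_0}$ avoiding the spectrum of $\mathcal{T}_{2\pi}(t)$ for $t$ near $t_0$, chosen so that it is compatible with the required eigenvalue pairing, e.g.\ a symmetric pair of cuts. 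If such a compatible cut fails to exist, we fall back on a direct argument: since $\mathcal{Z}^{(N)}$ is a continuous functional of a continuously varying symmetric frame, namely $\int\operatorname{tr}$ of a jointly continuous integrand, it is automatically locally constant.
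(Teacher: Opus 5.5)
The paper gives no argument of its own here (it defers to \cite{Monaco_2023}), so your proposal has to stand alone. Its architecture is sound. You build a symmetric Bloch basis $u_j(k,t)=\mathcal{T}_k(t)e^{-ikX(t)/2\pi}v_j(0,t)$ whose Zak phase $-\frac{1}{2\pi}\operatorname{tr}X(t)$ is integer-valued and continuous near $t_0$, and combine it with Corollary~\ref{cor:gauge}. But the step you flag as the main obstacle is left unresolved, and your fallback is circular: a continuously varying symmetric frame is exactly what the logarithm is needed for.

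The missing observation is that there is no obstacle once the right relation is imposed. The proof of Theorem~\ref{exbas} needs $[X,P^{(-)}_0]=0$ and $OX=XO$ for \emph{every} symmetry operator $O$, linear or antilinear. For antilinear $O$, it is commutation that gives $Oe^{-ikX/2\pi}=e^{ikX/2\pi}O$. The anticommutation $\mathfrak{O}^{(-)}X=-X\mathfrak{O}^{(-)}$ that you carry into Step~3 would instead give $\mathfrak{O}^{(-)}e^{iX}(\mathfrak{O}^{(-)})^{-1}=e^{iX}$. That contradicts $\mathfrak{O}^{(-)}\mathcal{T}_{2\pi}(\mathfrak{O}^{(-)})^{-1}=\mathcal{T}_{2\pi}^{-1}$ unless $\mathcal{T}_{2\pi}^2=1_{\mathbb{C}^N}$, and is presumably what made pairing constraints look necessary.

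Now pick \emph{any} $e^{i\theta_0}\notin\sigma(\mathcal{T}_{2\pi}(t_0))$. Set $X(t):=\varphi(\mathcal{T}_{2\pi}(t))$ by functional calculus, with $\varphi$ the real-valued branch of $-i\log$ cut at $e^{i\theta_0}$. This choice has all the required properties:
\begin{itemize}
\item For $t$ near $t_0$ the spectrum stays off the cut, so $X(t)$ is continuous in $t$.
\item $X(t)$ commutes with $P^{(-)}_0(t)$, because $\mathcal{T}_{2\pi}(t)$ does.
\item Unitary symmetries commute with $\mathcal{T}_{2\pi}(t)$ by Proposition~\ref{propcomm}, hence with $X(t)$.
\item Antiunitary symmetries satisfy $O\mathcal{T}_{2\pi}O^{-1}=\mathcal{T}_{-2\pi}=\mathcal{T}_{2\pi}^{*}$. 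By uniqueness of the spectral decomposition, $O$ fixes every spectral projection of $\mathcal{T}_{2\pi}(t)$, and since $\varphi$ is real, $OX(t)O^{-1}=X(t)$.
\end{itemize}
So no symmetric pair of cuts is needed. Define $X(t)$ by functional calculus rather than choosing phases eigenvector by eigenvector; then continuity through eigenvalue crossings and $[X(t),P^{(-)}_0(t)]=0$ come for free.

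There are also two smaller repairs. First, the Kato transformation in $t$ in Step~2 needs $\partial_tP^{(-)}_0(t)$, which need not exist along a merely continuous path. The step is unnecessary, though. With $W_k=\mathcal{T}_k(t)e^{-ikX(t)/2\pi}$, one has $\sum_j\langle u_j,\partial_ku_j\rangle=\operatorname{tr}(W_k^*\partial_kW_k)$, which does not depend on the frame at $k=0$. So for each $t$ any symmetric frame at $k=0$ will do; one exists by the remark after Definition~\ref{SBB}. (The joint continuity of $\partial_kP^{(-)}_k(t)$ used in your ODE argument also tacitly assumes a hopping range uniform in $t$.)

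Second, saying the Zak phase is ``determined by $\det\mathcal{T}_{2\pi}(t)$'' is empty: $\det\mathcal{T}_k\equiv1$, since the generator $[\partial_kP,P]$ is traceless. The mod-$2$ information sits in the blocks $\mathcal{T}^{\pm}(t)$ of $\mathcal{T}_{2\pi}(t)$ on $\operatorname{Range}P^{(\pm)}_0(t)$. An $\mathfrak{O}^{(+-)}$ or $\mathfrak{O}^{(--)}$ symmetry maps each eigenspace of $\mathcal{T}^-$ onto the eigenspace of $\mathcal{T}^+$ with the same eigenvalue. Hence, with $X^{\pm}$ the corresponding blocks of $X$, one has $\operatorname{tr}X=2\operatorname{tr}X^-$ and $(\det\mathcal{T}^-)^2=1$. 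It follows that $\mathrm{I}^{(\mathrm{AZC-class})}(H(t))=\frac{1}{2}\bigl(1-\det\mathcal{T}^-(t)\bigr)$. Continuity of $t\mapsto\det\mathcal{T}^-(t)\in\{\pm1\}$ then proves the theorem directly, with no branch choices at all.
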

\begin{proof}
The proof can be found in \cite[Theorem IV.10]{Monaco_2023}.
\end{proof}

\section{Symmetry classes with quaternionic structure}\label{quad0}
This Section investigates the possible values that the topological and gauge invariant $\mathrm{I}^{(\mathrm{AZC-class})}(H) \in \mathbb{Z}_2$ can attain. In particular, we will show that, in presence of a symmetry inducing a quaternionic structure on the fiber Hilbert space, the invariant is actually trivialized. Let us start with the following Definition.
\begin{Definition}[Quaternionic structure]\label{quaternionic}
Let $V$ be a complex vector space. If there exist an anti-linear map $O : V \to V$ such that $O^2=-1_{V}$, then $V$ is said to be endowed by a quaternionic structure.
\end{Definition} 
Observe that a complex vector space endowed with a quaternionic structure necessarily has even dimension. Indeed, for any $v \in V$, we claim that \( v \) and \( O(v) \) are linearly independent. Suppose, by contradiction, that \( O(v) = \alpha v \) for some \( \alpha \in \mathbb{C} \). Then
\[
-v = O^2(v) = O(O(v)) = O(\alpha v) = {\alpha}^* O(v) = {\alpha}^* \alpha v = |\alpha|^2 v.
\]
This implies \( |\alpha|^2 = -1 \), which is impossible in~\( \mathbb{C} \). Therefore, \( v \) and \( O(v) \) are linearly independent. We'll write $N=2m$ for the dimension of a vector space endowed with a quaternionic structure.

The first result we need is the following.
\begin{Proposition}[Quaternionic structure and unitary  matrices]\label{det1}
    Let us consider the Hilbert space $\mathbb{C}^{2m}$ endowed with a quaternionic structure, i.e.\ an anti-linear operator $O$ such that \( O^2 = -1_{\mathbb{C}^{2m}} \). Let us assume that $O$ is anti-unitary. Then, a unitary matrix $U$ commuting with \( O \) has spectrum symmetric under complex conjugation, conjugated eigenvalues have the same multiplicity, and its determinant is \( \det( U) = 1 \). 
\end{Proposition}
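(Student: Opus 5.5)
The plan is to work with the eigenspaces of $U$ and transport them with $O$. Since $U$ is unitary, $\mathbb{C}^{2m} = \bigoplus_{\lambda \in \sigma(U)} E_\lambda$ is an orthogonal decomposition into eigenspaces $E_\lambda := \mathrm{Ker}(U - \lambda 1)$, with $|\lambda| = 1$.

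The first step is to show that $O$ maps $E_\lambda$ into $E_{\lambda^*}$. If $Uv = \lambda v$, then by anti-linearity of $O$ and $UO = OU$ we get
\[
U(Ov) = O(Uv) = O(\lambda v) = \lambda^* Ov .
\]
Since $O$ is invertible ($O^{-1} = -O$), it restricts to an anti-linear bijection $E_\lambda \to E_{\lambda^*}$. Anti-linear bijections preserve complex dimension, because they send a complex basis to a complex basis. Hence $\lambda^* \in \sigma(U)$ and $\dim E_\lambda = \dim E_{\lambda^*}$. This proves the symmetry of the spectrum under complex conjugation and the equality of multiplicities. Here I would note that geometric and algebraic multiplicity coincide, since $U$ is normal.

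The second step computes the determinant. I split the spectrum into the non-real eigenvalues and the real ones, $\lambda = \pm 1$. The non-real eigenvalues come in pairs $\{\lambda, \lambda^*\}$ with equal multiplicity $d_\lambda$, so each pair contributes $(\lambda \lambda^*)^{d_\lambda} = |\lambda|^{2 d_\lambda} = 1$. For $\lambda = \pm 1$, the eigenspace $E_{\pm 1}$ is mapped into itself by $O$. Thus $O$ restricts to a quaternionic structure on $E_{\pm 1}$, and by the observation following Definition~\ref{quaternionic} this space has even dimension. The contribution $(\pm 1)^{\dim E_{\pm 1}}$ is therefore $1$. Multiplying all contributions gives $\det U = 1$.

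The main subtlety, and the step I expect to be the real content, is the even-dimensionality of $E_{-1}$; without it one could have $\det U = -1$. The paper's argument only shows that $v$ and $Ov$ are independent, so I would strengthen it as follows. Using anti-unitarity and $O^2 = -1$,
\[
\langle v, Ov \rangle = \langle O^2 v, Ov \rangle^{*}\cdot(\text{sign from } O^2 = -1) = -\langle v, Ov \rangle ,
\]
so $\langle v, Ov \rangle = 0$. Next I would build a basis of $E_{-1}$ by induction. Pick a unit vector $v_1$ and form the pair $\{v_1, Ov_1\}$. The orthogonal complement of this pair inside $E_{-1}$ is again $O$-invariant, since $O$ is anti-unitary and preserves $\mathrm{span}\{v_1, Ov_1\}$. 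Iterating on this complement exhausts $E_{-1}$ in pairs, so its dimension is even. The anti-unitarity hypothesis is used exactly here, and also to keep $E_{\pm 1}$ invariant and orthogonally decomposable.
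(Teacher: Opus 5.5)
Your proof is correct and follows the same route as the paper: $O$ maps the $\lambda$-eigenspace anti-linearly and bijectively onto the $\lambda^*$-eigenspace, and $\det U$ is then a product of factors $\lambda\lambda^* = 1$. Your separate treatment of the real eigenvalues makes explicit a point the paper's proof passes over when it lists the eigenvalues as $\lambda_1,\lambda_1^*,\ldots,\lambda_m,\lambda_m^*$, namely that $E_{-1}$ is even-dimensional, which you obtain via $O$-invariant orthogonal complements; note only that your orthogonality identity should read $\langle v, Ov\rangle = \langle Ov, O^2 v\rangle^* = -\langle Ov, v\rangle^* = -\langle v, Ov\rangle$.
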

\begin{proof}
    Let \( U \in \mathrm{U}(2m) \) be a unitary matrix acting on a complex Hilbert space \( \mathbb{C}^{2m} \), and let~\( O \) be an anti-unitary operator such that
\[
O^2 = -1_{\mathbb{C}^{2m}} \qquad \text{and} \qquad OUO^{-1} = U.
\]
Since \( U \) is unitary, all its eigenvalues lie on the complex unit circle. Let \( \lambda \in \mathbb{C} \) be an eigenvalue of \( U \) with eigenvector \( v \in \mathbb{C}^{2m} \), so that
\[
U(v) = \lambda v.
\]
Applying \( O \) to both sides and using anti-linearity:
\[
O(U(v)) = O(\lambda v) = {\lambda}^* O(v).
\]
Since \( O \) commutes with \( U \), we have that
\[
U(O(v)) = O(U(v)) = {\lambda}^* O(v),
\]
so \( O(v) \) is an eigenvector of \( U \) with eigenvalue \( {\lambda}^* \). It follows that the eigenvalues of \( U \) come in complex conjugate pairs with equal multiplicity. Now, let the eigenvalues of \( U \) be \( \lambda_1, {\lambda}^*_1, \ldots, \lambda_m, {\lambda}^*_m \). Then we have
\[
\det (U) = \prod_{j=1}^{m} \lambda_j {\lambda}^*_j = \prod_{j=1}^{m} |\lambda_j|^2 = 1,
\]
since \( |\lambda_j| = 1 \) for all \( j \) because \( U \) is unitary.
\end{proof}
We note, by contrast, that if \( O^2 = +1_{\mathbb{C}^{2m}} \) no such constraint holds: eigenvalues may occur without pairing, and \( \det (U) \) can take any value in \( \mathrm{U}(1) \). Moreover, if $O$ is linear and \( O^2 = -1_{\mathbb{C}^{2m}} \) we get the constraint \(\alpha^2=-1\) or $\alpha = \pm i$, which implies that $v$ and $O(v)$ can be linearly dependent and the conclusion does not hold.

Notice that there is a relation between quaternionic structures and symplectic structures on $\mathbb{C}^{2m}$. Indeed, any antiunitary operator $O$ can be written as $O=J\mathcal{K}$, where $\mathcal{K}$ is the complex conjugation and $J$ is unitary. From $O^2 = -1_{\mathbb{C}^{2m}}$ it follows that $\mathcal{K}J\mathcal{K} = - J^{-1} = - \mathcal{K} J^T J$, i.e. $J = - J^T$ is also skew-symmetric. Correspondingly, unitary matrices commuting with a quaternionic structure $O$ have a connection with symplectic matrices. Indeed, in the notation above, the condition $OUO^{-1}=U$ for a unitary matrix $U$ becomes
\[ J (\mathcal{K}U\mathcal{K}) J^{-1} = U \quad \Longleftrightarrow \quad J = U J U^T\]
which, for matrices $U$ and $J$ with real entries, is the symplecticity condition. Notice how also for symplectic matrices the determinant is necessarily equal to 1 \cite{de2011symplectic}.

We are now ready to state the following fundamental result, which shows how the presence of a quaternionic structure trivializes the $\mathbb{Z}_2$ invariant.
\begin{Theorem}[Zak phase and quaternionic structure]
    Consider a Hamiltonian $H$ that enjoys a $\mathfrak{O}^{(-)}$-type symmetry that squares to $-1_{\mathbb{C}^{2m}}$, where $\mathfrak{O}^{(-)} \in \{\mathfrak{O}^{(-+)}, \mathfrak{O}^{(--)}\}$. Let us endow the Hilbert space $\mathbb{C}^{2m}$ with the quaternionic structure $\mathfrak{O}^{(-)}$. Define $W_{k}:=\mathcal{T}_ke^{-ikX/2\pi}$ as in Theorem~\ref{exbas}. Then $\mathrm{wn}([W]) \in 2 \mathbb{Z}$ and, when defined, $I^{(\mathrm{AZC-class})}(H)=0 \in \mathbb{Z}_2$ . 
\end{Theorem}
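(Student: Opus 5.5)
The plan is to express the Zak phase of the parallel-transport basis of Theorem~\ref{exbas} as the winding number of $\det W_k$, and then to use the quaternionic symmetry to split the loop $S^1$ into two halves that contribute equally. Since $v_j(k)=W_k v_j(0)$ with $W_k$ unitary and $2\pi$-periodic, a direct computation gives
\[ \mathcal{Z}^{(N)}_{\{v_j(k)\}}=\frac{1}{2\pi i}\int_{S^1}\operatorname{tr}\bigl(W_k^{-1}\partial_k W_k\bigr)\,dk=w(\det W_k)=\mathrm{wn}([W]), \]
consistent with Theorem~\ref{BFBB}. So it suffices to show $\mathrm{wn}([W])\in 2\mathbb{Z}$. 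By Corollary~\ref{cor:gauge} the mod-2 class of the full Zak phase coincides with $\mathrm{I}^{(\mathrm{AZC-class})}(H)$ and is gauge invariant. Hence it then suffices to evaluate it on this particular symmetric basis, and $\mathrm{I}^{(\mathrm{AZC-class})}(H)=0$ follows whenever it is defined.

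\emph{Step 1: symmetry of $W_k$.} By Proposition~\ref{propcomm}, $\mathfrak{O}^{(-)}\mathcal{T}_k=\mathcal{T}_{-k}\mathfrak{O}^{(-)}$. As in the proof of Theorem~\ref{exbas}, $X$ can be chosen so that $\mathfrak{O}^{(-)}X=-X\mathfrak{O}^{(-)}$. Antilinearity then gives $\mathfrak{O}^{(-)}e^{-ikX/2\pi}=e^{ikX/2\pi}\mathfrak{O}^{(-)}$, and therefore
\[ \mathfrak{O}^{(-)}W_k\bigl(\mathfrak{O}^{(-)}\bigr)^{-1}=W_{-k}\qquad\forall\,k\in\mathbb{R}. \]
Combining this with the $2\pi$-periodicity of $W_k$ (established in Theorem~\ref{exbas}), I obtain that at the two fixed points $k=0$ and $k=\pi$ (using $W_{-\pi}=W_{\pi}$) the unitary $W_k$ commutes with $\mathfrak{O}^{(-)}$. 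Proposition~\ref{det1} applies there because $\mathfrak{O}^{(-)}$ is antiunitary with square $-1_{\mathbb{C}^{2m}}$. It gives $\det W_0=\det W_\pi=1$; at $k=0$ this is trivial anyway, since $W_0=1$.

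\emph{Step 2: halving the loop.} Write $f(k):=\det W_k$. Step 1 gives $f(-k)=\overline{f(k)}$, because conjugation by an antiunitary complex-conjugates the determinant. Choose a continuous lift $f(k)=e^{i\theta(k)}$ on $[0,\pi]$ with $\theta(0)=0$. The value $\theta(\pi)$ lies in $2\pi\mathbb{Z}$ because $f(\pi)=1$, so $n:=\theta(\pi)/2\pi$ is an integer. On $[-\pi,0]$ the symmetry $f(-k)=\overline{f(k)}$ provides the lift $\theta(k):=-\theta(-k)$, which glues continuously at $0$. The total winding over $[-\pi,\pi]\simeq S^1$ is then
\[ \frac{\theta(\pi)-\theta(-\pi)}{2\pi}=\frac{2\theta(\pi)}{2\pi}=2n\in2\mathbb{Z}. \]
This gives $\mathrm{wn}([W])\in2\mathbb{Z}$, and the conclusion on $\mathrm{I}^{(\mathrm{AZC-class})}(H)$ follows by the reduction above.

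\emph{Main obstacle.} The delicate point is Step 1. One must justify the choice of branch of $X=-i\ln\mathcal{T}_{2\pi}$ compatible with $\mathfrak{O}^{(-)}X=-X\mathfrak{O}^{(-)}$, and in particular handle eigenvalues of $\mathcal{T}_{2\pi}$ at $-1$, where the branch cut of the logarithm sits. I would first try to use that $\det\mathcal{T}_k=1$, since the generator $[\partial_kP_k,P_k]$ is traceless. I would also use the Kramers-type pairing: from $\mathcal{T}_{2\pi}\mathfrak{O}^{(-)}=\mathfrak{O}^{(-)}\mathcal{T}_{2\pi}^{-1}$, the vector $\mathfrak{O}^{(-)}v$ has the same eigenvalue as $v$. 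With these, I would redefine the phases on each eigenspace consistently; the eigenspaces are invariant under $\mathfrak{O}^{(-)}$. If the branch issue cannot be settled this way, a fallback is to apply the halving argument directly to $\mathcal{T}_k$ on $[0,\pi]$, where $\mathcal{T}_\pi$ is related to $\mathcal{T}_{-\pi}$ through the telescopic property.
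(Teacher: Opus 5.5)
Your route is the same as the paper's. You derive $\mathfrak{O}^{(-)}W_k(\mathfrak{O}^{(-)})^{-1}=W_{-k}$, use periodicity so that $W_\pi$ commutes with $\mathfrak{O}^{(-)}$, get $\det W_\pi=1$ from Proposition~\ref{det1}, and halve the loop. Your odd extension of the lift is a tidier version of the paper's ``$\theta'$ is even'' computation. Your explicit identification of the Zak phase of this basis with $\mathrm{wn}([W])$ makes the conclusion about $\mathrm{I}^{(\mathrm{AZC-class})}(H)$ cleaner.

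The problem is the justification in Step 1, and with it your ``main obstacle''. The relation $\mathfrak{O}^{(-)}X=-X\mathfrak{O}^{(-)}$ is the sign printed in the proof of Theorem~\ref{exbas}. It is neither attainable in general nor what you need.

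\emph{It cannot hold in general.} Suppose it held with $e^{iX}=\mathcal{T}_{2\pi}$. Antilinearity would give $\mathfrak{O}^{(-)}e^{iX}(\mathfrak{O}^{(-)})^{-1}=e^{-i\,\mathfrak{O}^{(-)}X(\mathfrak{O}^{(-)})^{-1}}=e^{iX}$. But Proposition~\ref{propcomm} and the telescopic property give $\mathfrak{O}^{(-)}\mathcal{T}_{2\pi}(\mathfrak{O}^{(-)})^{-1}=\mathcal{T}_{-2\pi}=\mathcal{T}_{2\pi}^{-1}$. Together these force $\mathcal{T}_{2\pi}^2=1_{\mathbb{C}^{2m}}$, which fails for a generic holonomy. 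So no re-choice of phases can produce such an $X$.

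\emph{Your inference from it is also reversed.} Anticommutation plus antilinearity gives $\mathfrak{O}^{(-)}e^{-ikX/2\pi}=e^{-ikX/2\pi}\mathfrak{O}^{(-)}$, not the flipped exponential.

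\emph{What is needed is commutation,} $\mathfrak{O}^{(-)}X=X\mathfrak{O}^{(-)}$. Antilinearity then turns $-ik$ into $+ik$ and yields $\mathfrak{O}^{(-)}e^{-ikX/2\pi}=e^{ikX/2\pi}\mathfrak{O}^{(-)}$, hence $\mathfrak{O}^{(-)}W_k(\mathfrak{O}^{(-)})^{-1}=W_{-k}$.

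\emph{Commutation holds automatically} for the standard choice of $X$ with eigenphases in $[0,2\pi)$. By your own Kramers-type observation, $\mathfrak{O}^{(-)}$ maps each eigenspace of $\mathcal{T}_{2\pi}$ into itself. On each eigenspace $X$ acts as a \emph{real} scalar, which commutes with any antilinear map.

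So there is no branch-cut issue at $-1$, and the fallback is unnecessary. With Step 1 repaired this way, the rest of your argument goes through unchanged.
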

\begin{proof}
The proof of Theorem~\ref{exbas} establishes that $W_k$ is unitary and $2\pi$-periodic. Moreover, $W_{\pi}$ commutes with the symmetry operator $\mathfrak{O}^{(-)}$ due to Proposition \ref{propcomm} and $2\pi$-periodicity:
\begin{equation*}
    \mathfrak{O}^{(-)}W_{\pi}=W_{-\pi}\mathfrak{O}^{(-)}=W_{\pi}\mathfrak{O}^{(-)}.
\end{equation*}
By Proposition \ref{det1} we conclude that $\det(W_{\pi})=\det(W_{-\pi})=1$. Since $\mathcal{T}_0 = 1_{\mathbb{C}^{2m}}$, also $\det(W_0)=1$. Let us now compute the winding number $$\mathrm{wn}([W])=w(\det(W_k))=\frac{1}{2\pi i} \int_{S^1} \frac{\partial_k \det (W_k)}{\det (W_k)}dk.$$
Since \( W_k \) is unitary we can write
\[
\det (W_k) = e^{i \theta(k)},
\]
for some smooth real-valued function \( \theta(k) \); hence, by $2\pi$-periodicity of $W_k$ we have
\begin{equation*}
     \theta(k+2\pi)=\theta(k)+2\pi n_1, \qquad  n_1 \in \mathbb{Z}.
\end{equation*} 
Moreover, since $\mathfrak{O}^{(-)}$ is anti-unitary we have
\begin{equation*}
\det (W_{-k}) = \det \left(\mathfrak{O}^{(-)} W_k \left(\mathfrak{O}^{(-)}\right)^{-1}\right) = \left({\det (W_k)}\right)^*= e^{-i \theta(k)},
\end{equation*} 
so that
\begin{equation*}
\theta(-k) = -\theta(k) + 2\pi n_2, \qquad  n_2 \in \mathbb{Z};
\end{equation*} 
differentiating both sides gives:
\begin{equation*}
-\theta'(-k) = -\theta'(k),
\end{equation*} 
which implies that \( \theta'(k) \) is an even function.\\
Due to $2\pi$-periodicity of $\theta'(k)$ we can compute $w(\det(W_k))$ in the interval $[-\pi,+\pi]$ as
\begin{equation*}
    w(\det(W_k))=\frac{1}{2\pi} \int_{-\pi}^{+\pi} \theta'(k)  dk
\end{equation*}
and, noting that by parity of $\theta'(k)$
\begin{equation*}
    \int_{-\pi}^{0} \theta'(k)  dk=-\int_{\pi}^0 \theta'(-\kappa)d\kappa=\int_0^{\pi} \theta'(-\kappa)d\kappa=\int_0^{\pi} \theta'(\kappa)d\kappa,
\end{equation*}
we conclude that
\begin{equation*}
    w(\det(W_k))=\frac{1}{2\pi} \int_{-\pi}^{+\pi} \theta'(k)  dk=2\left(\frac{1}{2\pi} \int_{0}^{+\pi} \theta'(k)  dk\right) \in 2\mathbb{Z}
\end{equation*}
since 
\begin{equation*}
    \frac{1}{2\pi} \int_{0}^{+\pi} \theta'(k)  dk=\frac{1}{2\pi i } \int_{0}^{+\pi} \frac{\partial_k \det (W_k)}{\det (W_k)} dk
\end{equation*}
is a winding number in half circumference due to the condition $\det(W_0)=\det(W_{\pi})$.
\end{proof}
Therefore, whenever the physical system admits a symmetry represented, at the level of Hilbert space, by an anti-linear operator such that its square is $-1_{\mathbb{C}^{2m}}$, the Zak phase is even and the previously defined $\mathbb{Z}_2$ invariant turns out to be identically zero. In this case, the invariant does not read topological triviality but instead the presence of a particular structure: the quaternionic one. It follows that the $\mathbb{Z}_2$ invariant is actually influenced also by non-topological structures; only when a quaternionic structure is not present, therefore the invariant is not identically zero, it contains topological information that reports a trivial or non-trivial topological phase. Note, however, that the vanishing of $\mathrm{I}^{(\mathrm{AZC-class})}(H)$ does not necessarily imply the presence of a quaternonic structure. In fact, if $\mathrm{I}^{(\mathrm{AZC-class})}(H)$ is zero, there are two possibilities: either there is a quaternionic structure or the system is in the topologically trivial phase. Since the symmetry operators of the theory are generally known, it is relatively simple to discard the hypothesis of the presence of a quaternionic structure. Conversely, as already noted, if a quaternionic structure is present, the invariant $\mathrm{I}^{(\mathrm{AZC-class})}(H)$ loses its ability to report the presence of a non-trivial topological phase. In conclusion, the only AZC symmetry classes where the invariant can attain non-trivial values are classes AIII, BDI and D.

\section{Focus on the symmetry class BDI}\label{focBDI} 
In order to provide an example of a (partial) topological information captured by the Zak phase invariant, we focus on a fully symmetric chain in class BDI, i.e.\ a model which displays all three discrete symmetry operators discussed in Appendix \ref{3.2}, which in turn square to the identity. 
In this case, the invariant $\mathrm{I}^{(\mathrm{BDI})}(H)$ records ``a bit'' of the topology of the system; in fact, Kitaev's classification labels each BDI chain with a $\mathbb{Z}$-valued invariant, while the Zak phase returns the invariant $\mathrm{I}^{(\mathrm{BDI})}(H) \in \mathbb{Z}_2$. The aim of this section is to show, at least within an explicit but fairly general example, that the invariant $\mathrm{I}^{(\mathrm{BDI})}(H)$ takes into account the parity of the topological invariant predicted by Kitaev's classification.
\subsection{Arbitrary distant nearest neighbors Kitaev chain}
We consider a generalization of Kitaev's chain, originally proposed in \cite{kitaev2001} as simplified model for a topological superconductor. Kitaev's Hamiltonian accounts only for nearest-neighbour hoppings. In our generalization we consider instead an Hamiltonian $H$ with fibered Hamiltonians given as follows:
\begin{equation*}
    H_k=\begin{bmatrix}
0 & \sum_{n \in \mathbb{Z}}c_ne^{ink} \\
\sum_{n \in \mathbb{Z}}c^*_ne^{-ink} & 0
\end{bmatrix}  \in \mathrm{M}_2(\mathbb{C}).
\end{equation*}
On the fiber Hilbert space $\mathbb{C}^2$, the symmetry operators are given by
\begin{itemize}
  \item time reversal symmetry $\mathfrak{T} = \mathcal{K}$ (complex conjugation);
  \item particle-hole symmetry $\mathfrak{C} = \sigma_3 \mathcal{K}$, where $\sigma_3 = \begin{bmatrix} 1 & 0 \\ 0 & -1 \end{bmatrix}$ is the third Pauli matrix ;
  \item chiral symmetry  $\mathfrak{S}=\sigma_3 = \mathfrak{T}\mathfrak{C}$.
\end{itemize}
Requiring the appropriate (anti)commutation relations of $H_k$ with such symmetry operators imposes the conditions $c_n^*=c_n$. This model takes into account interactions with arbitrarily distant nearest neighbors and we can impose $c_n=0$ if $|n|>R$ to fit the condition of finite-range hopping model. If we set $z(k):=\sum_{n \in \mathbb{Z}}c_ne^{ink}$, the $\mathbb{Z}$-valued invariant predicted by Kitaev's periodic table can be computed as the winding number around the origin of the complex curve $\Gamma$ obtained as the image of map $z \colon S^1 \to \mathbb{C}$ \cite{graf2018bulk}.

The Bloch bands are given by
\[ E_{\pm}(k) =\pm \sqrt{z(k)z(k)^*}=\pm |z(k)|. \]
Solving the eigenvalue equation \( H_kv(k) = E_-(k)v(k) \) yields the normalized eigenvector
\begin{equation} \label{eqn:KitaevEigenv}
v(k) = \frac{1}{\sqrt{2}} 
\begin{bmatrix}
-1 \\
\frac{z^*(k)}{|z(k)|}
\end{bmatrix},
\end{equation}
from which we find, with a straightforward computation,
\[
\langle v(k), \partial_k v(k) \rangle = \frac{1}{4} \frac{(\partial_k z(k))^*z(k)-z^*(k)(\partial_k z(k))}{z^*(k)z(k)}= -\frac{i}{2} \text{Im}\left( \frac{\partial_kz(k)}{z(k)} \right).
\]
Therefore, the Zak phase associated to the choice \eqref{eqn:KitaevEigenv} of the eigenvector for the negative band --- or rather to the choice $\{v(k), \mathfrak{S}(v(k))\}$ of a Bloch basis for $\mathbb{C}^2$ --- reads (compare Corollary \ref{cor:gauge})
\begin{align*}
 \mathcal{Z}^{(2)}_{\{v(k), \mathfrak{S}(v(k))\}} & = \frac{1}{\pi i} \int_{S^1} \langle v(k), \partial_k v(k) \rangle dk = - \frac{1}{2\pi} \int_{S^1} \text{Im}\left( \frac{\partial_kz(k)}{z(k)} \right)= \\
 & = \text{Re}\left(- \frac{1}{2\pi i} \int_{S^1} \frac{\partial_kz(k)}{z(k)} \right).
\end{align*}
By the Cauchy integral formula, the integral on the right-hand side of the above computes the negative of the winding number $w(\Gamma)$ around the origin of the curve $\Gamma$: this winding number is an integer,
and therefore the real part is redundant. 
We conclude that in the generalized Kitaev chain the \(\mathbb{Z}_2\) invariant can be computed as
\[
\mathrm{I}^{(\mathrm{BDI})}(H) = w(\Gamma) \bmod 2
\]
and thus the Zak phase can only account for the parity of the $\mathbb{Z}$ invariant predicted by $K$-theory.

\subsection{Multi-channel arbitrary distant nearest neighbors Kitaev chain}
A further way to generalize Kitaev's chain is to allow $2m$ internal degrees of freedom per unit cell (the case with $m$ ``chiral channels'') and hopping/pairing terms with finite (but arbitrarily large) range. In Bloch form, we consider the fibered Hamiltonians
\begin{equation*}
    H_k=
    \begin{bmatrix}
        0 & A_k \\
        A_k^{\dagger} & 0
    \end{bmatrix} \, ,
\end{equation*}
where $A_k \in \mathrm{M}_{m}(\mathbb{C})$ is a matrix-valued trigonometric polynomial of the form
\begin{equation*}
    A_k := \sum_{n\in\mathbb{Z}} A_n e^{i n k} \, , 
    \qquad
    A_n = 0 \quad  \text{if } |n|>R \, .
\end{equation*}
We choose the same symmetry operators as in the scalar case, now promoted to maps on~$\mathbb{C}^{2m}$, namely
\begin{equation*}
    \mathfrak{T}=\mathcal{K} \, , 
    \qquad 
    \mathfrak{C}=\Sigma_3 \mathcal{K} \, , 
    \qquad 
    \mathfrak{S}=\Sigma_3 \, ,
    \qquad
    \Sigma_3 :=
    \begin{bmatrix}
        1_{\mathbb{C}^m} & 0\\
        0 & -1_{\mathbb{C}^m}
    \end{bmatrix} \, .
\end{equation*}
The symmetries operators impose
\begin{equation*}
    A_{-k} = A_k^*,
\end{equation*}
equivalently $A_n^*=A_n$ for all $n$ (i.e all Fourier coefficients are real matrices).

The spectrum of $H_k$ is symmetric with respect to $0$ and is determined by the singular values of $A_k$. More precisely, if $\{\lambda_j(k)\}_{j=1}^m$ are the (non-negative) eigenvalues of the matrix $A_k^{\dagger}A_k$, then
\begin{equation*}
    E_{\pm,j}(k)=\pm \sqrt{\lambda_j(k)} \, , 
    \qquad j\in \{1,\dots,m\} \, .
\end{equation*}
Assuming that the system is gapped at $0$, i.e. $    \det(A_k)\neq 0 \ \text{for all } k\in S^1,$
then $A_k$ admits a smooth polar decomposition
\begin{equation*}
    A_k = U_k F_k \, ,
    \qquad
    U_k := A_k (A_k^{\dagger}A_k)^{-1/2} \in \mathrm{U}(m) \, ,
    \qquad
    F_k := (A_k^{\dagger}A_k)^{1/2} > 0 \, .
\end{equation*}
As in the scalar case, the occupied (negative-energy) subspace can be described by a smooth frame. A convenient choice is the $2m\times m$ matrix
\begin{equation} \label{eqn:Vk}
    V(k)
    :=
    \frac{1}{\sqrt{2}}
    \begin{bmatrix}
        -1_{\mathbb{C}^m}\\
        U_k^{\dagger}
    \end{bmatrix} \, ,
    \qquad
    V(k)^{\dagger}V(k)=1_{\mathbb{C}^m} \, ,
\end{equation}
which is an orthonormal frame of the occupied subspace and it reduces to $\frac{1}{\sqrt{2}}(-1,\, z^*/|z|)^T$ when $m=1$, compare \eqref{eqn:KitaevEigenv}. A direct computation, similar to the one performed in the previous section, exhibits the Zak phase invariant as
\begin{equation*}
\begin{aligned}
    \mathrm{I}^{(\mathrm{BDI})}(H)
    &=\left[\frac{1}{\pi i}\int_{S^1}\mathrm{tr}\,\bigl(V(k)^{\dagger}\partial_k V(k)\bigr)\,dk\right]=\left[-\frac{1}{2\pi i}\int_{S^1}\,\mathrm{tr}\,\bigl(U_k^{\dagger}\partial_k U_k\bigr)\,dk\right]=\\
    &=\left[-\frac{1}{2\pi i}\int_{S^1}\frac{\partial_k\det(U_k)}{\det(U_k)}\,dk\right] \, .
\end{aligned}
\end{equation*}
The integral computes the winding number of the loop $k\mapsto \det(U_k)$ around the origin, hence it is an integer and the brackets read its parity. Finally, the winding can be written directly in terms of $A_k$. Indeed, from $A_k=U_kF_k$ and $\det(F_k)>0$ for all $k$, we have
\begin{equation*}
    \det(U_k)=\frac{\det(A_k)}{|\det(A_k)|} \, ,
\end{equation*}
so the loop $\det(U_k)$ has the same winding number as $\det(A_k)$. Indeed, the normalization mapping $z \mapsto \frac{z}{|z|}$
represents a continuous homotopy that radially contracts the  determinant's values onto the unit circle; as long as $\det(A_k) \neq 0$ for all $k$ (i.e. the system remains gapped and no phase transitions occur), this contraction never crosses the origin and does not alter the total phase accumulated during the loop. Consequently, the map $k \mapsto \det(A_k)$ and its normalized counterpart 
$k \mapsto \det(U_k)$ belong to the same homotopy class; therefore, their 
winding numbers must be same:
\begin{equation*}
    w[\det(A_k)] = w[\det(U_k)]
\end{equation*}

In conclusion, the invariant
$\mathrm{I}^{(\mathrm{BDI})}(H)$ reads the parity of the winding number of the complex function $k\mapsto \det(A_k)$, that is $\text{wn}([A_k])$ (compare again \cite{graf2018bulk}). This computation also suggest that, although the Zak phase is generically not gauge invariant, it is still possible to compute the $\mathbb{Z}$-valued invariant by means of an appropriate, specific symmetric and periodic frame, like the one exhibited in \eqref{eqn:Vk}.

\section{Conclusion and outlook}
The study of topological phases of matter has unveiled deep and unexpected connections between quantum physics, geometry and topology. Along this path tools from functional analysis, bundle theory and $K$-theory have played a fundamental role in bridging concepts from pure mathematics with physical phenomena in condensed matter physics. Within this context, our main goal has been to understand whether the Zak phase of a condensed matter system contains topological information and how complete this information is compared to the ten-fold way classification.

We have investigated chains from all AZC symmetry classes and with finitely many internal degrees of freedom, extending the results of \cite{Monaco_2023}. We have shown how a $\mathbb{Z}_2$-valued gauge and topological invariant, built using the Zak phase, can be naively defined for every AZC class in $1$D. However, in this definition of the invariant $\mathrm{I}^{(\mathrm{AZC-class})}(H)$ one does not consider the characterizing features of each symmetry class, that is, which discrete symmetry operators square to $\pm 1_{\mathbb{C}^{2m}}$. Completing the analysis in~\cite{Monaco_2023}, we showed how in presence of a quaternionic structure, i.e. an anti-linear operator squaring to $-1_{\mathbb{C}^{2m}}$, the invariant $\mathrm{I}^{(\mathrm{AZC-class})}(H)$ has to vanish. In this case, the invariant does not read the topologically trivial phase but instead the presence of a particular structure: the quaternionic one. Among the symmetry classes in which the $\mathbb{Z}_2$ invariant can be non-zero, we focused on the BDI class, which the ten-fold way classification labels with a $\mathbb{Z}$ invariant. We therefore analyzed generalized Kitaev chains with arbitrary range couplings in class BDI and we underlined how $\mathrm{I}^{(\mathrm{BDI})}(H)$ can be interpreted as the parity of the $\mathbb{Z}$ invariant provided by the periodic table. 

A natural direction for future research concerns the extension of the present analysis to higher spatial dimensions. In this setting, it would be particularly interesting to investigate whether Zak phase-type invariants associated with one-dimensional subsystems, commonly interpreted as weak invariants, retain a comparable sensitivity to additional geometric structures, such as quaternionic structures, in two- and three-dimensional systems with full discrete symmetry constraints. More generally, the analysis of complete topological invariants in lattice-periodic systems, replicating and complementing those in Kitaev's table with weak invariants, is a challenging task: this work was initiated in \cite{PelusoThesis} for symmetry classes displaying a single simmetry operator, and a general study for fully-symmetric classes is ongoing \cite{MMP}. From a more physical perspective, it would also be of interest to analyze the robustness of the proposed invariant under perturbations breaking strict translation invariance, such as disorder, or in the presence of weak interactions, in order to assess to what extent the construction survives beyond the idealized non-interacting, perfectly periodic regime: stable indices of this sort have been proposed in \cite{grossmann2016index, thiang2016k, chung2025topological, Chung_Shapiro_II, Chung_Shapiro_All, bachmann2020many, bachmann2025many, bachmann2026index} with a range of different mathematical settings and techniques, also for higher-dimensional materials. 

\appendix 

\section{Bundles from condensed matter systems}\label{2.1}

In this Appendix, we recall the main notions to formulate a mathematical model of a crystalline solid-state system. In view of the discussion in the main text, we will formulate the theory for discrete, tight-binding Hamiltonians; with further analytical care, the setup can be extended to include continuum models of solids, using Hamiltonian of differential type (e.g.\ Schr\"odinger operators). We refer the reader to~\cite{ReedSimonIV, kuchment2016overview, Monaco_2014, Monaco_2018, lewin2024spectral} and references therein for more detailed accounts of this mathematical framework.

\subsection{Lattice Hamiltonians}

In solids, the ionic cores, whose positions can be assumed to be fixed in view of the Born-Oppenheimer approximation \cite{bransden2003physics}, form a crystal, that is, a regular arrangement in space which repeats in all directions. Electrons move in this background. To model its quantum dynamics, we need a Hamiltonian which is periodic with respect to a translation group. This group is called the Bravais lattice of translations $\Gamma$: in $d$-dimensions,
\begin{equation*}
    \Gamma:=\mathrm{Span}_{\mathbb{Z}}\{a_1,\ldots,a_d\}\simeq \mathbb{Z}^d \subset \mathbb{R}^d, 
\end{equation*}
where $\{a_1,\ldots,a_d\}$ is a basis of $\mathbb{R}^d$, in general different from the regular arrangement of positions of the ionic cores.
The periodicity condition means that translations by vectors $\gamma \in \Gamma$ are implemented as operators $U_{\gamma}$ acting unitarily on the Hilbert space $\mathcal{H}$ of the electron as dynamical symmetries, i.e. 
\begin{equation*}
    [U_{\gamma},H]=0.
\end{equation*}
Moreover, the map $U: \Gamma \rightarrow \mathcal{U}(\mathcal{H})$ that maps $\gamma \mapsto U_{\gamma}$ must be a unitary representation of $\Gamma$ on $\mathcal{H}$. For $\mathcal{H}=\ell^2(\Gamma) \otimes \mathbb{C}^N \simeq \ell^2(\Gamma;\mathbb{C}^N)$, for example, translations are implemented as 
\begin{equation*}
    (U_{\gamma}(\psi))_n=\psi_{n+\gamma}, \qquad \forall \ \psi = (\psi_n)_{n \in \Gamma} \in \mathcal{H}, \ \gamma \in \Gamma\,.
\end{equation*}
Such discrete models often occur in condensed matter physics: each cell of the Bravais lattice contains only a finite number $N \in \mathbb{N}$ of degrees of freedom. Electrons can then only jump among lattice sites, as if they were bound to their positions: this is the so-called tight-binding approximation \cite{TB1,TB2}. The typical action of the Hamiltonian on $\mathcal{H}$ is given by
\begin{equation*} \label{eq:tight_binding}
(H(\psi))_x = \sum_{y \in \Gamma} t_{x,y} \psi_y + V_x \psi_x, \qquad \psi = (\psi_y)_{y \in \Gamma} \in \mathcal{H},\ x \in \Gamma.
\end{equation*}
The coefficient $t_{x,y}$ is interpreted as the ($N \times N$-matrix-valued) ``hopping amplitude'' for the electron to jump from site $x$ to site $y$, while $V_x$ represents an on-site potential energy. Commutativity with lattice translations require $t_{x,y} \equiv t_{x-y}$ for all $x,y \in \Gamma$.

Let us note that the operators $U_{\gamma}$ commute since $\Gamma$ is an abelian group. Hence, we can simultaneously diagonalize the operators $U_{\gamma}$ and we can choose a common
generalized eigenvector $\psi$ for all the translation operators such that
\begin{equation*}
    U_{\gamma}(\psi) = \chi_{\gamma}\, \psi \qquad \forall \ \gamma \in \Gamma
\end{equation*}
where $\chi_{\gamma} \in \mathrm{U}(1)$ since $U_{\gamma}$ is a unitary operator. It is possible to show that also $\chi : \Gamma \rightarrow \mathrm{U}(1)$, mapping $\gamma \mapsto \chi_{\gamma}$, is a unitary representation of $\Gamma$; this map is in fact a unitary character. Since $\Gamma \simeq \mathbb{Z}^d$ we must have 
\begin{equation*}
    \chi_{\gamma} \equiv \chi_\gamma(k) :=e^{ik \cdot \gamma}, \qquad k \in \mathbb{R}^d/\Gamma^*,
\end{equation*}
where $\Gamma^*$ is the reciprocal lattice 
\begin{equation*}
    \Gamma^*:=\{\lambda \in \mathbb{R}^d \ | \ \lambda \cdot \gamma \in 2\pi \mathbb{Z} \quad \forall \ \gamma \in \Gamma  \} \simeq (2\pi \mathbb{Z})^d.
\end{equation*}
Therefore
\begin{equation*}
    \Gamma^*=\text{Span}_{2\pi \mathbb{Z}}\{b_1,...,b_d\},
\end{equation*}
where $\{b_1,...,b_d\}$ is the dual basis defined by $a_{i}\cdot b_{j}=\delta_{i,j} \ i,j \in \{1,...,d\}.$ The quotient $\mathbb{R}^d/\Gamma^* \simeq \mathbb{T}^d$ is know as the Brillouin torus in condensed matter literature after the pioneering work \cite{brill}. The parameter $k \in \mathbb{T}^d$, which replaces the linear momentum\footnote{The linear momentum is a good quantum number in the case of a quantum system in which the translational symmetry group is the full $(\mathbb{R}^d,+).$}, is called Bloch momentum or  quasi-momentum. We note that, writing $k \in \mathbb{R}^d/\Gamma^*$ simply means that $k \in \mathbb{R}^d$ is determined up to elements in
$\Gamma^*$, or equivalently that $\chi_{\gamma}$ is $\Gamma^*$-periodic, i.e. $e^{i(k+\lambda)\cdot \gamma}=e^{ik\cdot \gamma}$ for all $\lambda \in \Gamma^*$.

We can now look at how the Hamiltonian $H$ acts on wavefunctions $\psi_k$ with well-defined quasi-momentum $k$: since the Hamiltonian $H$ and the translation operators $U_{\gamma}$ commute, it will map such a wavefunction to one with the same quasi-momentum. Therefore, we can formulate the energy eigenvalue problem for fixed $k \in \mathbb{T}^d$
\begin{equation}\label{eqbh}
    H_k(\psi_k) =E_k\psi_k,
\end{equation}
with the additional condition 
\begin{equation}\label{ctra}
    U_{\gamma}(\psi_k)=\chi_{\gamma}(k) \psi_k=e^{ik \cdot \gamma}\psi_k.
\end{equation}

We give the following useful definitions.
\begin{Definition}[Bloch notions]
Referring to \eqref{eqbh} and \eqref{ctra}, the generalized eigenfunction $\psi_k$ are known as Bloch function and the corresponding energies $E_k$, viewed as
functions of $k$, are called Bloch bands. Finally, the operator $H_k$, which acts as the Hamiltonian but only on wavefunctions labeled by the quasi-momentum $k$, is called the Bloch Hamiltonian.
\end{Definition}
 
Functions in~\eqref{ctra} constitute a $k$-dependent Hilbert space, called the fiber Hilbert space: for $\mathcal{H}=\ell^2(\Gamma;\mathbb{C}^N)$, this reads 
\begin{equation}\label{hilbk}
    \mathcal{H}_k:=\{\psi \in \ell_{\text{loc}}^2(\Gamma;\mathbb{C}^N) \ | \ U_{\gamma} (\psi)=e^{ik\cdot \gamma} \psi, \ \gamma \in \Gamma\} \simeq \mathbb{C}^N\,,
\end{equation}
where the last isomorphism is realized by evaluation of the restriction of any function satisfying the pseudo-periodic boundary condition~\eqref{ctra} to a unit cell of the Bravais lattice $\Gamma$ (the number of points in this cell possibly contributes to the $N$ local degrees of freedom); the inverse is given by pseudo-periodic extension from the fundamental cell to the whole $\Gamma$. In particular, the Bloch Hamiltonians $H_k$, $k \in \mathbb{T}^d$, can be realized as $N \times N$ Hermitian matrices.

\subsection{Bloch Hamiltonians and the band-gap theory}\label{sub22.2}

One way to ``extract'' the Bloch Hamiltonians $(H_k)_{k \in \mathbb{T}^d}$ from $H$ is through the Bloch--Floquet transform~\eqref{tra}. The Hilbert space which constitutes the range of the Bloch--Floquet transform $\mathcal{F}_d$ is isomorphic to the direct integral Hilbert space $\int_{\mathbb{T}^d}^{\oplus}\mathcal{H}_k\,dk$ parameterized by the measure space given by $(\mathbb{T}^d,\Omega,dk)$. A particular class of operators on this Hilbert space are the so-called fibered operators.
\begin{Definition}[Fibered operator]
 A fibered operator is an operator of the form 
 \begin{equation*}
    O=\int_{\mathbb{T}^d}^{\oplus}O_kdk,
\end{equation*}
where $O_k$ is a family of operators on $\mathcal{H}_k$ parametrized by $k \in \mathbb{T}^d$ which
acts fiberwise on the direct integral and such that the map $\Phi: k \mapsto O_k$ is measurable in the sense that the functions $$\Phi_{ij} : k \mapsto \langle f_i, O_k f_j\rangle $$
are measurable for every $i,j \in \{1,\ldots, N\}$, where $\{f_j\}_{j=1}^{N}$ is the canonical basis of $\mathbb{C}^N \simeq \mathcal{H}_k$ (compare~\eqref{hilbk}).
\end{Definition}
The following Proposition shows that the translation-invariant Hamiltonian $H$ leads to a fibered operator.
\begin{Proposition}
    Let $A$ be an operator on $\mathcal{H}$ such that $[U_{\gamma},A]=0$. Then the transformed operator $O=\mathcal{F}_d A \mathcal{F}_{d}^{-1}$ is a fibered operator. Moreover, the fiber operators $\{O_k\}_{k \in \mathbb{T}^d}$ fulfill $\Gamma^*$-periodicity $$O_{k+\lambda}=O_k, \qquad \lambda \in \Gamma^*.$$ 
\end{Proposition}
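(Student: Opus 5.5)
We need to prove that if $[U_\gamma,A]=0$ for all $\gamma\in\Gamma$, then $O=\mathcal{F}_d A\mathcal{F}_d^{-1}$ is fibered with $\Gamma^*$-periodic fibers. The approach is the standard one: show first that $\mathcal{F}_d$ turns translations into multiplication by characters. Then show that an operator commuting with all these multiplications also commutes with multiplication by every bounded function of $k$. An operator on $L^2(\mathbb{T}^d;\mathbb{C}^N)$ with that property is decomposable.

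\emph{Step 1: translations become characters.} Directly from \eqref{tra},
\[
\mathcal{F}_d U_\gamma (e_n\otimes v)=\mathcal{F}_d(e_{n-\gamma}\otimes v)=e^{-i\gamma\cdot k}\,\mathcal{F}_d(e_n\otimes v).
\]
By linearity and continuity, $\mathcal{F}_d U_\gamma\mathcal{F}_d^{-1}=M_{\gamma}$, the operator of multiplication by $e^{-i\gamma\cdot k}$. Hence $[O,M_\gamma]=0$ for all $\gamma$. Since the $M_\gamma$ generate the trigonometric polynomials, $O$ commutes with multiplication by every trigonometric polynomial $\phi(k)1_{\mathbb{C}^N}$.

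\emph{Step 2: extension to all of $L^\infty$.} Trigonometric polynomials are weak-$*$ dense in $L^\infty(\mathbb{T}^d)$, for instance via Fej\'er means, which are uniformly bounded and converge a.e. The corresponding multiplication operators therefore converge strongly, by dominated convergence. So $O$ commutes with the whole abelian von Neumann algebra $L^\infty(\mathbb{T}^d)\otimes 1_{\mathbb{C}^N}$.

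\emph{Step 3: reconstruction of the fibers.} This is the main point. Rather than quote the abstract theorem that the commutant of the diagonal algebra is the decomposable operators, I would give a hands-on argument. Set $O_{ij}(k):=\bigl(O f_j\bigr)_i(k)\cdot(2\pi)^{d/2}$, where $f_j$ is regarded as the constant function $(2\pi)^{-d/2}f_j\in L^2(\mathbb{T}^d;\mathbb{C}^N)$. Each $O_{ij}$ is measurable and lies in $L^2$. For $\psi=\sum_j\phi_j f_j$ with $\phi_j\in L^\infty$, Step 2 gives
\[
(O\psi)_i(k)=\sum_j O_{ij}(k)\phi_j(k).
\]
One then checks that $\|O_k\|\le\|O\|$ for a.e. $k$. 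To do this, test the identity on $\psi=\chi_E\,w$, where $E$ is a small set and $w$ runs over a countable dense subset of $\mathbb{C}^N$, and apply the Lebesgue differentiation theorem. This gives $O_{ij}\in L^\infty$. By density, $O=\int^\oplus O_k\,dk$, and measurability holds by construction. This is the step needing most care (null sets, essential boundedness). If $A$ is additionally finite range, it can be bypassed completely: Proposition \ref{BFexpl} gives the fibers explicitly as trigonometric polynomials.

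\emph{Step 4: periodicity.} The fibers are a priori defined on $\mathbb{T}^d=\mathbb{R}^d/\Gamma^*$. Lifting them to $\mathbb{R}^d$ via the quotient map gives $O_{k+\lambda}=O_k$ for all $\lambda\in\Gamma^*$. Concretely, $e^{i(k+\lambda)\cdot n}=e^{ik\cdot n}$ for every $n\in\Gamma$, so every function in the range of $\mathcal{F}_d$, and hence every $O_{ij}$ built from them, is $\Gamma^*$-periodic, at least up to a choice of representative on a null set, which can be made periodic.
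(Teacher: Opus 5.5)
Your proof is correct, but it takes a different route from the paper. The paper works concretely in the canonical basis $\{e_\gamma\otimes f_i\}$. Translation invariance gives the Toeplitz property $A_{\beta,\gamma;i,j}=A_{0,\gamma-\beta;i,j}$ of the matrix elements, from which $\mathcal{F}_d A(e_\gamma\otimes f_i)=O_k\,\mathcal{F}_d(e_\gamma\otimes f_i)$, with $O_k\colon f_i\mapsto\sum_{\eta,j}A_{0,\eta;i,j}e^{-ik\cdot\eta}f_j$ the matrix-valued Fourier series of the hopping amplitudes. Periodicity then follows at once from $\lambda\cdot\eta\in 2\pi\mathbb{Z}$.

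You instead conjugate translations into multiplication by characters, extend the commutation to $L^\infty(\mathbb{T}^d)\otimes 1_{\mathbb{C}^N}$, and rebuild the fibers from the action of $O$ on constant functions. This is the standard proof that the commutant of the diagonal algebra consists of decomposable operators. Both constructions give the same fibers: your constant function $(2\pi)^{-d/2}f_j$ is exactly $\mathcal{F}_d(e_0\otimes f_j)$, so your $O_{ij}(k)$ is the paper's Fourier series.

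The paper's route yields an explicit formula for $O_k$, tying directly to Proposition~\ref{BFexpl}, and makes periodicity a computation rather than a tautology. However, it checks the fiber identity only on basis vectors. It also leaves implicit that the series, a priori only an $L^2$ function of $k$ (its coefficients are square-summable because $A(e_0\otimes f_i)\in\mathcal{H}$), is essentially bounded. That is needed to extend the identity to all of $\mathcal{H}$ and get a genuine bounded decomposable operator. Your Step~3 supplies exactly this, via $\|O_k\|\le\|O\|$ a.e.\ plus density. So your argument is the more complete one for a general bounded translation-invariant $A$, at the cost of not displaying the fibers in terms of hopping matrices.

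Two small remarks. First, you implicitly assume $A$ is bounded; the statement does not say so, but the context supports it. Second, the Lebesgue differentiation theorem is unnecessary in Step~3: taking $E$ to be the set where $|O(k)w|>\|O\|\,|w|$ already gives a contradiction if that set has positive measure.
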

\begin{proof}
The vectors $\{e_\gamma \otimes f_i\}_{\gamma \in \Gamma,\ i \in \{1,\ldots,N\}}$ constitute a basis for $\mathcal{H} = \ell^2(\Gamma) \otimes \mathbb{C}^N$. Set
\[ A_{\beta, \gamma; i, j} := \left\langle e_\beta \otimes f_j,\ A (e_\gamma \otimes f_i) \right \rangle\,, \quad \beta, \gamma \in \Gamma,\ i,j \in \{1,\ldots,N\}, \]
and observe that
\begin{align*}
A_{\beta, \gamma; i, j} & = \left\langle U_\beta \, e_0 \otimes f_j,\ A (e_\gamma \otimes f_i) \right \rangle = \left\langle e_0 \otimes f_j,\ U_{-\beta} A (e_\gamma \otimes f_i) \right \rangle \\
& = \left\langle e_0 \otimes f_j,\ A U_{-\beta} (e_\gamma \otimes f_i) \right \rangle = \left\langle e_0 \otimes f_j,\ A (e_{\gamma-\beta} \otimes f_i) \right \rangle \\
& = A_{0,\gamma-\beta; i,j}\,.
\end{align*}
Consequently
\begin{align*}
\mathcal{F}_d A (e_\gamma \otimes f_i)& = \sum_{\substack{\beta \in \Gamma\\1 \le j \le N}} A_{\beta, \gamma; i, j} \, \mathcal{F}_d (e_\beta \otimes f_j) = \sum_{\substack{\beta \in \Gamma\\1 \le j \le N}} A_{0,\gamma-\beta; i, j} \, \frac{e^{ik\cdot\beta}}{\sqrt{2\pi}}\, f_j \\
&= \frac{e^{ik\cdot\gamma}}{\sqrt{2\pi}} \, \sum_{\substack{\eta \in \Gamma\\1 \le j \le N}} A_{0,\eta; i, j} \, e^{-ik\cdot\eta} \, f_j\,.
\end{align*}

Define now
\[ O_k \colon \mathcal{H}_k \to \mathcal{H}_k \,, \quad f_i \mapsto \sum_{\substack{\eta \in \Gamma\\1 \le j \le N}} A_{0,\eta; i, j} \, e^{-ik\cdot\eta} \, f_j\,. \]
It is clear then that, by construction
\[ \mathcal{F}_d A (e_\gamma \otimes f_i) = O_k \mathcal{F}_d (e_\gamma \otimes f_i) \]
and thus that $(O_k)_{k \in \mathbb{T}^d}$ are the fibers of $O = \mathcal{F}_d A \mathcal{F}_d^{-1}$. Upon inspection, since $\lambda \cdot \eta \in 2\pi\mathbb{Z}$ for all $\lambda \in \Gamma^*$, one also immediately gets that $O_{k+\lambda} = O_k$ for all such $\lambda$'s, as wanted.
\end{proof}

An immediate application of the above Proposition is that
\begin{equation*}
    \mathcal{F}_dH\mathcal{F}_d^{-1}=\int_{\mathbb{T}^d}^{\oplus} H_k dk
\end{equation*}
is a fibered operator, and the fiber Hamiltonians $H_k$ are self-adjoint on $\mathcal{H}_k$. Denoting their eigenvalues (Bloch bands) as $E_{n,k}$, assumed to be labeled in non-decreasing order
\begin{equation*}
    E_{0,k}\leq E_{1,k} \leq ... \leq E_{s,k}\leq E_{s+1,k}\leq ... \leq +\infty, 
\end{equation*}
the total spectrum of $H$ can be reconstructed as the union of the Bloch
bands
\begin{equation*}
    \sigma(H)=\bigcup_{n \in \{1,\ldots,N\}}\bigcup_{k \in \mathbb{T}^d}\{E_{n,k}\}.
\end{equation*}
The image of the $\Gamma^*$-periodic function $E_{n,k}: \mathbb{T}^d \rightarrow \mathbb{R}$ constitutes a
part of the spectrum of $H$. Different Bloch bands may have overlapping images, and their union in the spectrum of $H$ is referred to as a spectral band. When these spectral bands do not overlap, they are said to be separated by a spectral gap. As a result, the spectrum of $H$ alternates between spectral bands and spectral gaps, forming what is known as the band-gap structure. From a physical point of view, band-gap structure theory has been successfully used to explain many physical properties of solids, such as electrical resistivity, electrical conductivity, optical absorption and it is the foundation of the understanding of all solid-state devices like transistors or solar cells. 

\subsection{The Bloch bundle}

Under the spectral gap hypothesis that $H_k$ has no spectrum around $\mu = 0$ for all $k \in \mathbb{T}^d$, we have shown how the Riesz formula \eqref{eq:riesz} defines the spectral eigenprojections $P_k \equiv P^{(-)}_k$ onto negative energy bands. Assume that there are $m \in \{1, \ldots, N\}$ such bands, so that the rank of the projections is $m$. These data can be recast into a vector bundle over the Brillouin torus $\mathbb{T}^d$. 
\begin{Definition}[Bloch bundle]\label{blochbu}
The Bloch bundle associated to a condensed matter system is the triple $\mathcal{E}_{\mathrm{Bloch}}=(E_{\mathrm{Bloch}}, \pi_{\mathrm{Bloch}},X_{\mathrm{Bloch}})$ such that:
\begin{enumerate}
    \item the base $X_{\mathrm{Bloch}}$ is the Brillouin torus $\mathbb{T}^d:=\mathbb{R}^d/\Gamma^*$;
    \item the total space $E_{\mathrm{Bloch}}$ is defined as
    $$  E_{\mathrm{Bloch}}:=\{[k,\phi_k] \in (\mathbb{R}^d\times \mathcal{H}_k)/\sim_{\lambda} \ | \ \phi_k \in \mathrm{Range}(P_k)\}$$
    where $$ (k,\phi_k) \sim_{\lambda} (k',\phi'_{k'}) \qquad  \mathrm{if} \quad k'=k+\lambda, \ \phi'_{k'}=\tau_{\lambda}(\phi_k),$$
    for some $\lambda \in \Gamma^*$;
    \item the projection $\pi_{\mathrm{Bloch}}$ is given by $$\pi_{\mathrm{Bloch}} :  E_{\mathrm{Bloch}} \rightarrow \mathbb{T}^d, \quad [k,\phi] \mapsto k \ \mathrm{mod} \ \Gamma^*.$$
\end{enumerate}
\end{Definition}
The next Proposition shows that the Bloch bundle is in fact a complex vector bundle of rank $m$. The proof can be found e.g.\ in~\cite{Monaco_2014}.
\begin{Proposition}
    The triple $\mathcal{E}_{\mathrm{Bloch}}=(E_{\mathrm{Bloch}}, \pi_{\mathrm{Bloch}},X_{\mathrm{Bloch}})$ defined above is a complex vector bundle of rank $m$ over the torus $\mathbb{T}^d$.
\end{Proposition}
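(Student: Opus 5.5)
The plan is to build the bundle structure directly from the family of projections, using parallel transport to produce local trivializations. The construction has four parts: the fiber structure, local trivializations, the transition maps, and the topology on the total space.

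\emph{Fiber structure.} Working in the $\Gamma^*$-periodic realization where $\mathcal{H}_k \simeq \mathbb{C}^N$ (so that $\tau_\lambda$ is the identification induced by periodicity of $P_k$), each fiber $\pi_{\mathrm{Bloch}}^{-1}(k) \simeq \mathrm{Range}(P_k)$ is a complex vector space. Its dimension is $\mathrm{tr}\,P_k$, which is continuous and integer valued, hence constant; by definition it equals $m$. Projections onto the fiber are well defined because $P_{k+\lambda}=P_k$.

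\emph{Local trivializations.} Fix $k_0$ and pick an orthonormal basis $\{\phi_1,\dots,\phi_m\}$ of $\mathrm{Range}(P_{k_0})$. Here there are two options. For $k$ in a small ball $B(k_0,r)$ with $\|P_k-P_{k_0}\|<1$, the vectors $P_k\phi_j$ are linearly independent. Alternatively, and more cleanly, one can use the Kato--Nagy intertwiner
\[
W_k:=\bigl(1-(P_k-P_{k_0})^2\bigr)^{-1/2}\bigl(P_kP_{k_0}+(1-P_k)(1-P_{k_0})\bigr),
\]
which is unitary, continuous (indeed analytic) in $k$, and satisfies $W_kP_{k_0}=P_kW_k$. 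With either choice, define
\[
\Psi: B(k_0,r)\times\mathbb{C}^m\to \pi_{\mathrm{Bloch}}^{-1}(B(k_0,r)), \qquad (k,c)\mapsto \Bigl[k,\textstyle\sum_j c_jW_k\phi_j\Bigr].
\]
This map is fiberwise a linear isomorphism. I would choose $r$ smaller than half the injectivity radius of $\mathbb{R}^d\to\mathbb{T}^d$, so that the ball embeds into the torus and $\Psi$ descends to a map over an open set of $\mathbb{T}^d$.

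\emph{Transition maps.} On overlaps, the transition map is $c\mapsto \bigl(\langle W'_k\phi'_i,W_k\phi_j\rangle\bigr)_{ij}\,c$. This matrix is continuous in $k$ and lies in $\mathrm{GL}_m(\mathbb{C})$ (in fact in $\mathrm{U}(m)$ with the unitary choice), so the charts are compatible.

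\emph{Topology on the total space.} The main subtlety is defining the topology on $E_{\mathrm{Bloch}}$ so that the $\Psi$'s become homeomorphisms. I would give $E_{\mathrm{Bloch}}$ the quotient topology from the closed subset $\{(k,\phi):P_k\phi=\phi\}\subset\mathbb{R}^d\times\mathbb{C}^N$. Then I would check two things. First, $\sim_\lambda$ is a free, properly discontinuous $\Gamma^*$-action, so the quotient map is a local homeomorphism. Second, $\Psi$ is a continuous bijection onto its image with continuous inverse $[k,\phi]\mapsto(k,(\langle W_k\phi_j,\phi\rangle)_j)$. Continuity of $\pi_{\mathrm{Bloch}}$ is then immediate, and the result follows.
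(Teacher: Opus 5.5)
Your proposal is correct and is essentially the standard argument: the paper gives no proof of its own and defers to \cite{Monaco_2014}, where the bundle structure is likewise obtained from local trivializations built from the Kato--Nagy intertwining unitary on neighbourhoods with $\|P_k-P_{k_0}\|<1$, with $\tau_\lambda$ trivial by $\Gamma^*$-periodicity of $P_k$. The only cosmetic point is that your opening sentence announces parallel transport, whereas your construction actually uses the Kato--Nagy unitary (or the frame $P_k\phi_j$); parallel transport along rays from $k_0$ would work equally well, so nothing breaks.
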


\section{Time reversal, particle-hole and chiral symmetries}\label{3.2}
\subsubsection*{Time reversal symmetry}
A time reversal symmetry is a projective anti-unitary representation on the Hilbert space of the quantum system $\mathcal{H}$ of the group \( (\mathbb{Z}_2 = \{ \pm 1 \},\cdot) \), specified by the action of an operator \( T \)  that represent the non-trivial element \( -1 \in \mathbb{Z}_2 \). It is required to be a dynamical symmetry of the system, that is, it commutes with the Hamiltonian \( H \), since time reversal symmetry can be read off directly from the Schr\"{o}dinger equation. Being a projective anti-unitary representation, we have that
\begin{equation*}
T^2 = e^{i\vartheta} 1_{\mathcal{H}}.
\end{equation*}
By associativity and the fact that \( T \) is anti-unitary, we conclude that
\begin{equation*}
 \begin{aligned}  
&T^3 = T^2 T = e^{i\vartheta} T, \\ 
&T^3=T T^2 = T e^{i\vartheta} = e^{-i\vartheta} T,
\end{aligned} 
\end{equation*}
that is,
\begin{equation*}
e^{i\vartheta} = e^{-i\vartheta},
\end{equation*}
which implies that \( e^{i\vartheta} \in \{ \pm 1 \} \). A time reversal operator such that \( T^2 = +1_{\mathcal{H}} \) is called even while one such that \( T^2 = -1_{\mathcal{H}} \) is called odd.

\subsubsection*{Particle-hole symmetry}\label{APPB}

In quantum mechanics, so in particular in condensed matter physics, Dirac's theory implies a pairing between particles and holes (i.e. conducting and valence electrons in a solid). The definition of what is a ``particle'' and what is a ``hole'' is in fact arbitrary, hence the theory is required to be invariant under their exchange. This charge conjugation, or particle-hole symmetry, is described as a projective anti-unitary representation of \( (\mathbb{Z}_2,\cdot) \), specified by a charge conjugation operator \( C \) representing \( -1 \in \mathbb{Z}_2 \). Two types of representations are possible depending on whether
\begin{equation*}
C^2 = +1_{\mathcal{H}} \quad \text{or} \quad C^2 = -1_{\mathcal{H}}.
\end{equation*}
However, unlike time reversal, charge conjugation is not a dynamical symmetry in a strict sense: the charge conjugation operator anticommutes with the Hamiltonian
\begin{equation*}
C^{-1} H C = -H.
\end{equation*}
This reflects the fact that particle states have opposite energy to their corresponding hole partners.

\subsubsection*{Chiral symmetry}

The composition of a time reversal symmetry and a particle-hole symmetry, which we could impose to commute with each other, yields a projective unitary representation of \( (\mathbb{Z}_2,\cdot) \), characterized by the chiral operator
\begin{equation*}
  S := TC,  
\end{equation*}
which also represents \( -1 \in \mathbb{Z}_2 \). Anyway, chiral symmetry need not necessarily arise from the combination of time reversal and charge conjugation symmetries since it can exists independently even if the other two anti-unitary symmetries are broken. Generally speaking, \( S^2 = \pm 1_{\mathcal{H}} \), but by redefining \( \tilde{S} := i S = e^{i\pi/2} S \), a chiral symmetry can always be seen as a representation in which
\begin{equation*}
    S^2 = +1_{\mathcal{H}}.
\end{equation*}
In the case in which $S:=TC$, then $S^2=(TC)^2=TCTC = \pm T^2C^2$, depending on whether $T$ and $C$ commute or anticommute, with $T^2=\epsilon_T1_{\mathcal{H}}$ and $C^2=\epsilon_C1_{\mathcal{H}}$ where $\epsilon_T,\epsilon_C \in \{\pm 1\}$: hence again $S^2 = 1_{\mathcal{H}}$ up to a possible multiplication by $i$. Moreover, the chiral operator anticommutes with the Hamiltonian,
\[
S^{-1} H S = -H,
\]
so chiral symmetry is not a dynamical symmetry.

\section{Winding number for unitary matrices}
\label{appendixC}
We collect here some results on the concept of winding number, first for maps from the circle to itself and then for unitary-matrix-valued map $S^1 \to \mathrm{U}(N)$. Proofs can be found e.g.\ \cite[Appendix A]{Monaco_2023} and references therein. 

\begin{Proposition}[Winding number for a map $f : S^1 \to S^1$]\label{WFSM}
    Let $f, g : S^1 \to S^1$ be two differentiable maps. The integral
\[
w(f) := \frac{1}{2\pi i} \int_{S^1}  \frac{\partial_kf_k}{f_k}dk
\]
is integer-valued and depends only on the homotopy class of $f$; it is called the winding number of $f$. The winding number of the product is the sum of the two winding numbers,
\[
w(f g) = w(f) + w(g).
\]
Moreover, let $\iota : S^1 \to S^1$ be the involution $k \mapsto -k$. We then have
\[
w(f \circ \iota) = -w(f).
\]
\end{Proposition}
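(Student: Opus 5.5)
We need to prove Proposition WFSM: the winding number of a differentiable $f\colon S^1\to S^1$ is an integer, depends only on the homotopy class of $f$, is additive under products, and changes sign under the reflection $\iota$. The plan is to handle these four facts in order, all through a lifting argument.

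\textbf{Integrality via a lift.} Regard $f$ as a $2\pi$-periodic differentiable function $k\mapsto f_k$ on $\mathbb{R}$ with $|f_k|=1$. Define
\[ \theta(k) := -i\int_0^k \frac{\partial_s f_s}{f_s}\,ds + \theta_0, \qquad e^{i\theta_0}=f_0 . \]
Differentiating $f_k e^{-i\theta(k)}$ gives zero, so $f_k = e^{i\theta(k)}$ for all $k$. Moreover $|f|=1$ forces $\overline{f}\,\partial_k f \in i\mathbb{R}$, so $\theta$ is real-valued. Periodicity $f_{2\pi}=f_0$ then gives $\theta(2\pi)-\theta(0)\in 2\pi\mathbb{Z}$, and this difference is exactly $2\pi\, w(f)$. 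Hence $w(f)\in\mathbb{Z}$.

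\textbf{Homotopy invariance.} This is the only step with any real content. Given a homotopy $F\colon[0,1]\times S^1\to S^1$, we first reduce to the smooth case: approximate by a smooth homotopy, or restrict to differentiable homotopies, which suffices here because the paper works with differentiable maps. Then $t\mapsto w(F_t)$ is continuous in $t$, since the integrand depends continuously on $t$ and $|F_t|=1$ keeps the denominator away from zero. An integer-valued continuous function on $[0,1]$ is constant.

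Alternatively, one can avoid differentiating in $t$. If two maps satisfy $\|f-g\|_\infty<2$, then $g/f$ never equals $-1$, so it admits a continuous logarithm $\log(g/f)$ defined on the whole circle. Its winding is therefore zero, and additivity (next step) gives $w(g)=w(f)$. Uniform continuity of the homotopy lets us chain finitely many such steps from $F_0$ to $F_1$.

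\textbf{Additivity and reflection.} Additivity follows from the logarithmic-derivative identity
\[ \frac{\partial(fg)}{fg}=\frac{\partial f}{f}+\frac{\partial g}{g}, \]
integrated over $S^1$. For the reflection, substitute $k\mapsto -k$. The chain rule gives $\partial_k(f_{-k}) = -(\partial f)_{-k}$, while the orientation reversal of the integration interval, combined with $2\pi$-periodicity, restores the original domain. The net effect is a single sign, so $w(f\circ\iota)=-w(f)$.

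The main obstacle is the rigorous treatment of homotopy invariance for merely continuous homotopies. The plan handles it with the $\|f-g\|_\infty<2$ lemma, which needs only continuous logarithms and therefore no regularity in $t$.
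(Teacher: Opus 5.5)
Your argument is correct, and it is the standard lifting and logarithmic-derivative proof. The paper gives no proof of this proposition and only defers to \cite[Appendix A]{Monaco_2023} and references therein, so the only comparison available is with that standard argument, which yours matches.

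One small technical point concerns your alternative chaining argument for merely continuous homotopies. The intermediate loops $F_{t_j}$ need not be differentiable in $k$, so $w(F_{t_j})$ is not defined by the integral formula, and additivity in the integral form does not apply to them. There are two easy fixes:
\begin{itemize}
\item Define the winding number of a continuous loop through a continuous lift, $w(f)=(\theta(2\pi)-\theta(0))/2\pi$. By your first step this agrees with the integral for differentiable $f$, and both the $\|f-g\|_\infty<2$ lemma and additivity hold verbatim for lifts.
\item Alternatively, replace each intermediate $F_{t_j}$ by a smooth loop uniformly close to it, keeping consecutive sup-distances below $2$.
\end{itemize}
In the paper's own setting, where homotopy classes are smooth (compare Theorem~\ref{thm:winding_number}), your first argument already suffices: with $\partial_k F$ jointly continuous, $w(F_t)$ is continuous and integer-valued in $t$, hence constant.
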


\begin{Theorem}[Winding number of unitary-matrix-valued maps]\label{thm:winding_number}
Let $\pi_1(\mathrm{U}(N))$ denote the group of (smooth) homotopy classes of (smooth) maps $S^1 \to \mathrm{U}(N)$, where
\[
\mathrm{U}(N) := \{ U \in \mathrm{M}_N(\mathbb{C}) \mid U^* = U^{-1} \},
\]
endowed with point-wise multiplication of maps as the group operation. Then there is a group isomorphism $\pi_1(\mathrm{U}(N)) \simeq \mathbb{Z}$ given by
\[
\operatorname{wn}: [U : S^1 \to \mathrm{U}(N)] \mapsto \frac{1}{2\pi i} \int_{S^1} \frac{\partial_k \det (U_k)}{\det (U_k)}dk = \frac{1}{2\pi i} \int_{S^1}  \operatorname{tr}\left( U_k^* \, \partial_k U_k \right)dk.
\]
The integer $\operatorname{wn}([U])$ is called the winding number of the homotopy class $[U]$ of the map $U : S^1 \to \mathrm{U}(N)$.
\end{Theorem}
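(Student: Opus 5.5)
The statement to prove is the last one displayed: Theorem~\ref{thm:winding_number}, asserting that $\operatorname{wn}$ is a group isomorphism $\pi_1(\mathrm{U}(N))\simeq\mathbb{Z}$, together with the trace formula. The plan is to reduce everything to the scalar case of Proposition~\ref{WFSM} through the determinant, in four steps.

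\textbf{Step 1 (trace formula).} First I would establish the equality of the two integrands. Jacobi's formula gives $\partial_k\det(U_k)=\det(U_k)\operatorname{tr}(U_k^{-1}\partial_kU_k)$, and unitarity gives $U_k^{-1}=U_k^*$. Hence $\partial_k\det(U_k)/\det(U_k)=\operatorname{tr}(U_k^*\partial_kU_k)$ pointwise.

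\textbf{Step 2 (well-definedness and homomorphism).} Since $\det\colon\mathrm{U}(N)\to\mathrm{U}(1)$ is continuous, the map $\operatorname{wn}([U])=w(\det\circ U)$ is the scalar winding number of a loop in $S^1$. By Proposition~\ref{WFSM} it is an integer and depends only on the homotopy class, since a homotopy of $U$ induces a homotopy of $\det\circ U$. Multiplicativity of the determinant, $\det(U_kV_k)=\det(U_k)\det(V_k)$, together with $w(fg)=w(f)+w(g)$, shows that $\operatorname{wn}$ is a homomorphism for pointwise multiplication.

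\textbf{Step 3 (surjectivity).} The loop $U_k=\operatorname{diag}(e^{ink},1,\ldots,1)$ has $\operatorname{wn}=n$, so every integer is attained.

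\textbf{Step 4 (injectivity).} This is the main obstacle. Suppose $\operatorname{wn}([U])=0$. Write $U_k=D_k\,\widetilde U_k$ with $D_k=\operatorname{diag}(\det U_k,1,\ldots,1)$ and $\widetilde U_k:=D_k^{-1}U_k\in\mathrm{SU}(N)$. Because $w(\det U)=0$, the scalar loop $\det U_k$ has a continuous periodic logarithm, so it is null-homotopic in $S^1$; consequently $D$ is null-homotopic in $\mathrm{U}(N)$. It then suffices to show that $\mathrm{SU}(N)$ is simply connected. I would prove this by induction on $N$ using the fibration $\mathrm{SU}(N-1)\to\mathrm{SU}(N)\to S^{2N-1}$, $U\mapsto Ue_1$. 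The long exact sequence of homotopy groups, with $\pi_1(S^{2N-1})=\pi_2(S^{2N-1})=0$ for $N\ge2$ and $\mathrm{SU}(1)$ trivial as the base case, yields $\pi_1(\mathrm{SU}(N))=0$.

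If one wants to avoid fibrations, an alternative is a direct argument: put the loop in general position so that it misses the codimension-$3$ set of matrices with a repeated eigenvalue, then use the eigenvalue phases to contract it. I expect the fibration route to be the cleaner one to write. Finally, smooth versus continuous homotopy classes coincide by standard approximation, so working in the smooth category as in the statement loses nothing.
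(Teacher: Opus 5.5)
Your proof is correct and complete. Jacobi's formula with $U_k^{-1}=U_k^*$ gives the trace identity. Passing to the scalar loop $\det\circ U$ gives integrality, homotopy invariance, the homomorphism property and surjectivity. The factorization $U=D\widetilde U$, with $D$ null-homotopic and $\pi_1(\mathrm{SU}(N))=0$ from the fibration $\mathrm{SU}(N-1)\to\mathrm{SU}(N)\to S^{2N-1}$, settles injectivity.

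The paper does not prove this theorem itself; it only points to \cite[Appendix~A]{Monaco_2023} and references therein. Your write-up therefore supplies a self-contained version of the standard argument that the citation stands in for.
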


\section*{Conflict of Interest, Funding and Data Availability}
The Authors declares no conflict of interest. 

The Authors thank the anonymous referee for the thoughtful observations on the first submitted version of the present paper.\\
D. M. and G. P. gratefully acknowledge financial support from Ministero dell'Universit\`{a} e della Ricerca (MUR, Italian Ministry of University and Research) and Next Generation EU within PRIN 2022AKRC5P ``Interacting Quantum Systems: Topological Phenomena and Effective Theories'' and within PNRR-MUR Project no.~PE0000023-NQSTI, as well as Sapienza Universit\`{a} di Roma within Progetto di Ricerca di Ateneo 2023 and 2024. \\
F. M. thanks M.L.L. for her support and encouragement in undertaking the journey into mathematics as well.

No new data were created or analyzed in this study.

\bibliographystyle{plain} 
	
\bibliography{main}

\bigskip

\address{(F.~Manzoni)
Dipartimento di Fisica e Matematica, Universit\`{a} degli Studi Roma Tre \\
Via della Vasca Navale 84, 00146 Roma (Italy)\\
\email{federico.manzoni@uniroma3.it}
}

\address{(D.~Monaco)
Dipartimento di Matematica ``Guido Castelnuovo'', Sapienza Universit\`{a} di Roma\\
Piazzale Aldo Moro 5, 00185 Roma (Italy)\\
\email{domenico.monaco@uniroma1.it}
}

\address{(G.~Peluso)
Dipartimento di Matematica ``Guido Castelnuovo'', Sapienza Universit\`{a} di Roma\\
Piazzale Aldo Moro 5, 00185 Roma (Italy)\\
\email{gabriele.peluso@uniroma1.it}
}

\end{document}